\def\cR{{\mathcal R}}
\def\cS{{\mathcal S}}
\def\cC{{\mathcal C}}
\def\cN{{\mathcal N}}
\def\cM{{\mathcal M}}
\def\cA{{\mathcal A}}
\def\w{{\omega}}
\def\eref#1{(\ref{#1})}
\newcommand{\piun}{\ensuremath{{\sf \Pi}}}
\newcommand{\pideux}{\ensuremath{{\sf \Pi}^2}}
\newcommand{\pitrois}{\ensuremath{{\sf \Pi}^3}}
\newcommand{\setN}{\ensuremath{\mathbb{N}}}
\newcommand{\setZ}{\ensuremath{\mathbb{Z}}}
\newcommand{\Q}{\ensuremath{\mathbb{Q}}}
\numberwithin{equation}{section}
\newtheorem{theo}{Theorem}[section]
\newtheorem{prop}[theo]{Proposition}
\newtheorem{defi}[theo]{Definition}
\newtheorem{coro}[theo]{Corollary}
\newtheorem{lemm}[theo]{Lemma}
\title{Synthesis and Analysis of Product-form Petri Nets}
\author{Serge Haddad\\ \small{ENS Cachan, LSV},\\ \small{CNRS UMR 8643, INRIA,
Cachan, France}\\
\small{\tt haddad{@}lsv.ens-cachan.fr}\\
\and Jean Mairesse, Hoang-Thach Nguyen\\ \small{Universit\'e Paris 7, LIAFA,}\\
\small{CNRS UMR 7089, Paris,
France}\\
\small{\tt \{mairesse, ngthach\}{@}liafa.jussieu.fr}}
\date{}
\begin{document}
\maketitle

\begin{abstract}
For a large Markovian model, a ``product form'' is an
explicit description of the steady-state behaviour which is
otherwise generally untractable. Being first introduced in queueing networks,
it has been adapted to Markovian Petri nets. Here we address three relevant
issues for product-form Petri nets which were left fully or partially open: (1) we provide
a sound and complete set of rules for the synthesis; (2) we characterise
the exact complexity of classical problems like reachability; (3) we introduce a new subclass
for which the normalising constant (a crucial value for product-form expression) can be efficiently
computed.


\smallskip {\bf Keywords}: Petri nets, product-form, synthesis,
complexity analysis,  reachability, 
normalising constant
\end{abstract}


\section{Introduction}
\label{sec:introduction}

{\bf Product-form for stochastic models.}
Markovian models of discrete events systems are
powerful formalisms for modelling and evaluating the
performances of such systems. The main goal is the equilibrium 
performance analysis. It requires to
compute the stationary distribution of a continuous time Markov
process derived from the model.
Unfortunately 
the potentially huge (sometimes infinite) state space of the models often
prevents the modeller from computing explicitly this distribution. 
To cope with the issue, one can forget about exact solutions and settle
for approximations, bounds, or even simulations. The other possibility
is to focus on subclasses for which some kind of explicit description
is indeed possible. In this direction, the most efficient and
satisfactory approach may be the \emph{product-form}
method: for a model composed of modules, the stationary probability of
a  global state may be expressed as a product of  quantities depending
only on local states divided by a \emph{normalising constant}.

 Such a method is applicable when the interactions
between the modules are ``weak''. This is the case 
for queueing networks where the interactions between queues
are described by a random routing of clients. Various
classes of  queueing networks with product-form solutions
have been exhibited~\cite{jack63,bcmp,kell79}.
Moreover efficient algorithms have been designed for
the computation of the normalising constant~\cite{reis80}.

\noindent
{\bf Product-form Petri nets.}
Due to the explicit modelling of competition and synchronisation, the
Markovian Petri nets formalism~\cite{gspnbook}
is an attractive modelling
paradigm.
Similarly to queueing networks, product-form Markovian Petri Nets
were introduced to cope with the combinatorial explosion
of the state space. Historically, works started with purely
behavioural properties 
(i.e. by an analysis of the reachability graph) as in~\cite{lazar-robert87},
and then progressively moved to more and more structural
characterisations \cite{li_geor91,heluta-jams}.
Building on the work of~\cite{heluta-jams}, the authors of~\cite{Haddad05}
establish the first purely structural condition for which 
a product form exists and propose a polynomial time algorithm to check
for the condition, see also \cite{MairesseNguyen09} for an alternative
characterisation. These nets are called $\pideux$-nets.

Product-form Petri nets have been applied for the specification and analysis
of complex systems. From a modelling point of view, compositional
approaches have been proposed~\cite{Balsamo2011,BM2009} as well
as hierarchical ones~\cite{Harrison2011}. Application fields
have also been identified like (1) hardware design and more particularly
RAID storage~\cite{Harrison2011} and (2) software architectures~\cite{BalsamoMarin2011}.

\noindent
{\bf Open issues related to product-form Petri nets.}
\begin{itemize}[nolistsep]
	\item From a modelling point of view, it is more interesting to
          design specific types of
	Petri nets by modular constructions rather than checking a posteriori
	whether a net satisfies the specification. For instance,
	in~\cite{esparza94}, a sound and complete set of rules is proposed for
	the synthesis of live and bounded free-choice nets. Is it possible
        to get an analog for product-form Petri nets?
	\item From a qualitative analysis point of view, it is
          interesting to know the complexity 
	of classical problems (reachability,
        coverability, liveness, etc.) for a given subclass of Petri nets
        and to compare it with that of general Petri nets. For
        product-form Petri nets, partial results were presented
        in~\cite{Haddad05} but several questions were left open. For
        instance, the reachability problem is {\sf PSPACE}-complete for
        safe Petri nets 
	but in safe product-form Petri nets it is only proved to be
        {\sf NP}-hard in~\cite{Haddad05}.  

       \item From a quantitative analysis point of view, an important
         and difficult
         issue is the computation of the
         normalising constant. Indeed, in product-form Petri nets, 
         one can directly compute relative probabilities
         (e.g. available versus unavailable 
       service), but 
       determining absolute probabilities requires to compute the normalising
       constant (i.e. the sum over
       reachable states of the relative probabilities). 
       In models of queueing networks, this can be efficiently performed using
       dynamic programming. In Petri nets, it has been proved that the
       efficient computation 
       is possible when the linear invariants characterise
       the set of reachable markings~\cite{coheta}.  Unfortunately, all
       the known subclasses of product-form nets 
       that fulfill this characterisation are models of queueing
       networks! 
\end{itemize}

 \noindent
{\bf Our contribution.} Here we address the three above issues. In
Section \ref{section:synthesis}, we provide
a set of sound and complete rules for generating any \pideux-net.
We also 
use these rules for transforming a
general Petri net into a related
product-form Petri net. In Section \ref{section:qualitative}, we solve relevant
complexity issues. More precisely, we show that
the reachability and liveness problems are {\sf PSPACE}-complete for
safe product-form 
nets and that the coverability problem is  {\sf EXPSPACE}-complete for
general product-form 
nets. From these complexity results, we conjecture that the problem of computing
the normalising constant does not admit an efficient solution for the
general class of 
product-form Petri nets. However, in Section
\ref{section:normalization}, we introduce a large subclass of
product-form 
Petri nets, denoted $\pitrois$-nets, for which the normalising
constant can be efficiently 
computed. We emphasise that contrary to all subclasses related to
queueing networks,  
$\pitrois$-nets may admit \emph{spurious} markings (i.e. that fufill
the invariants 
while being unreachable).

The above results may change our perspective on product-form Petri
nets. It is proved in \cite{MairesseNguyen09} that the intersection of
free-choice and product-form Petri nets is the class of
Jackson networks~\cite{jack63}. This may suggest that the class of
product-form Petri nets is somehow included in the class of
product-form queueing networks. In the present paper, we
refute this belief in two ways. First by showing that some classical
problems are as complex for product-form Petri nets as for general
Petri nets whereas they become very simple for product-form queueing
networks. Second by exhibiting the large class of $\pitrois$-nets
which can model complex behaviours (e.g. illustrated by the presence
of spurious markings). 

\smallskip

A conference version of the paper appeared in \cite{HMNg11}.
The present version includes additional results (Subsection
\ref{subsec:liveboundedpi2}) together with full proofs of the
results. (There is one exception, Proposition \ref{prop:coverability}, for which the
proof can be found in the arXiv version of the 
paper available at~\url{http://arxiv.org/abs/1104.0291}) 

\smallskip

 \noindent
{\bf Notations.} We often denote a vector $u \in \mathbb R^S$ by $\sum_s u(s)
s$. The \emph{support} of vector $u$ is the subset $S'\equiv \{s \in S \mid u(s)\neq 0\}$.

\section{Petri nets, product-form nets, and \pideux-nets}
\label{section:definition}

\begin{defi}[Petri net]
  \label{def:petrinets}
  A {\em Petri net} is a 5-tuple $\cN= (P,T,W^-,W^+,m_0)$ where:
\begin{itemize}[nolistsep]
  \item $P$ is a finite set of {\em places};
  \item $T$ is a finite set of {\em transitions}, disjoint from $P$;
  \item $W^-$, resp. $W^+$, is a $P \times T$ matrix with coefficients
    in $\setN$; 
  \item $m_0 \in \setN^P$ is the {\em initial marking}.
\end{itemize}
\end{defi}

Below, we also call {\em Petri net} the unmarked quadruple
$(P,T,W^-,W^+)$. The presence or absence of a marking will depend on
the context.

A Petri net is represented in Figure \ref{fig:egPN}. The following
graphical conventions are used: places are represented 
by circles  and transitions by rectangles. There is an arc from $p\in
P$ to $t\in T$ (resp. from $t\in T$ to $p\in P$) if
$W^+(p,t)>0$ (resp. $W^-(p,t)>0$), and the weight
$W^+(p,t)$ (resp. $W^-(p,t)$) is written above the corresponding
arc except when it is equal to 1 in which case it is omitted. 
The initial marking is 
materialised: if $m_0(p) = k$, then $k$ tokens are drawn inside the
circle $p$.  Let $P' \subset P$ and $m$ be a marking then $m(P')$
is defined by $m(P')\equiv \sum_{p \in P'} m(p)$.

The matrix $W=W^+-W^-$ is the {\em incidence
matrix} of the Petri net. 
The {\em input bag} $^\bullet{t}$ (resp.
 {\em output bag} $t^\bullet$) of the transition $t$ is the column
 vector of $W^-$ (resp. $W^+$) 
indexed by $t$. For a place $p$, we define $^\bullet{p}$ and
$p^\bullet$ similarly. 
A {\em T-semi-flow} (resp. {\em S-semi-flow}) is a $\mathbb Q$-valued vector
$v$ such that $W.v = (0,\dots, 0)$ (resp. $v.W = (0,\dots , 0)$).

A {\em symmetric} Petri net is a Petri net such that: $\forall t \in
T, \ \exists t^- \in T, \quad 
^\bullet{t} = (t^-)^\bullet, t^\bullet = {}^\bullet{t^-}$. 
A {\em free-choice net} is a Petri net such that: $\forall t, t' \in T$, either
${}^{\bullet}t \cap {}^{\bullet}t' = \emptyset $, or ${}^{\bullet}t =
{}^{\bullet}t'$.
A {\em state machine} is a Petri net such that:  $\forall t \in
T, \ |^\bullet{t}| = |t^\bullet| =1$. 
A {\em marked graph} is a Petri net such that: $\forall p \in P, \ |^\bullet{p}|
= |p^\bullet| =1$. 

\begin{defi}[Firing rule]
A transition $t$ is {\em enabled} by the marking $m$ if $m \geq
{^\bullet t}$ (denoted by $m
\stackrel{t}{\longrightarrow}$); an
enabled transition $t$ may {\em fire} which transforms the marking $m$
into $m - {^\bullet t} + t^\bullet$, denoted by
$m
\stackrel{t}{\longrightarrow} m' = m 
- {^\bullet t} + t^\bullet$.
\end{defi}


\begin{figure}[htbp]
\centering
\begin{mfpic}{-140}{140}{-50}{50}
\tlabelsep{3pt}

\circle{(-70, 40), 6}
\circle{(-70, -40), 6}
\tlabel[bc](-70, 46){$p_1$}
\tlabel[bc](-70, -34){$p_2$}

\shiftpath{(-130, 0)}\rect{(-7.5, -1.5), (7.5, 1.5)}
\shiftpath{(-90, 0)}\rect{(-7.5, -1.5), (7.5, 1.5)}
\shiftpath{(-50, 0)}\rect{(-7.5, -1.5), (7.5, 1.5)}
\shiftpath{(-10, 0)}\rect{(-7.5, -1.5), (7.5, 1.5)}
\tlabel[br](-137, 2){$t_1$}
\tlabel[br](-97, 2){$t_2$}
\tlabel[bl](-43, 2){$t_3$}
\tlabel[bl](-3, 2){$t_4$}

\arrow\polyline{(-74, 36), (-90, 2)}
\arrow\polyline{(-90, -2), (-74, -36)}
\arrow\polyline{(-66, -36), (-50, -2)}
\arrow\polyline{(-50, 2), (-66, 36)}

\arrow\arc[s]{(-77, 40), (-130, 2), 90}
\arrow\arc[s]{(-130, -2), (-77, -40), 90}
\arrow\arc[s]{(-63, -40), (-10, -2), 90}
\arrow\arc[s]{(-10, 2), (-63, 40), 90}
\tlabel[br](-100, 38){$2$}
\tlabel[tr](-100, -38){$2$}
\tlabel[tl](-40, -38){$2$}
\tlabel[bl](-40, 38){$2$}

\point[3pt]{(-72, 40), (-68, 40)}

\tlabel[cl](20, -20){$
W = \left(
\begin{array}{r r r r}
  -2 & -1 & 1 & 2 \\
  2 & 1 & -1 & -2
\end{array}
\right)\,.
$\\
$m_0 = (2, 0)\,.$
}

\end{mfpic}
\caption{Petri net.}
\label{fig:egPN}
\end{figure}


A marking $m'$ is {\em reachable} from the marking
$m$ if there exists a {\em firing sequence} $\sigma = t_1\dots t_k$ ($k \geq
0$) and a sequence of markings 
$m_1,\dots, m_{k-1}$ such that $m \xrightarrow{t_1} m_1
\xrightarrow{t_2} \cdots \xrightarrow{t_{k-1}} m_{k-1} \xrightarrow{t_k}
m'$. We write in a condensed way: $m \xrightarrow{\sigma} m'$.

We denote by $\mathcal{R}(m)$ the set of markings which are reachable from
the marking
$m$. The {\em reachability graph} of a Petri net with initial marking
$m_0$ is the 
directed graph with nodes $\mathcal{R}(m_0)$ and arcs $\{(m, m') |
\exists t \in T: m 
\xrightarrow{t} m'\}$.

Given  $(\cN,m_0)$ and $m_1$, the {\em reachability problem} is to decide if $m_1\in
\mathcal{R}(m_0)$, and the {\em coverability
  problem} is to decide if $\exists m_2\in
\mathcal{R}(m_0), m_2\geq m_1$.

A Petri net $(\cN,m_0)$  is {\em live} if every transition can always be
enabled again, that is: $\forall m \in \mathcal R(m_0),\forall t \in T, \ \exists m' \in \mathcal
R(m) , \ m' \xrightarrow{t}$.
A Petri net $(\cN,m_0)$  is bounded if $\cR(m_0)$ is finite. It is {\em
  safe} or {\em 1-bounded} if: $\forall m \in \mathcal R(m_0),~\forall p \in P, 
  \ m(p) \leq 1$.  


\subsection{Product-form Petri nets}
\label{subsec:spn}

There exist several ways to define timed models of Petri
nets, see \cite{BCOQ}. We consider the model of
Markovian Petri nets with {\em race policy}. 
Roughly, with each enabled transition is associated a ``countdown
clock'' whose positive initial value is set at random according to an
exponential distribution whose rate depends on the transition. The
first transition to reach 0 fires, which may enable new transitions
and start new clocks. We adopt here the \emph{single-server policy}
which means that the rate of a transition does not depend on 
the enabling degree of the transition. In the more general definition
of product-form Petri nets~\cite[Definition~8]{Haddad05}, rates may depend on the current marking
in a restricted way. For the sake of readability, we have chosen a simpler
version. Results of sections~\ref{section:synthesis} and~\ref{section:qualitative}
still hold with the general definition. On the other hand, it is well-known that the
complexity of the computation of the normalisation constant highly increases
even for the simple case of queuing networks. Here also the results of section~\ref{section:normalization}
are only valid with constant rates.   

\begin{defi}[Markovian PN]\label{def:SPN}
A {\em Markovian Petri net (with race policy)} is a Petri
net equipped with a set
of {\em rates} $(\mu_t)_{t\in T}$, $\mu_t \in \mathbb R_+^*$.
The firing time of an enabled transition $t$ is exponentially distributed
with parameter $\mu_t$.
The marking
evolves as a continuous-time jump Markov process with state space
$\mathcal R(m_0)$ and infinitesimal generator $Q=(q_{m,m'})_{m,m' \in \mathcal
R(m_0)}$, given by:
\begin{equation}\label{eq:Q}
\forall m, \ \forall m' \neq m, \ q_{m, m'} = \sum_{t \mbox { \scriptsize{such that} } m
  \xrightarrow{t} m'} 
\mu_t, \qquad \forall m, \ q_{m,m} = -\sum_{m'\neq m} q_{m,m'}\,. 
\end{equation}
\end{defi}


W.l.o.g., we
assume that there is no transition $t$ such that ${^\bullet t}=
t^\bullet$. Indeed, the firing of such a transition does not modifiy
the marking, so its removal does not modify the infinitesimal
generator. We also assume
that $({^\bullet t_1}, t_1^\bullet) \neq ({^\bullet t_2},
t_2^\bullet)$ for all transitions $t_1\neq t_2$. Indeed, if it is not the
case, the two transitions may be replaced by a single one with the
summed rate.

An {\em invariant measure} is a non-trivial solution $\nu$ to the {\em balance
equations}: $\nu Q = (0,\dots , 0)$. A {\em stationary distribution} $\pi$ is an invariant
probability measure: $\pi Q = (0, \dots , 0)$, $\sum_m \pi(m) = 1$.

\begin{defi}[Product-form PN]\label{de-pfpn}
A Petri net is a {\em product-form Petri net} if for all rates
$(\mu_t)_{t\in T}$, the corresponding Markovian
Petri net admits an
invariant measure $\nu$ satisfying:
\begin{equation}\label{eq-pf}
\exists (u_p)_{p\in P}, u_p \in \mathbb R_+, \quad \forall m \in \cR(m_0),
\qquad \nu(m) = 
\prod_{p\in P} u_p^{m_p} \:. 
\end{equation}
\end{defi}

The existence of $\nu$ satisfying \eref{eq-pf} implies that the marking
process is irreducible (in other words, the reachability graph is
strongly connected). In \eref{eq-pf}, the mass of the measure, i.e. $\nu(\cR(m_0))=\sum_m
\nu(m)$, may be either finite or infinite. 
For a bounded Petri net, the mass is always finite. But for an
unbounded Petri net, the typical situation will be as follows: structural
conditions on the Petri net will ensure that the Petri net is a
product-form one. Then, for some values of the rates, $\nu$ will have an infinite mass,
and, for others, $\nu$ will have a finite mass. In the first situation, the
marking process will be either transient or recurrent null (unstable
case). In the second situation, the marking process will be positive
recurrent (stable or ergodic case). 

When the mass is finite, we call $\nu(\cR(m_0))$ the {\em normalising
  constant}. The probability measure $\pi(\cdot) =
\nu(\cR(m_0))^{-1}\nu(\cdot)$ is the unique stationary measure of the marking
process. Computing explicitly the normalising
constant is an important issue, see Section \ref{section:normalization}. 

\medskip

The goal is now to get sufficient conditions for a Petri net to be of
product-form. To that purpose, we
introduce three notions: {\em weak reversibility}, {\em
  deficiency}, and {\em witnesses}. 

\medskip

Let $(N, m_0)$ be a Petri net.
  The set of {\em complexes} is defined by $\cC=\{{}^\bullet t \mid t \in T\} 
  \cup \{t^\bullet \mid t \in T\}$.
  The {\em reaction graph} is the directed graph whose set of nodes is
  $\cC$ and whose set of arcs is $\{({^\bullet}t, t^\bullet) | t \in
  T\}$.  It can be viewed as a state machine.

\begin{defi}[Weak reversibility: $\piun$-nets]
  A Petri net is {\em weakly reversible (WR)} if every connected
  component of its 
  reaction graph is strongly connected.
  Weakly reversible Petri nets are also
 called {\em $\piun$-nets}.\label{def-wr}
\end{defi}

The notion and the name ``WR'' come from the chemical literature. In the
Petri net context, it was introduced in \cite[Assumption
3.2]{bouchserercim} under a different name and with a slightly different but equivalent
formulation. WR is a strong
constraint. It should not be confused with the classical notion of 
``reversibility'' (the marking graph is strongly connected). In
particular WR, a structural property,  implies reversibility, a behavioural one! 
Observe that all symmetric Petri nets are WR.  

\smallskip

The notion of deficiency is due to Feinberg~\cite{Feinberg79}.





\begin{defi}[Deficiency]
Consider a Petri net with incidence matrix $W$ and set of complexes
$\cC$. Let $\ell$ be
the number of connected components of the reaction graph. 
  The {\em deficiency} of the Petri net is defined by: $
  |\cC| -  \ell - \mbox{rank}(W)$. 
\end{defi}

The notion of witnesses appears in \cite{Haddad05}.

\begin{defi}[Witness]
  Let $c$ be a complex. A {\em witness} of $c$ is a vector $wit(c)\in
  \mathbb Q^P$ such that for all transition $t$:
\[
\begin{cases}
wit(c)\cdot W(t) =-1 & \mbox{if } {}^\bullet t=c \\
wit(c)\cdot W(t) =1 & \mbox{if } t^\bullet=c \\
wit(c)\cdot W(t) =0 & \mbox{otherwise}\:,
\end{cases}
\]
where $W(t)$ denotes the column vector of $W$ indexed by $t$.
\end{defi}

 \noindent{\bf Examples.} Consider the Petri net of Figure
 \ref{fig:egPN}. First, it is WR. Indeed, 
 the set of complexes is $\cC = \{p_1, p_2, 2p_1, 2p_2\}$ and the reaction
 graph is: 
 \[
 p_1 \leftrightarrow p_2\,,\ 2p_1 \leftrightarrow
 2p_2\,,
 \]
 with two connected components which are strongly
 connected. Second, the deficiency is 1 since $|\cC|=4$, $\ell =2$,
 and $\mbox{rank}(W)=1$. Last, one can check that none of the
 complexes admit a witness. 

 The Petri net of Figure~\ref{fig:nodota} is WR and has deficiency 0. 
 Note that the witnesses may not be unique.
 Possible witnesses are: $wit(2p_1 + q_1) = q_1$, $wit(p_1 + q_2) = q_2$, $wit(p_2 + q_3)
 = q_3$, $wit(2p_2 + q_4) = q_4$. 
 Another possible set of witnesses is $\{q_1, q_2, -q_2, -q_1\}$.


\begin{prop}[deficiency 0 $\iff$ witnesses, in \protect{\cite[Prop.
  3.9]{MairesseNguyen09}}]\label{pr-equiv}
A Petri net admits a witness for each complex iff it has deficiency 0. 
\end{prop}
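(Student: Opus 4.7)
My plan is to translate both the witness condition and the deficiency formula into statements about two subspaces of $\Q^{\cC}$, and then observe that they are linear-algebraic duals of each other. First I would factor the incidence matrix as $W = Y\cdot D$, where $Y$ is the $P \times \cC$ matrix whose column indexed by a complex $c$ is the vector $c \in \setN^{P}$ (so $Y e_c = c$), and $D$ is the $\cC \times T$ incidence matrix of the reaction graph, defined by $De_t = e_{t^{\bullet}} - e_{{}^{\bullet}t}$. The standard incidence-matrix theorem for directed graphs gives $\mbox{rank}(D) = |\cC| - \ell$ and identifies the left kernel
\[
K := \{u \in \Q^{\cC} \mid u^{T} D = 0\}
\]
with the space of vectors that are constant on each connected component of the reaction graph; in particular $\dim K = \ell$ and $\mbox{Image}(D) = K^{\perp}$ inside $\Q^{\cC}$.

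Next I would rewrite each of the two conditions in terms of these subspaces. Since $W = YD$, we have $\mbox{rank}(W) = \dim Y(\mbox{Image}(D)) = \dim Y(K^{\perp})$, so deficiency $0$, i.e. $\mbox{rank}(W) = |\cC| - \ell = \dim K^{\perp}$, is equivalent to $Y$ being injective on $K^{\perp}$, that is $\mbox{Ker}(Y) \cap K^{\perp} = \{0\}$. By the standard duality $(A \cap B)^{\perp} = A^{\perp} + B^{\perp}$ in a finite-dimensional inner product space, this is in turn equivalent to $\mbox{Image}(Y^{T}) + K = \Q^{\cC}$. On the other hand, the witness equations $wit(c)\cdot W(t) = e_c^{T} D e_{t}$, required for every $t\in T$, rewrite as $(Y^{T}\, wit(c) - e_c)^{T} D = 0$, i.e. $Y^{T}\, wit(c) - e_c \in K$. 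Hence $c$ admits a witness iff $e_c \in \mbox{Image}(Y^{T}) + K$, and every complex admits a witness iff $\mbox{Image}(Y^{T}) + K = \Q^{\cC}$.

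The equivalence then follows by directly comparing the two reformulations. The only non-trivial ingredient is the identification of $\mbox{Image}(D)$ with the orthogonal complement of the space of componentwise-constant functions (giving the equality $\dim K = \ell$ and the rank formula for $D$); once this standard fact about digraph incidence matrices is in place, the rest is a short finite-dimensional linear-algebra computation.
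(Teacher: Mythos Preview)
Your argument is correct. The factorisation $W = YD$ (with $D$ the $\cC\times T$ incidence matrix of the reaction graph and $Y$ the $P\times\cC$ matrix of complex vectors), the identification of the left kernel $K$ of $D$ with componentwise-constant vectors (so $\dim K=\ell$ and $\mathrm{Image}(D)=K^{\perp}$), and the two reformulations
\[
\text{deficiency }0 \iff \mathrm{Ker}(Y)\cap K^{\perp}=\{0\} \iff \mathrm{Image}(Y^{T})+K=\Q^{\cC}
\]
and
\[
\text{every complex has a witness} \iff \forall c,\ e_c\in \mathrm{Image}(Y^{T})+K \iff \mathrm{Image}(Y^{T})+K=\Q^{\cC}
\]
are all valid, and the duality $(A\cap B)^{\perp}=A^{\perp}+B^{\perp}$ over $\Q$ in finite dimension is the only nontrivial ingredient beyond the rank formula for graph incidence matrices.

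Regarding comparison: the present paper does not supply its own proof of this proposition; it simply cites \cite[Prop.~3.9]{MairesseNguyen09}. What can be said is that the way the paper \emph{uses} the proposition (in the proofs of Propositions~\ref{WRmarkedgraph}, \ref{pr-lb}, \ref{def0hmwr}, \ref{prop-regul}, and \ref{prop:witnessmatrix}) is always in the form ``deficiency~$0$ $\Longleftrightarrow$ there exists a $\cC\times P$ matrix $B$ with $BW=A$'', where $A$ is the incidence matrix of the reaction graph. This is exactly your witness reformulation: stacking the row vectors $wit(c)^{T}$ into $B$, the equations $wit(c)^{T}W = e_c^{T}D$ for all $c$ read $BW = D$, and the paper's $A$ is your $D$. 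So your approach is fully aligned with the paper's usage, even though the paper defers the proof itself to the cited reference.
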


Next Theorem is a combination of Feinberg's Deficiency
zero Theorem~\cite{Feinberg79} and Kelly's
Theorem~\cite[Theorem 8.1]{kell79}. (It is proved under this form in
\cite[Theorem 3.8]{MairesseNguyen09}.) 

\begin{theo}[WR + deficiency 0 $\implies$ product-form]
\label{thm:defzero} 
Consider a Markovian Petri net with rates $(\mu_t)_{t\in
    T}$, $\mu_t >0$, and assume that the underlying Petri net is WR and has
deficiency 0. 
Then there exists $(u_p)_{p\in P}$, $u_p>0$, satisfying the equations:
  \begin{equation}
\forall c \in \cC, \qquad     \prod_{p: c_p \neq 0} u_p^{c_p} \sum_{t: ^\bullet{t} = c}\mu_t  = \sum_{t:
    t^\bullet = c}\mu_{t} \prod_{p: ^\bullet{t}_p \neq
      0}u_p^{^\bullet{t}_p}\,. \label{eq:NLTE} 
  \end{equation}
  The marking process has an invariant measure $\nu$ such that:
  \[\forall m,
  \ \nu(m)= \Phi(m)^-1~\prod_{p \in P} u_p^{m_p}\,.\]
\end{theo}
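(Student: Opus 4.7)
The plan is to split the proof into three steps: first solve the algebraic system \eref{eq:NLTE} as a balance equation on the reaction graph (using WR), then lift the solution from the complexes to the places (using deficiency~0), and finally check that the product-form ansatz satisfies the global balance of the marking process. For the first step, substitute $x_c := \prod_p u_p^{c_p}$; then \eref{eq:NLTE} becomes
\[
x_c \sum_{t:\, {}^\bullet t = c} \mu_t \;=\; \sum_{t:\, t^\bullet = c} \mu_t\, x_{{}^\bullet t}, \qquad c \in \cC,
\]
which is exactly the balance equation of the continuous-time Markov chain on the reaction graph with rate $\mu_t$ on the arc $({}^\bullet t, t^\bullet)$. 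Since WR makes every connected component strongly connected, classical Markov-chain theory gives a strictly positive solution $(x_c)_{c \in \cC}$, unique up to one positive scalar per connected component.

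The crux is the lift $x \mapsto u$. In logarithmic form, we seek $\log u \in \mathbb{R}^P$ with $Y^{\mathsf T} \log u = \log x$, where $Y$ is the $P \times \cC$ matrix whose column indexed by $c$ equals the complex $c$ itself; solvability is equivalent to $\log x \in (\ker Y)^\perp$. Write the Petri-net incidence as $W = Y B$, with $B$ the incidence matrix of the reaction graph: then $\mathrm{rank}(B) = |\cC|-\ell$ and the deficiency equals $\dim(\ker Y \cap \mathrm{im}(B))$, so the hypothesis (equivalently, the existence of witnesses, by Proposition~\ref{pr-equiv}) is $\ker Y \cap \mathrm{im}(B) = \{0\}$. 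The $\ell$ scaling degrees of freedom in $x$ correspond to adding to $\log x$ arbitrary elements of a subspace complementary to $\mathrm{im}(B)$, and the deficiency-$0$ identity is exactly what is needed to realign $\log x$ inside $(\ker Y)^\perp$, producing the required positive $u$. This step is the main obstacle; it is the essential content of Feinberg's deficiency zero theorem and is where all of the structural hypotheses are actually used.

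For the last step, set $\nu(m) := \prod_p u_p^{m_p}$ and write $u^c := \prod_p u_p^{c_p}$. The outflow from $m$ is $\nu(m) \sum_{t:\, m \geq {}^\bullet t} \mu_t$. A predecessor $m' = m + {}^\bullet t - t^\bullet$ lies in $\setN^P$ iff $m \geq t^\bullet$, and $\nu(m') = \nu(m)\, u^{{}^\bullet t}/u^{t^\bullet}$, so the inflow into $m$ is $\nu(m) \sum_{t:\, m \geq t^\bullet} \mu_t\, u^{{}^\bullet t}/u^{t^\bullet}$. Partition the outflow by the input complex $c = {}^\bullet t$ and the inflow by the output complex $c = t^\bullet$: both restrict to the set of transitions whose associated complex $c$ satisfies $m \geq c$, and for each such $c$ the two per-complex contributions coincide by \eref{eq:NLTE}. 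Hence $\nu Q = 0$, giving the claimed product-form invariant measure (the factor $\Phi(m)^{-1}$ in the statement being the trivial one under the single-server policy adopted here). Steps 1 and 3 are routine Markov-chain bookkeeping; all the depth of the theorem is concentrated in Step 2.
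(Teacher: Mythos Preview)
The paper does not give its own proof of this theorem; it states it as a combination of Feinberg's deficiency zero theorem and Kelly's theorem, and refers to \cite{Feinberg79,kell79} and \cite[Theorem~3.8]{MairesseNguyen09} for a proof in this form. Your three-step outline is precisely the skeleton of that standard argument, so there is nothing to compare against in the paper itself.

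On the substance of your sketch: Steps~1 and~3 are correct and complete as written. Step~2 is also correct, but the sentence ``the deficiency-$0$ identity is exactly what is needed to realign $\log x$ inside $(\ker Y)^\perp$'' hides the one linear-algebra fact that makes the argument work and deserves to be spelled out. Namely, the scaling freedom in $x$ is not an arbitrary complement of $\mathrm{im}(B)$ but specifically $\ker B^{\mathsf T}$ (vectors constant on each connected component of the reaction graph); and the claim that any $\log x$ can be shifted by an element of $\ker B^{\mathsf T}$ into $\mathrm{im}(Y^{\mathsf T}) = (\ker Y)^\perp$ is exactly the statement $\mathrm{im}(Y^{\mathsf T}) + \ker B^{\mathsf T} = \mathbb{R}^{\cC}$. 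Taking orthogonal complements, this is equivalent to $\ker Y \cap \mathrm{im}(B) = \{0\}$, which by your own computation is the deficiency-zero hypothesis. With that identity made explicit, your Step~2 is a self-contained proof of the lift, not merely a pointer to Feinberg.
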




Checking the WR, computing the deficiency, determining the witnesses,
and solving the equations \eref{eq:NLTE}, all of these operations can
be performed in polynomial-time, see \cite{Haddad05,MairesseNguyen09}.

Summing up the above, it seems worth to isolate and christen the
class of nets which are WR and have deficiency 0. We adopt the
terminology of \cite{Haddad05}. 

\begin{defi}[$\pideux$-net]
A {\em $\pideux$-net} is a Petri net
which is WR and has deficiency 0. 
\end{defi}

\subsection{Some properties of WR and deficiency zero nets}\label{subsec:liveboundedpi2}

Let $\cN = (P, T, W^-, W^+)$ be a
Petri net. Let $W=W^+-W^-$ be 
the incidence matrix of $\cN$ and let $A$ be the incidence matrix of
the reaction graph.

\medskip

Consider at first free-choice nets. It was shown in \cite[Section 4.3]{MairesseNguyen09} that 
for free-choice nets, WR implies deficiency zero. The converse does not hold for
general free-choice nets. For instance, state machines always have deficiency zero~\cite[Prop.
3.2]{MairesseNguyen09}, and may not be WR. For marked graphs, however,
the converse is true, and stated below. 

\begin{prop}\label{WRmarkedgraph}
  The deficiency of a connected marked graph is either 0 or 1. A marked graph has
  deficiency zero if and only if it is WR.
\end{prop}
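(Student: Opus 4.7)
The plan is to exploit the factorization $W=BA$, where $A$ is the incidence matrix of the reaction graph and $B$ is the $P\times\cC$ matrix whose columns encode the complexes as vectors in $\setN^P$. Since $\mathrm{rank}(A)=|\cC|-\ell$, rank--nullity yields
\[
\text{deficiency} = |\cC|-\ell-\mathrm{rank}(W) = \dim\ker(W)-\dim\ker(A),
\]
and $W=BA$ forces $\ker(A)\subseteq\ker(W)$ automatically.

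The crucial step is to show that $\dim\ker(W)=1$ for a connected marked graph (with the standard unit-weight convention). Indeed, a marked graph is encoded by a directed graph $G$ whose nodes are the transitions and whose edges are the places; the equation $Wv=0$ then reads $v({}^\bullet p)=v(p^\bullet)$ for every place $p$, and connectedness of $G$ forces $v$ to be constant, so $\ker(W)=\mathbb{R}\cdot\mathbf{1}_T$. Since $\ker(A)$ is a subspace of this $1$-dimensional space, $\dim\ker(A)\in\{0,1\}$ and the deficiency lies in $\{0,1\}$, proving the first assertion.

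For the equivalence, deficiency zero is equivalent to $\ker(A)=\ker(W)$, i.e. to $\mathbf{1}_T\in\ker(A)$. A direct computation of $A\mathbf{1}_T=\sum_{t\in T}(e_{t^\bullet}-e_{{}^\bullet t})$ shows that its coordinate at a complex $c$ equals the in-degree of $c$ in the reaction graph $R$ minus its out-degree, so deficiency zero is equivalent to every complex having equal in- and out-degree in $R$. By the classical Eulerian criterion, this forces every weakly connected component of $R$ to be strongly connected, which is exactly WR. Conversely, every marked graph is free-choice (each place has a unique consumer transition), so WR implies deficiency zero by the free-choice result cited just before the proposition. The main potential obstacle is really just the cleanness of the kernel computation for $W$; the rest is structural bookkeeping.
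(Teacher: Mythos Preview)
Your proof is correct and follows essentially the same strategy as the paper's: both establish that the T-semi-flow space of a connected marked graph is one-dimensional (spanned by $\mathbf{1}_T$), deduce the deficiency bound from this, and then reduce ``deficiency zero $\Rightarrow$ WR'' to the balance condition $A\mathbf{1}_T=0$ on the reaction graph; the converse is handled via the free-choice implication in both cases.

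The only notable packaging difference is that you reach $A\mathbf{1}_T=0$ through the always-valid factorisation $W=BA$ and the kernel identity $\mathrm{deficiency}=\dim\ker W-\dim\ker A$, whereas the paper invokes Proposition~\ref{pr-equiv} to obtain a reverse factorisation $A=B'W$ (via witnesses) and then computes $A\mathbf{1}_T=B'W\mathbf{1}_T=0$ directly. Your route thus avoids citing the witness characterisation. Similarly, you conclude strong connectivity from the in-degree/out-degree balance via the Eulerian criterion, while the paper spells out the equivalent cut argument explicitly. These are cosmetic differences rather than a genuinely different approach.
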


\begin{proof}
  Let $\cN$ be a marked graph. According to \cite[Prop. 3.16]{DeEs95}, the only
  T-semi-flows of $\cN$ are $a(1,\cdots,1)$, $a \in \Q$, hence
  $\mbox{rank}(W) = |T| - 1$. Since $A$ is a $\cC \times T$ matrix,
  $\mbox{rank}(A) \le |T|$. Hence $\delta = \mbox{rank}(A) - \mbox{rank}(W) \le
  1$.

  \medskip

  \noindent
  The ``if'' direction of the second claim is trivial since a marked graph is a
  free-choice net. Consider the ``only if'' direction. Let $\cN$ be a deficiency
  zero marked graph. Let ${\bf 1}$ be the column vector $(1,\dots ,1)$
  of size $T$. Since $\cN$ is a marked graph, we have $W\cdot {\bf 1}
  =(0,\dots , 0)$. 
  By Proposition \ref{pr-equiv}, $A = BW$ for
  some $\Q$-valued matrix $B$. So we have $A\cdot {\bf 1} = BW\cdot
  {\bf 1}= (0,\dots, 0)$. This implies that the connected components of the
  reaction graph must be strongly connected. 
  Indeed pick a connected component which is not strongly connected. It admits
  a partition of its complexes into two subsets $C_1$ and $C_2$
  such that there at least one transition $t$ from $C_1$ to $C_2$ and no transition
  from $C_2$ to $C_1$. Then vector $x$ defined by $x(c)=0$ for $c \in C_1$
  and $x(c)=1$ for $c \in C_2$ fulfills $x.A\geq 0$ and $x.A(t)> 0$. Thus $x.A.{\bf 1}>0$
  yields a contradiction. So $\cN$ is WR.
\end{proof}


\begin{prop}\label{pr-lb}
  For a live and bounded Petri net, deficiency zero implies weak
  reversibility.
\end{prop}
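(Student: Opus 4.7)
The plan is to follow the template of Proposition \ref{WRmarkedgraph}. There, the strictly positive T-semi-flow $(1,\ldots,1)$ was automatic because the net was a marked graph, so only the deficiency-zero step had to be combined with it. For an arbitrary live and bounded Petri net, the matching ingredient must come from the dynamics, and this is where liveness and boundedness will enter.

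Step 1. I would first show that $W$ admits a strictly positive T-semi-flow, i.e.\ a vector $v \in \Q^T$ with $v(t) > 0$ for every $t \in T$ and $Wv = 0$. This is a classical property of live and bounded nets: boundedness makes $\cR(m_0)$ finite, so the reachability graph has a terminal strongly connected component $S$, and liveness forces every transition $t$ to eventually be enabled from any $m \in S$. Since $S$ is terminal and strongly connected, one can produce, for each $t \in T$, a firing sequence $m \xrightarrow{\sigma_t} m$ that fires $t$ at least once. Concatenating these sequences over all $t$ yields $m \xrightarrow{\sigma} m$ whose Parikh vector $v$ is strictly positive and satisfies $Wv = 0$.

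Step 2. By Proposition \ref{pr-equiv}, deficiency zero provides a rational matrix $B$ (whose rows are the witnesses) such that $A = BW$, where $A$ is the incidence matrix of the reaction graph. Hence $Av = B(Wv) = 0$.

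Step 3. I would reuse verbatim the final argument of Proposition \ref{WRmarkedgraph}. If some connected component of the reaction graph fails to be strongly connected, partition its complexes into $C_1, C_2$ with at least one arc from $C_1$ to $C_2$ and none from $C_2$ to $C_1$, and let $x \in \Q^{\cC}$ be the indicator of $C_2$. Then $x \cdot A \geq 0$ componentwise with $(x \cdot A)(t) = 1$ at every crossing transition $t$. Combined with $v > 0$, this forces $x \cdot A \cdot v > 0$, contradicting $Av = 0$. So every connected component is strongly connected, i.e.\ $\cN$ is weakly reversible.

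The only substantive step is the first one, and it is where the hypotheses of liveness and boundedness are actually used; the remainder is a purely linear-algebraic transcription of the marked-graph case. The existence of a strictly positive T-semi-flow for live and bounded Petri nets is standard folklore in Petri net theory and could either be argued as above or simply cited.
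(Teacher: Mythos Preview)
Your proposal is correct and follows essentially the same route as the paper: both obtain a strictly positive T-semi-flow $v$ from liveness and boundedness, use deficiency zero via Proposition~\ref{pr-equiv} to write $A = BW$ and hence $Av = 0$, and then derive a contradiction from the reaction graph not being strongly connected. The only cosmetic differences are that the paper simply cites \cite[Theorem 2.38]{DeEs95} for the strictly positive T-semi-flow where you sketch the terminal-SCC argument, and the paper phrases the final contradiction using a terminal strongly connected component of the reaction graph rather than the $C_1/C_2$ partition---but these are two wordings of the same cut argument.
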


\begin{proof}
  Let $m_o$ be a marking such that $(\cN,m_0)$ is live and bounded. We
  assume that $\cN$ has deficiency 0 but is not WR. 
  Then there exists a terminal
  strongly connected component $C$ of the reaction graph and a transition $t_0$
  such that $t_0^{\bullet} \in C$ and ${}^{\bullet}t_0 \notin C$.\\
  We claim that for every vector $v \in \Q^{T}$ such that for all $t \in T$,
  $v(t)\geq 0$ and $v(t_0)>0$, we have $Av \neq (0,\dots, 0)$. Indeed,
  \begin{eqnarray*}
	\sum_{c \in C}(Av)(c) &=& \sum_{c \in C} \left( \sum_{t \in T} v(t)
	\bigl({\bf 1}_{t^{\bullet} = c} - {\bf 1}_{{}^{\bullet}t = c}  \bigr)  \right)\\
	&=& \sum_{t \in T -\{t_0\}} v(t) \left( \sum_{c \in C} \bigl(  {\bf
	1}_{t^{\bullet} = c} - {\bf 1}_{{}^{\bullet}t = c} \bigr)  \right) +
	v(t_0)\:.
  \end{eqnarray*}
  Since $C$ is a terminal strongly connected component, $\sum_{c\in C}
  {\bf 1}_{t^{\bullet} = c} -
  {\bf 1}_{{}^{\bullet}t = c}$ is either $0$ or $1$ for all $t \in T$. 
  Hence $\sum_{c \in C}(Av)(c) \ge v(t_0) > 0$. The claim is proved.

  \medskip

  \noindent
  Since $(\cN, m_0)$ is live and bounded, there exists a strictly positive
  T-semi-flow $v \in \Q^T$~\cite[Theorem 2.38]{DeEs95}, that is: 
  $\forall t,  \ v(t)>0, \ W\cdot v= (0,\dots , 0)$. Now recall that the
  deficiency of $\cN$ is 0. According to  Proposition~\ref{pr-equiv},
  there exists a $\cC \times P$ matrix $B$ such
  that $A = BW$. We get $Av = BWv = (0,\dots , 0)$. This
  contradicts the above claim.
\end{proof}

A {\em home marking} is a marking which is reachable from every reachable
marking. Having a home marking is an important property for Markovian
Petri nets.  Indeed, a Petri net has a home
marking iff its reachability graph has only one terminal strongly
connected component. And this last condition is required for the
marking process to be ergodic.

\begin{prop}\label{def0hmwr}
Let $\cN$ be a deficiency zero Petri net. Then $\cN$ is WR iff 
there exists a marking $m_0$ such that $(\cN,m_0)$ is live and 
$m_0$ is a home marking.
\end{prop}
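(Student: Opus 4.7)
The plan is to prove both implications separately, using the incidence matrix $W$ of the Petri net and the incidence matrix $A$ of the reaction graph, as in the surrounding text.

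\emph{$(\Rightarrow)$.} Assume $\cN$ is WR and has deficiency zero. I would take $m_0 = M \cdot {\bf 1}$ with $M \ge \max_{t,p} {}^\bullet t(p)$, so that $m_0$ enables every transition. The technical heart of this direction is a reversibility lemma: whenever $m \xrightarrow{t} m'$, there is a firing sequence $m' \xrightarrow{\sigma_t} m$. To prove it, WR supplies a path $t_1,\dots,t_k$ in the reaction graph from $t^\bullet$ to ${}^\bullet t$ (i.e.\ ${}^\bullet t_1 = t^\bullet$, $t_i^\bullet = {}^\bullet t_{i+1}$, $t_k^\bullet = {}^\bullet t$). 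A telescoping computation shows that the marking reached from $m'$ after firing the prefix $t_1\cdots t_i$ equals $m - {}^\bullet t + t_i^\bullet$, which dominates the next input bag ${}^\bullet t_{i+1} = t_i^\bullet$ precisely because $m \ge {}^\bullet t$; after firing $t_k$ we are back at $m$. Iterating this lemma along any firing sequence from $m_0$ to $m$ gives $m_0 \in \cR(m)$, so $m_0$ is a home marking; liveness follows since $m_0$ enables every transition and is reachable from every reachable marking.

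\emph{$(\Leftarrow)$.} Assume deficiency zero and that $(\cN,m_0)$ is live with $m_0$ a home marking. This direction mirrors Proposition~\ref{pr-lb}, but because boundedness is unavailable I would produce a strictly positive T-semi-flow by hand. For each $t \in T$, liveness supplies a firing sequence $\alpha_t$ from $m_0$ to some $m_t \ge {}^\bullet t$; firing $t$ reaches some $m_t'$, and the home-marking property yields $\beta_t$ with $m_t' \xrightarrow{\beta_t} m_0$. The Parikh image $v_t$ of $\alpha_t\, t\, \beta_t$ satisfies $Wv_t = 0$ and $v_t(t) \ge 1$, so $v := \sum_{t \in T} v_t$ is a strictly positive integer T-semi-flow. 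Proposition~\ref{pr-equiv} then provides a matrix $B$ with $A = BW$, whence $Av = BWv = 0$. If $\cN$ were not WR, some connected component of the reaction graph would admit a terminal strongly connected subcomponent $C_1$ together with an incoming arc $t_0$ satisfying ${}^\bullet t_0 \notin C_1$ and $t_0^\bullet \in C_1$; the computation in the proof of Proposition~\ref{pr-lb} then yields $\sum_{c \in C_1}(Av)(c) \ge v(t_0) > 0$, contradicting $Av = 0$.

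The main obstacle is the forward direction: one must check that the reverse paths furnished by WR at the level of complexes are genuinely firable in the Petri net and not merely syntactically valid in the reaction graph. The telescoping observation is what makes this work, and it hinges on choosing $m_0$ large enough to dominate every input bag uniformly, so that the invariant $m - {}^\bullet t \ge 0$ is preserved throughout the reversal. The converse amounts to a repackaging of the argument of Proposition~\ref{pr-lb} once the strictly positive T-semi-flow has been extracted from the combination of liveness and the home-marking property.
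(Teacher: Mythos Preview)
Your proof is correct and follows essentially the same route as the paper's. Two minor remarks: in the forward direction, the telescoping argument works for any $m$ at which $t$ is enabled (the invariant $m - {}^\bullet t \ge 0$ is just the enabling condition, so contrary to your closing comment it does not depend on $m_0$ being large---the large $m_0$ is only needed to guarantee that every transition is enabled somewhere); in the backward direction, the paper is slightly more economical: rather than summing cycles through every transition to obtain a strictly positive T-semi-flow, it builds a single cycle $\gamma_1\, t_0\, \gamma_2$ through the offending transition $t_0$ (liveness gives $\gamma_1$, the home-marking property gives $\gamma_2$), whose Parikh vector is already non-negative with $v(t_0) > 0$, which is all the claim inside Proposition~\ref{pr-lb} requires.
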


\begin{proof}
  Suppose that $\cN$ is WR. Let $m_0$ be a marking which enables every
  transition.
  The definition of weak reversibility implies that every arc
  of the reachability graph belongs to a cycle, so the reachability graph is
  strongly connected, that is $m_0$ is a home marking. The liveness follows
  trivially.

  \medskip

  \noindent
  Now suppose that there exists a marking $m_0$ such that $(\cN,m_0)$ is live and 
  $m_0$ is a home marking but $\cN$ is
  not WR. We proceed as in the proof of Prop. \ref{pr-lb}.  Let $C$ be
  a terminal
  strongly connected component of the reaction graph and let $t$ be a
  transition such that $t^{\bullet} \in C$ and ${}^{\bullet}t \notin C$. 
  Since $(\cN, m_0)$ is live there is a path $\gamma_1$ in the reachability graph
  from $m_0$ to $m_1$ which enables $t$. Let $m'_1$ be the marking reached by the firing
  of $t$, since $m_0$ is a home marking there is a path $\gamma_2$ from $m'_1$
  to $m_0$. Thus $\gamma =\gamma_1 t \gamma_2$ is a (directed) cycle  
  of the reachability graph of $(\cN, m_0)$. Let $v$ be the $\setN^T$
  column vector such
  that: $\forall u \in T$, $v(u)$ is the number of occurrences of $u$ in
  $\gamma$. Clearly, $v(t) > 0$ and $W.v = (0,\dots , 0)$. The end of
  the argument follows from the claim inside the proof of Prop.~\ref{pr-lb}.
\end{proof}


The interest of Prop. \ref{def0hmwr}
is twofold. On the one hand, it connects weak reversibility and deficiency
zero which are two independent properties (\cite{MairesseNguyen09}). On the
other hand, it shows that the only deficiency zero and live Markovian Petri
nets which are ergodic are the \pideux-nets. 

\medskip

Figure \ref{fig:defandwr} recapitulates the relations between deficiency and
weak reversibility. The shaded cells correspond to impossibilities. For
instance, no WR free-choice nets have strictly positive deficiency.

\begin{figure}[htbp]
\centering
\begin{mfpic}{-140}{140}{-40}{40}
\tlabelsep{2pt}

\rect{(0, 0), (40, -20)}
\shiftpath{(40, 0)}\rect{(0, 0), (40, -20)}
\shiftpath{(80, 0)}\rect{(0, 0), (40, -20)}
\shiftpath{(0, -20)}\rect{(0, 0), (40, -20)}
\shiftpath{(40, -20)}\rect{(0, 0), (40, -20)}
\shiftpath{(80, -20)}\rect{(0, 0), (40, -20)}
\shiftpath{(0, -40)}\rect{(0, 0), (40, -20)}
\draw[black]\rhatch[2pt][black]\shiftpath{(40, -40)}\rect{(0, 0), (40, -20)}
\draw[black]\rhatch[2pt][black]\shiftpath{(80, -40)}\rect{(0, 0), (40, -20)}

\tlabel[cc](60, -10){\scriptsize{WR}}
\tlabel[cc](100, -10){\scriptsize{Not WR}}
\tlabel[cc](20, -30){\scriptsize{$\delta = 0$}}
\tlabel[cc](20, -50){\scriptsize{$\delta > 0$}}

\tlabel[cc](60, -70){State machines}

\shiftpath{(160, 0)}\rect{(0, 0), (40, -20)}
\shiftpath{(160, 0)}\shiftpath{(40, 0)}\rect{(0, 0), (40, -20)}
\shiftpath{(160, 0)}\shiftpath{(80, 0)}\rect{(0, 0), (40, -20)}
\shiftpath{(160, 0)}\shiftpath{(0, -20)}\rect{(0, 0), (40, -20)}
\shiftpath{(160, 0)}\shiftpath{(40, -20)}\rect{(0, 0), (40, -20)}
\draw[black]\rhatch[2pt][black]\shiftpath{(160, 0)}\shiftpath{(80, -20)}\rect{(0, 0), (40, -20)}
\shiftpath{(160, 0)}\shiftpath{(0, -40)}\rect{(0, 0), (40, -20)}
\draw[black]\rhatch[2pt][black]\shiftpath{(160, 0)}\shiftpath{(40, -40)}\rect{(0, 0), (40, -20)}
\shiftpath{(160, 0)}\shiftpath{(80, -40)}\rect{(0, 0), (40, -20)}

\tlabel[cc](220, -10){\scriptsize{WR}}
\tlabel[cc](260, -10){\scriptsize{Not WR}}
\tlabel[cc](180, -30){\scriptsize{$\delta = 0$}}
\tlabel[cc](180, -50){\scriptsize{$\delta > 0$}}

\tlabel[cc](220, -70){Marked graphs}

\shiftpath{(0, -90)}\rect{(0, 0), (40, -20)}
\shiftpath{(0, -90)}\shiftpath{(40, 0)}\rect{(0, 0), (40, -20)}
\shiftpath{(0, -90)}\shiftpath{(80, 0)}\rect{(0, 0), (40, -20)}
\shiftpath{(0, -90)}\shiftpath{(0, -20)}\rect{(0, 0), (40, -20)}
\shiftpath{(0, -90)}\shiftpath{(40, -20)}\rect{(0, 0), (40, -20)}
\shiftpath{(0, -90)}\shiftpath{(80, -20)}\rect{(0, 0), (40, -20)}
\shiftpath{(0, -90)}\shiftpath{(0, -40)}\rect{(0, 0), (40, -20)}
\draw[black]\rhatch[2pt][black]\shiftpath{(0, -90)}\shiftpath{(40, -40)}\rect{(0, 0), (40, -20)}
\shiftpath{(0, -90)}\shiftpath{(80, -40)}\rect{(0, 0), (40, -20)}

\tlabel[cc](60, -100){\scriptsize{WR}}
\tlabel[cc](100, -100){\scriptsize{Not WR}}
\tlabel[cc](20, -120){\scriptsize{$\delta = 0$}}
\tlabel[cc](20, -140){\scriptsize{$\delta > 0$}}

\tlabel[cc](60, -160){Free-choice nets}

\shiftpath{(160, -90)}\rect{(0, 0), (40, -20)}
\shiftpath{(160, -90)}\shiftpath{(40, 0)}\rect{(0, 0), (40, -20)}
\shiftpath{(160, -90)}\shiftpath{(80, 0)}\rect{(0, 0), (40, -20)}
\shiftpath{(160, -90)}\shiftpath{(0, -20)}\rect{(0, 0), (40, -20)}
\shiftpath{(160, -90)}\shiftpath{(40, -20)}\rect{(0, 0), (40, -20)}
\draw[black]\rhatch[2pt][black]\shiftpath{(160, -90)}\shiftpath{(80, -20)}\rect{(0, 0), (40, -20)}
\shiftpath{(160, -90)}\shiftpath{(0, -40)}\rect{(0, 0), (40, -20)}
\draw[black]\rhatch[2pt][black]\shiftpath{(160, -90)}\shiftpath{(40, -40)}\rect{(0, 0), (40, -20)}
\draw[black]\rhatch[2pt][black]\shiftpath{(160, -90)}\shiftpath{(80, -40)}\rect{(0, 0), (40, -20)}

\tlabel[cc](220, -100){\scriptsize{WR}}
\tlabel[cc](260, -100){\scriptsize{Not WR}}
\tlabel[cc](180, -120){\scriptsize{$\delta = 0$}}
\tlabel[cc](180, -140){\scriptsize{$\delta > 0$}}

\tlabel[cc](220, -160){Live and bounded nets}
\tlabel[cc](220, -170){and nets which have a live home marking}

\end{mfpic}
\caption{Relations between  deficiency ($\delta$) and WR for some
  classes of Petri nets.}
\label{fig:defandwr}
\end{figure}

\section{Synthesis and regulation of \pideux-nets}
\label{section:synthesis}

The reaction graph, defined in
Section \ref{subsec:spn}, may
be viewed as a Petri net (state machine). Let us formalise
this observation.  
\noindent
The \emph{reaction Petri net} of $\cN$
is the Petri net $\cA = (\mathcal C, T,
\overline{W}^-, \overline{W}^+)$,  
with for every $t \in T$: 
\begin{itemize}[nolistsep]
	\item $\overline{W}^-({}^\bullet t,t)= 1 \mbox{ and }   \forall u \neq {}^\bullet
t,\ \overline{W}^-(u,t)= 0$ 
	\item $\overline{W}^+(t^\bullet,t)= 1\mbox{ and }  \forall u \neq
t^\bullet,\ \overline{W}^+(u,t)= 0$  
\end{itemize}


\subsection{Synthesis}
\label{subsec:synthesis}

In this subsection, we consider unmarked nets. 
We define
three rules that generate all the $\pideux$-nets. The first
rule  adds a strongly connected state machine.

\begin{defi}[State-machine insertion]
  Let $\mathcal N=(P_\mathcal N,T_\mathcal N,W^-_\mathcal N,W^+_\mathcal N)$ be a net
  and $\mathcal M=(P_\mathcal M,T_\mathcal M,W^-_\mathcal M,W^+_\mathcal M)$ be a
  strongly connected state machine disjoint from $\mathcal N$.
  The rule {\tt S-add} 
  is always applicable and  $\mathcal N'=\mbox{{\tt S-add}}(\mathcal N,\mathcal M)$
  is defined by:
  \begin{itemize}[nolistsep]
    \item $P'= P_\mathcal N\sqcup P_\mathcal M$, $T'=T_\mathcal N\sqcup T_\mathcal M$;
    \item $\forall  p \in P_\mathcal N, \  \forall  t \in T_\mathcal N, \ W'^-(p,t)=W^-_\mathcal
      N(p,t), \ W'^+(p,t)=W^+_\mathcal N(p,t)$;
    \item $\forall p  \in P_\mathcal M, \   \forall  t \in T_\mathcal M, \ W'^-(p,t)=W^-_\mathcal
      M(p,t), \ W'^+(p,t)=W^+_\mathcal M(p,t)$;
    \item All other entries of $W'^-$ and $W'^+$ are null.
  \end{itemize}
\end{defi}

The second rule consists in substituting to a complex $c$ the complex
$c+\lambda p$. However 
in order to be applicable some conditions must be fulfilled. 
The first condition requires that 
$c(p)+\lambda$ 
is non-negative. The second condition ensures that the substitution does not modify
the reaction graph. The third condition preserves deficiency zero. Observe that the third
condition can be checked in polynomial time, indeed it amounts to
solving a system of linear equations in $\mathbb{Q}$ for every
complex. 

\begin{defi}[Complex update]
  Let $\mathcal N=(P,T,W^-,W^+)$ be a \pideux-net, $c$ be a complex of $\mathcal N$, $p \in P$,
  $\lambda \in \setZ \setminus \{0\}$.
  The rule {\tt C-update} 
  is applicable when:
  \begin{enumerate}[nolistsep]
    \item $\lambda+c(p) \geq 0$;
    \item $c+\lambda p$ is not a complex of $\mathcal N$;
    \item For every complex $c'$ there exists a witness $wit(c')$ s.t. $wit(c')(p)=0$.
  \end{enumerate}
  The resulting net $\mathcal N'=\mbox{{\tt C-update}}(\mathcal N,c,p,\lambda)$ 
  is defined by:
  \begin{itemize}[nolistsep]
    \item $P'= P$,  $T'=T$;
    \item \begin{small} 
	$\forall t \in T\  \mbox{s.t.}\ W^-(t)\neq c, \ W'^-(t)=W^-(t)$,  $\forall t \in T\  \mbox{s.t.}\ W^-(t)= c,\ W'^-(t)=c+\lambda p$ 
      \end{small}
    \item \begin{small} $\forall t \in T\  \mbox{s.t.}\ W^+(t)\neq
	c,\ W'^+(t)=W^-(t)$, $\forall t \in T\  \mbox{s.t.}\ W^+(t)=
	c,\ W'^+(t)=c+\lambda p$. 
      \end{small} 
  \end{itemize}
\end{defi}

The last rule ``cleans'' the net by deleting an isolated place.
We call this operation {\tt P-delete}.

\begin{defi}[Place deletion]
  Let $\mathcal N=(P,T,W^-,W^+)$ be a net and let $p$ be an isolated place
  of $\mathcal N$, i.e.  $W^-(p)=W^+(p)=0$. 
  Then the rule {\tt P-delete}  is 
  applicable and  $\mathcal N'=\mbox{ {\tt P-delete}}(\mathcal N,p)$ is
  defined by: 
  \begin{itemize}[nolistsep]
    \item $P'= P\setminus \{p\}$, $T'=T$;
    \item $\forall q \in P',\ W'^-(q)=W^-(q),\ W'^+(q)=W^+(q)$.
  \end{itemize}
\end{defi}

Proposition~\ref{pr-sc} shows the interest of the rules 
for synthesis of $\pideux$-nets.
\begin{prop}[Soundness and Completeness]\label{pr-sc}
  Let $\mathcal N$ be a $\pideux$-net.
  \begin{itemize}[nolistsep]
    \item If a rule  {\tt S-add},  {\tt C-update} or {\tt P-delete}
      is applicable on $\mathcal N$ then the resulting net is still a $\pideux$-net.
    \item The net $\mathcal N$ can be obtained by successive applications of
      the rules 
      {\tt S-add},  {\tt C-update}, {\tt P-delete} starting from the empty net.  
  \end{itemize}  
\end{prop}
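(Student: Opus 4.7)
The plan is to prove soundness and completeness separately.

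Soundness reduces to checking that each of the three rules preserves both weak reversibility and deficiency zero. For {\tt S-add}$(\cN,\cM)$, the reaction graph becomes the disjoint union of the two reaction graphs: all connected components remain strongly connected, and because $\cM$ is a state machine (and hence has deficiency zero), the block-diagonal incidence matrix yields $|\cC|$, $\ell$, and $\mathrm{rank}(W)$ that all add, so the total deficiency stays zero. For {\tt C-update}$(\cN,c,p,\lambda)$, condition~2 implies the reaction graph is merely renamed (node $c$ becomes $c+\lambda p$), so WR is preserved; for deficiency zero I would invoke Prop.~\ref{pr-equiv}: the witnesses supplied by condition~3 remain valid in $\cN'$ because $W'(t)-W(t)$ is a multiple of $\mathbf{1}_p$, so $wit(c')\cdot W'(t)=wit(c')\cdot W(t)$ whenever $wit(c')(p)=0$, and the new complex $c+\lambda p$ inherits its witness from that of $c$. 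For {\tt P-delete}, no complex involves the isolated place $p$, so the reaction graph is unchanged and the removed row of $W$ is zero, preserving deficiency.

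For completeness, the plan is to synthesize $\cN$ from the empty net in three phases. \emph{Phase 1} (skeleton): since $\cN$ is WR, its reaction graph decomposes into strongly connected components, each being a strongly connected state machine on its complexes viewed as abstract places; applying {\tt S-add} once per component produces the reaction Petri net $\cA$ of $\cN$, which is itself a $\pideux$-net. \emph{Phase 2} (morphing): further {\tt S-add}s introduce the true places $P$ of $\cN$ as the places of auxiliary strongly connected state machines, after which {\tt C-update} is applied repeatedly to transform each abstract complex-place $\tilde c$ of $\cA$ into the target complex $c = \sum_p c(p)\,p$, one unit at a time. \emph{Phase 3} (clean-up): once morphing is complete, {\tt P-delete} is applied to every abstract or auxiliary place that has become isolated.

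The hard part will be Phase~2: scheduling the {\tt C-update}s so that conditions~1--3 are satisfied at every step. Conditions~1 and~2 can be arranged by choosing the order so that intermediate complexes are fresh and non-negative. Condition~3 is the delicate one: the intermediate net is a $\pideux$-net by inductive use of soundness, so Prop.~\ref{pr-equiv} yields a witness for every complex, and one must then exploit the degree of freedom afforded by $S$-semiflows (each state machine module added by {\tt S-add} contributes its all-ones semiflow) to shift a given witness so that it vanishes at the active place. A secondary bookkeeping issue is that the extra transitions introduced by the auxiliary state machines of Phase~2 must be identified with transitions of $\cN$ by the end of Phase~2, using the convention from Subsection~\ref{subsec:spn} that transitions sharing $(W^-,W^+)$-columns are merged.
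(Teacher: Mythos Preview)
Your soundness argument is correct and matches the paper's.

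For completeness, your three-phase plan is exactly the paper's structure, but you are making Phase~2 harder than it is and introducing a spurious difficulty. Two points:

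\textbf{Condition~3 is not the hard part.} In the paper, each real place $p\in P$ is introduced by {\tt S-add} as an \emph{isolated} place (a one-place, zero-transition state machine). The crucial observation is that, throughout Phase~2, the single place $q_c$ of the reaction Petri net is itself a witness for the (current) complex $c'=q_c+\sum_{p\in P\cap P_{cur}} c(p)\,p$: indeed $q_c\cdot W_{cur}(t)$ equals $-1,1,0$ exactly according to whether $t$ leaves, enters, or avoids that complex. Since $q_c$ never involves any real place, condition~3 of {\tt C-update} is satisfied for free at every step of Phase~2. Symmetrically, in Phase~3 the witnesses $wit(c)$ supplied by $\cN$ itself (which is $\pideux$) involve only real places, so they vanish at every $q_c$, and condition~3 is again automatic when you strip $q_c$ off the complex $c+q_c$. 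Your proposed route---invoke Prop.~\ref{pr-equiv} abstractly, then shift by the all-ones $S$-semiflows of the {\tt S-add} modules---has a gap: once a real place $p$ has been incorporated into at least one complex, the vector $\mathbf 1_p$ is no longer an $S$-semiflow, and the $\cA_i$ all-ones semiflows vanish on $P$, so neither of the semiflows you name lets you kill a nonzero $p$-coordinate of a generic witness. The argument only goes through because a witness vanishing at $p$ already exists, namely $q_c$; but then you might as well use it directly.

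\textbf{There are no extra transitions.} If you add each $p\in P$ as an isolated place, Phase~2 introduces no transitions whatsoever, so your ``secondary bookkeeping issue'' evaporates. The merging convention of Subsection~\ref{subsec:spn} is a modelling assumption for Markovian nets, not one of the three synthesis rules; you cannot rely on it to remove transitions, and since there is no transition-deletion rule, any scheme that adds auxiliary transitions would in fact be stuck.
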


\begin{proof}
  {\bf Soundness.} The case  of   {\tt P-delete} is straightforward. Since we delete an isolated
  place, the reaction graph is unchanged. So the net is still WR. Assume that we
  delete an isolated place $p$
  and that $p$ occurs in a witness $wit(c)$ of some complex $c$. Then
  $wit(c)-wit(c)(p)$ is also a witness of $c$.

  \smallskip \noindent
  Let us examine the application of rule $\mbox{{\tt S-add}}(\mathcal N,\mathcal M)$.
  The state machine $\mathcal M$
  constitutes a new component of the reaction graph. Since $\mathcal M$ is strongly connected, the new net
  is still WR. The witness of complexes associated with $\mathcal N$ are unchanged.
  Let $q$ be a place of $\mathcal M$; by definition of state machines this
  place is self-witnessing i.e. $wit(q)=q$.
  Thus the new net has deficiency zero.

  \smallskip \noindent
  Let us examine the application of  the rule $\mbox{{\tt C-update}}(\mathcal N,c,p,\lambda)$. 
  By the second condition of its application the reaction graph of the new net is the same
  as the original one (with $c+\lambda p$ instead of $c$). So the new net is WR.
  Due to the third condition, the witness of $c'\neq c$ is unchanged and the witness of
  $c+\lambda\cdot p$ is the one of $c$.

  \medskip

  \noindent{\bf Completeness.}
  Let $\mathcal N=(P,T,W^-,W^+)$ be a $\pideux$-net. We proceed as follows to
  generate $\cN$ via our rules.
  At any stage of the generation,  $\mathcal N_{cur}$ denotes the current net. Initially
  $\mathcal N_{cur}$ is the empty net.

  \smallskip \noindent {\bf First step.} 
  Let $\mathcal A_1,\ldots,\mathcal A_n$ be the strongly connected state machines corresponding to the components
  of the reaction net of $\mathcal N$. 
  Given a complex $c$ of $\mathcal N$, the corresponding place in the
  state machine is denoted $q_c$. We apply the rules $\mbox{{\tt S-add}}(\mathcal N_{cur},\mathcal A_i)$
  for $i$ from 1 to $n$. 
  At this stage, $\mathcal N_{cur}$ has $T$ for set of transitions
  and a place $q_c$ for every complex $c$ of $\mathcal N$. Furthermore, $q_c$ has for input (resp. output) transitions
  the input  (resp. output) transitions of $c$ in $\mathcal N$. The
  complexes of $\mathcal N_{cur}$ are the places $q_c$ and they
  are their own witnesses.

  \smallskip \noindent {\bf Second step.} It consists in adding the places of $P$ in such a way that
  the net $\mathcal N_{cur}$ restricted to the places of $P$ is $\mathcal N$.
  At every stage of this step,
  given a complex $c=\sum_{p \in P} c(p)p$ of $\mathcal N$, there is a corresponding complex 
  $c' = q_c + \sum_{p \in P\cap P_{cur}} c(p)p$ in $\mathcal N_{cur}$. 
  For every place $p \in P$, we add $p$ to $\mathcal N_{cur}$ by rule {\tt S-add} (an isolated place is
  a strongly connected state machine) and for every
  complex $c$ of $\mathcal N$ such that $c(p)>0$, we apply the rule $\mbox{{\tt C-update}}(\mathcal N_{cur},c',p,c(p))$.
  Let us check that this rule is applicable. First, $c'(p)+c(p)=c(p)$ is
  positive. Second, $c'+c(p)p$ is not a complex of $\mathcal N_{cur}$ by
  construction. 
  Third, for every complex $c'$
  of $\mathcal N_{cur}$, there is a witness consisting in the single
  place $q_c$ which is in a state machine $\mathcal A_i$
  (thus different from $p$).
  At the end of this step, $\mathcal N_{cur}$ is the net $\mathcal N$
  enlarged with  the places of the state machines $\cA_i$.
  Otherwise stated, every complex $c'$ of $\mathcal N_{cur}$ is equal to $c+q_c$.

  \smallskip \noindent {\bf Third step.} This step consists in deleting
  the places of the state machines.
  We observe that the place $q_c$ only occurs in the complex $c+q_c$. 
  The net $\mathcal N$ being a $\pideux$-net, every complex $c'$ has a witness
  $wit(c')$ in $\cN$. Then $wit(c')$ is a witness 
  for $c' +q_{c'}$ in $\mathcal N_{cur}$ whose support does not contain $q_c$. 
  Thus the rule $\mbox{{\tt C-update}}(\mathcal N_{cur}, c+q_c, q_c, -1)$
  is applicable. 
  After its application, $q_c$ becomes isolated and can be deleted by
  the rule $\mbox{{\tt P-delete}}(\mathcal N_{cur},q_c)$. At the end,
  we have obtained $\mathcal N$.
\end{proof}

\begin{figure}[hpbt]
  \begin{center}
    \includegraphics[scale=0.4]{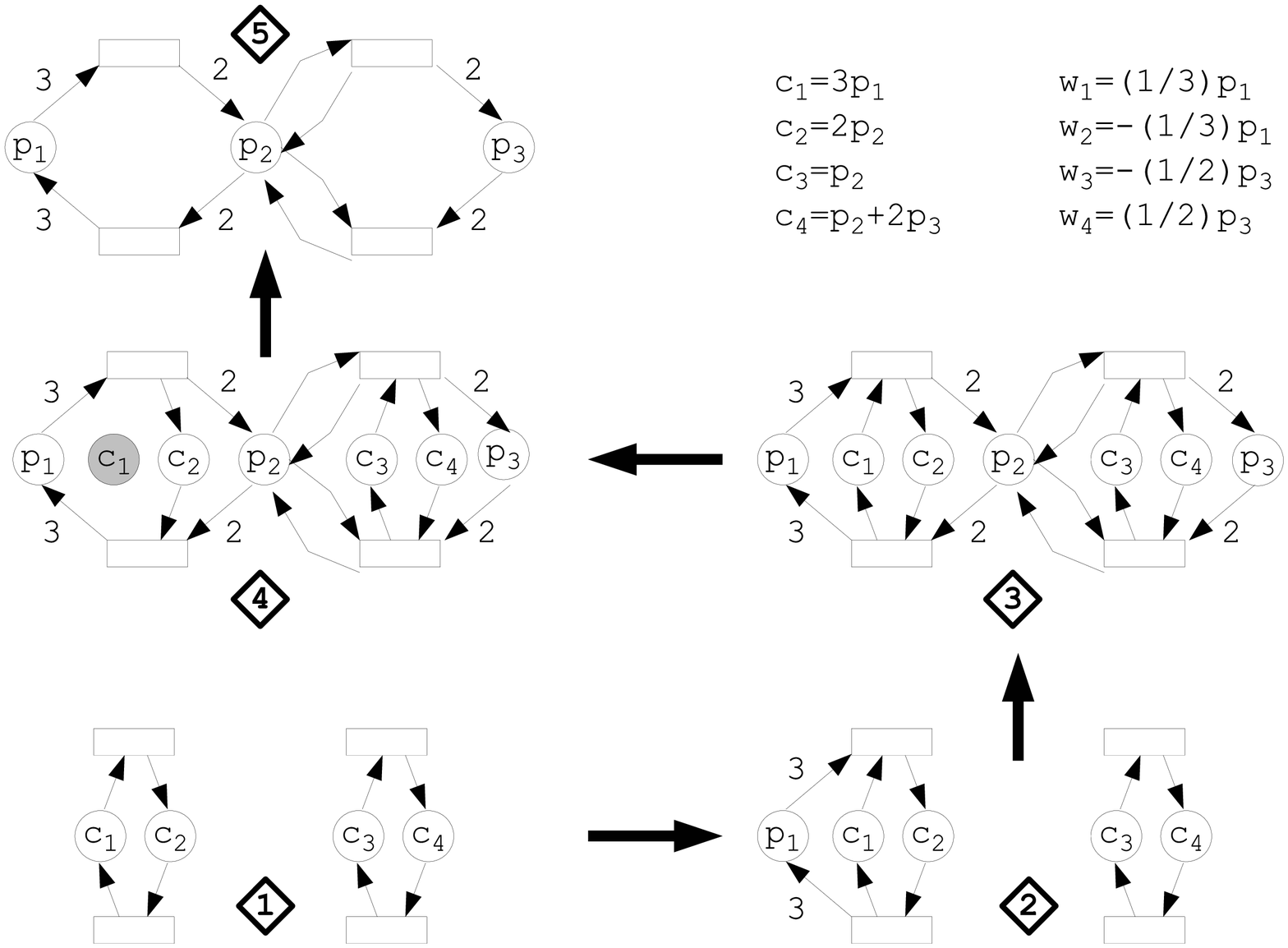}
  \end{center}  
  \caption{How to synthetise a $\pideux$-net.}
  \label{fig:synthnet}
\end{figure}

\noindent{\bf Example.}
We illustrate the synthesis process using our rules on the net numbered 5
in Figure~\ref{fig:synthnet}. We have also indicated on the right
upper part of this figure, the four complexes and their witnesses. Since
the reaction Petri graph of this net has two state machines, we start by creating it using twice the insertion
of a state machine (net 1).
Then we add the place $p_1$ (a particular state machine). We update the complex
$c_1$ (the single one where $p_1$ appears in the original net) by adding $3p_1$ (net 2).
Iterating
this process, we obtain the net 3.
Observe that this net is a fusion (via $T$ the set of transitions)
of the original net and its reaction Petri net. We now iteratively update the
complexes. The net 4 is the result of transforming $c_1+3p_1$
into $3p_1$. 
Once $c_1$ is isolated, we delete it. Iterating this process yields the original
net.

\smallskip
For modelling purposes, we could define
more general rules like the refinement of a place by a strongly connected state machine. 
Here the goal was to design a minimal set of rules.

\subsection{From non $\pideux$-nets to $\pideux$-nets}

Below we propose a procedure which takes as input any
Petri net and returns a $\pideux$-net. The important
disclaimer is that the resulting net, although related to the original
one, has a different structural and timed behaviour. So it is up to
the modeller to decide if the resulting net satisfies the desired
specifications. In case of a positive answer, the clear gain is that
all the associated Markovian Petri nets have a product form. 

\smallskip

Consider a Petri net $\cN =(P, T, W^-,W^+,m_0)$ with set of complexes
$\cC$. Assume that $\cN$ is not WR. For each transition $t$, add a
reverse transition $t^-$ such that ${}^\bullet t^- = t^{\bullet}$ and
$(t^-)^\bullet = {}^{\bullet}t$ (unless such a transition already
exists). The resulting net is WR. In the Markovian Petri net, the
added reverse transitions can be given very small rates, to
approximate more closely the original net. However, there is no 
theoretical guarantee of the convergence of steady-state distributions
and in fact counter-examples can be exhibited. 

\smallskip

Now, to enforce deficiency 0, the
idea is to compose a general Petri net with its reaction graph as in
the illustration of Proposition \ref{pr-sc}.


\begin{defi}\label{def:control}
Consider a Petri net $\cN=(P,T, W^-,W^+,m_0)$. Let $\overline{m}_0$ be an initial marking
for the reaction Petri net $\cA$.
The {\em regulated} Petri net associated with $\cN$ is defined as
follows: 
\[
\cA \odot \cN = \bigl( P
\sqcup \mathcal C, 
T, \widetilde{W}^-, \widetilde{W}^+ , (m_0,\overline{m}_0) \bigr),
\quad  \widetilde{W}^- = \left[ \begin{array}{c} W^-
    \\ \overline{W}^- \end{array} \right]\:,  \widetilde{W}^+ = \left[ \begin{array}{c} W^+
    \\ \overline{W}^+ \end{array} \right]\:.
\]
\end{defi}



\begin{prop}\label{prop-regul}
The regulated Petri net $\mathcal A \odot \mathcal
N$ is WR iff $\mathcal N$ is WR. 
The regulated Petri net $\mathcal A \odot \mathcal N$ has deficiency 0.  
\end{prop}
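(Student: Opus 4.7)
The plan is to exploit the natural bijection between the complexes of $\mathcal N$ and those of $\mathcal A \odot \mathcal N$, show that the reaction graph is preserved, and then exhibit an obvious witness for each complex.

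First I would observe that the transitions of $\mathcal A \odot \mathcal N$ are precisely those of $\mathcal N$ (same index set $T$), and that for every $t\in T$ the input bag in the regulated net is $\widetilde{{}^\bullet t} = {}^\bullet t + [{}^\bullet t]$, where $[c]$ denotes the place of $\mathcal A$ corresponding to the complex $c$; similarly for $\widetilde{t^\bullet}$. Two transitions $t_1,t_2$ therefore share the same input (resp.\ output) bag in $\mathcal A \odot \mathcal N$ iff they already shared it in $\mathcal N$ (the contribution from the $\mathcal A$ component agrees iff the underlying complexes agree). Consequently the map $c \mapsto \widetilde{c} := c + [c]$ is a bijection between the complexes of $\mathcal N$ and those of $\mathcal A \odot \mathcal N$, and this map induces an isomorphism between the two reaction graphs. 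The equivalence ``$\mathcal A \odot \mathcal N$ is WR $\iff$ $\mathcal N$ is WR'' follows immediately.

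For deficiency zero, by Proposition~\ref{pr-equiv} it suffices to produce a witness for every complex $\widetilde{c}$ of $\mathcal A \odot \mathcal N$. I propose $wit(\widetilde{c}) := [c]$, i.e.\ the indicator of the complex-place $c$ inside $\mathcal A$, viewed as a vector on $P \sqcup \mathcal C$. For any transition $t\in T$, the incidence of $\widetilde{W}$ at row $[c]$ is exactly the incidence of $\overline{W}$, which by construction equals $-1$ if ${}^\bullet t = c$, equals $+1$ if $t^\bullet = c$, and is $0$ otherwise (using the standing assumption ${}^\bullet t \neq t^\bullet$). This is precisely the witness condition for $\widetilde{c}$ in $\mathcal A \odot \mathcal N$, since ${}^\bullet t = c$ in $\mathcal N$ iff $\widetilde{{}^\bullet t} = \widetilde{c}$ in the regulated net, and similarly for outputs.

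Having exhibited witnesses for all complexes of $\mathcal A \odot \mathcal N$, Proposition~\ref{pr-equiv} gives deficiency~$0$, which concludes both claims of the proposition. There is no real obstacle; the only subtlety is the bookkeeping in identifying the complexes of $\mathcal A \odot \mathcal N$ with those of $\mathcal N$ and checking that the reaction graph structure is genuinely preserved (as opposed to merely the complex set).
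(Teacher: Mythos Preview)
Your proof is correct and follows essentially the same approach as the paper: both argue that the reaction graph of $\mathcal A \odot \mathcal N$ is isomorphic to that of $\mathcal N$ via $c\mapsto c+[c]$, and both exhibit the indicator of the complex-place $[c]$ as a witness for $\widetilde{c}$ to conclude deficiency~$0$ via Proposition~\ref{pr-equiv}. Your presentation is slightly more explicit about the bijection of complexes, but there is no substantive difference.
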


\begin{proof}
  By construction the reaction graph of the regulated Petri net $\mathcal A \odot \mathcal
  N$  is the reaction graph of $\mathcal N$, i.e. $\mathcal A$, modulo a node
  renaming. So $\mathcal A \odot \mathcal
  N$ is WR iff $\mathcal N$ is WR.\\
  Now let us prove that the deficiency is 0. We use the characterization
  by witnesses, see Prop. \ref{pr-equiv}. 
  Let $\widetilde{\cC}$ be the set of complexes of $\mathcal A \odot \mathcal
  N$. 
  Consider $\tilde{c}\in \widetilde{\cC}$ and let $c$ be the
  corresponding element in $\cC$. Define $wit(\tilde{c}) \in \mathbb
  Q^{P\sqcup \cC}$ by:  $wit(\tilde{c})_c =1$, $\forall u \neq c,
  wit(\tilde{c})_u =0$. By direct inspection, we check that
  $wit(\tilde{c})$ is indeed a witness of $\tilde{c}$.
\end{proof}

\begin{figure}[H]
\centering
\begin{mfpic}{-140}{140}{-40}{40}
\tlabelsep{2pt}

\circle{(0, 30), 4}
\tlabel[bc](0, 32){{\scriptsize $p_1$}}
\circle{(0, -30), 4}
\tlabel[tc](0, -32){{\scriptsize $p_2$}}
\draw[black]\gfill[gray(0.5)]\circle{(0, 10), 4}
\tlabel[bc](0, 13){{\scriptsize $q_2$}}
\draw[black]\gfill[gray(0.5)]\circle{(0, -10), 4}
\tlabel[tc](0, -13){{\scriptsize $q_3$}}
\draw[black]\gfill[gray(0.5)]\circle{(0, 50), 4}
\tlabel[bc](0, 52){{\scriptsize $q_1$}}
\draw[black]\gfill[gray(0.5)]\circle{(0, -50), 4}
\tlabel[tc](0, -52){{\scriptsize $q_4$}}

\shiftpath{(-47, 0)}\rect{(-5, -1), (5, 1)}
\shiftpath{(-20, 0)}\rect{(-5, -1), (5, 1)}
\shiftpath{(20, 0)}\rect{(-5, -1), (5, 1)}
\shiftpath{(47, 0)}\rect{(-5, -1), (5, 1)}
\tlabel[br](-52, 2){{\scriptsize $t_1$}}
\tlabel[br](-25, 2){{\scriptsize $t_2$}}
\tlabel[bl](25, 2){{\scriptsize $t_3$}}
\tlabel[bl](52, 2){{\scriptsize $t_4$}}


\arrow\arc[s]{(-5, 31), (-45, 2), 60}
\arrow\arc[s]{(-45, -2), (-5, -31), 60}
\arrow\arc[s]{(5, -31), (45, -2), 60}
\arrow\arc[s]{(45, 2), (5, 31), 60}

\arrow\arc[s]{(-5, 29), (-21, 2), 20}
\arrow\arc[s]{(-21, -2), (-5, -29), 20}
\arrow\arc[s]{(5, -29), (21, -2), 20}
\arrow\arc[s]{(21, 2), (5, 29), 20}

\arrow\arc[s]{(-5, 50), (-48, 2), 90}
\arrow\arc[s]{(-48, -2), (-5, -50), 90}
\arrow\arc[s]{(5, -50), (48, -2), 90}
\arrow\arc[s]{(48, 2), (5, 50), 90}

\arrow\arc[s]{(-5, 10), (-19, 2), 40}
\arrow\arc[s]{(-19, -2), (-5, -10), 40}
\arrow\arc[s]{(5, -10), (19, -2), 40}
\arrow\arc[s]{(19, 2), (5, 10), 40}

\tlabel[br](-16, 28){{\scriptsize $2$}}
\tlabel[tr](-16, -28){{\scriptsize $2$}}
\tlabel[bl](16, 28){{\scriptsize $2$}}
\tlabel[tl](16, -28){{\scriptsize $2$}}

\point[2pt]{(-1.5, 31), (1.5, 29)}

\end{mfpic}
\caption{Regulated Petri net associated with the Petri net of Fig \ref{fig:egPN}.}
\label{fig:nodota}
\end{figure}

The behaviours of the original and regulated Petri nets are
different. In particular, the regulated Petri net is bounded, even if the
original Petri net is unbounded. 
Roughly, the regulation imposes some control on the firing sequences. Consider the example of Figures
\ref{fig:egPN} (original net) and \ref{fig:nodota} (regulated net). The places
$q_1, q_2, q_3, q_4$ correspond to the complexes $2p_1, p_1, p_2, 2p_2$,
respectively.
The transitions $t_1$ and $t_4$ belong to the same simple circuit in
the reaction graph. Let $w$ be an arbitrary firing sequence. The
quantity $|w|_{t1} - |w|_{t4}$ is unbounded for the original net, and
bounded for the regulated net.

\section{Complexity analysis of \pideux-nets}
\label{section:qualitative}

All the nets that we build in this section
are symmetric hence WR. 
For every depicted
transition $t$, the reverse transition exists (sometimes implicitly)
and is denoted $t^-$. 
It is well known that reachability and liveness of safe Petri nets are {\sf PSPACE}-complete~\cite{survey94}.
In~\cite{Haddad05}, it is proved that reachability and liveness are {\sf
PSPACE}-hard for safe $\piun$-nets and {\sf NP}-hard for  safe $\pideux$-nets.
The next theorem and its corollary improve on these results by showing that the
problem is not easier for
safe  $\pideux$-nets than for general safe Petri nets.

\begin{theo}
  The reachability problem for safe $\pideux$-nets is {\sf PSPACE}-complete. 
\end{theo}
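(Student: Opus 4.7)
The proof splits into matching upper and lower bounds. For the upper bound, since a $\pideux$-net is in particular a Petri net, and reachability in safe Petri nets is in {\sf PSPACE}~\cite{survey94}, the problem is in {\sf PSPACE}. The content of the theorem is therefore the matching hardness, which strengthens the previously known {\sf NP}-hardness of~\cite{Haddad05}.

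For {\sf PSPACE}-hardness, the plan is to reduce from reachability in general safe Petri nets (equivalently, from acceptance of a polynomially space-bounded Turing machine). Given such an instance, I would construct in polynomial time a safe $\pideux$-net that faithfully simulates the computation on its ``active'' places, and exhibit a target marking whose reachability is equivalent to the original instance. The places of the simulating net encode the control state, head position and tape contents in a one-hot fashion that keeps every reachable marking safe, while the transitions encode elementary simulation steps.

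The two $\pideux$ constraints are imposed as announced at the opening of Section~\ref{section:qualitative}. \emph{Weak reversibility}: for every simulation transition $t$ the reverse transition $t^-$ with ${}^\bullet t^- = t^\bullet$ and $(t^-)^\bullet = {}^\bullet t$ is added; this makes the net symmetric, hence WR. Reversibility does not defeat the reduction because the target marking is placed at a trap-like ``accepting'' configuration whose existence is the actual invariant: $m_1$ is reachable iff the original computation accepts, regardless of whether forward or reverse moves are interleaved. \emph{Deficiency zero}: by Proposition~\ref{pr-equiv}, it suffices to exhibit a witness for every complex. To that end, I attach to each complex $c$ appearing in a simulation gadget a fresh auxiliary place $q_c$ that appears only in the complex $c$; this place is its own witness and is consumed/produced in a 0/1 fashion inside the gadget, so that safeness is preserved.

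The main obstacle is to reconcile the rigidity of deficiency zero with the flexibility of Turing machine simulation. If the witness construction is done naively — as in the regulation of Definition~\ref{def:control} — the resulting net forces firing sequences to walk along the reaction graph, destroying the correspondence with the original net's runs. The remedy is to design small, self-contained gadgets, one per type of simulation step, in which the witness places are toggled internally while the only shared places across gadgets are the global state/tape/head-position places; neighbouring gadgets then share no complex, so the reaction graph decomposes and the global deficiency is the sum of the (zero) deficiencies of the gadgets. Checking (i) safeness, (ii) correctness of the reduction (target marking reachable iff the computation accepts), and (iii) the $\pideux$ property of the resulting net by producing explicit witnesses completes the proof.
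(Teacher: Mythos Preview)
Your upper bound is correct, and symmetrisation for weak reversibility is the same device the paper uses. The gap is in the deficiency-zero step. You correctly diagnose that the regulation of Definition~\ref{def:control} spoils the simulation, but your remedy falls into the same trap. If a gadget's entry complex $c_{\mathrm{in}}$ contains a fresh witness place $q$ that ``appears only in the complex $c_{\mathrm{in}}$'', then every forward firing consumes a token from $q$, and the \emph{only} transitions that can restore $q$ are those with output bag $c_{\mathrm{in}}$ --- i.e.\ reverse entries, which require the global state/head/tape places to be back in the pre-step configuration. No ``internal toggling'' helps: any internal transition that outputs $q$ would have $q$ in its output complex, contradicting the uniqueness of $c_{\mathrm{in}}$. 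So, under safeness, each gadget fires at most once net of its reverse; but a fixed (state, head-position, read-symbol) transition of a polynomial-space machine may fire exponentially often along the accepting run. Even if single use sufficed, the target marking must specify every witness place, and those final values encode exactly which gadgets fired --- information unavailable to a polynomial-time reduction.

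The paper sidesteps this by \emph{not} simulating an arbitrary machine. It reduces from {\sf QSAT} and builds an $(n{+}3)$-bit binary counter whose bit places themselves furnish the witnesses (for instance $wit\bigl(p_i+\sum_{j<i}q_j\bigr)=p_i+\sum_{j>i}2^{\,j-i-1}p_j$), so no auxiliary per-complex places are introduced and no firing-count constraint appears. Variable-choice and clause-checking subnets are then grafted on by enlarging existing complexes or adding small self-witnessing pieces whose state is fully determined at the target. Crucially, the initial and final markings are prescribed by the counter values $2^{n+1}$ and $2^{n+2}$ together with a fixed pattern elsewhere, so the target marking is completely specified without knowing the exponential-length verification sequence.
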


\begin{proof}
  Our proof of {\sf PSPACE}-hardness is based on a reduction from the 
  {\sf QSAT} problem~\cite{Papadimitriou94}. 
  {\sf QSAT} consists in deciding whether the following formula is true\\
  \centerline{$\varphi \equiv \forall x_n \exists y_n \forall x_{n-1} \exists
  y_{n-1} \dots \forall x_{1} \exists y_{1} \psi$}\\ 
  where $\psi$ is a propositional formula over $\{x_1,y_1\ldots,x_n,y_n\}$ 
  in conjunctive normal form with at most three literals per clause.

  \smallskip \noindent 
  Observe that in order to check the truth of $\varphi$, one must check
  the truth of $\psi$ w.r.t. the $2^n$ interpretations of $x_1,\ldots,x_n$ while the 
  corresponding interpretation of any $y_i$ must only depend on the interpretation
  of $\{x_n,\ldots,x_i\}$.

  \smallskip \noindent {\bf Counters modelling.}
  First we design a $\pideux$-net
  $\mathcal N_{cnt}$ that ``counts'' from 0 to $2^k-1$. This net is defined by:
  \begin{itemize}[nolistsep]
    \item $P = \{ p_0,\dots, p_{k - 1}, q_0,\dots, q_{k - 1} \}$;
    \item $T = \left\{ t_0,\dots,t_{k - 1} \right\}$;
    \item For every $0\leq i < k$, ${}^\bullet t_i = p_i + \sum_{j<i} q_j$ and 
      $t_i^\bullet = q_i+ \sum_{j<i} p_j$;
    \item For every $0\leq i < k$, $m_0(p_i)=1$ and $m_0(q_i)=0$.
  \end{itemize}

  \begin{figure}[H]
    \begin{center}
      \includegraphics[scale=0.4]{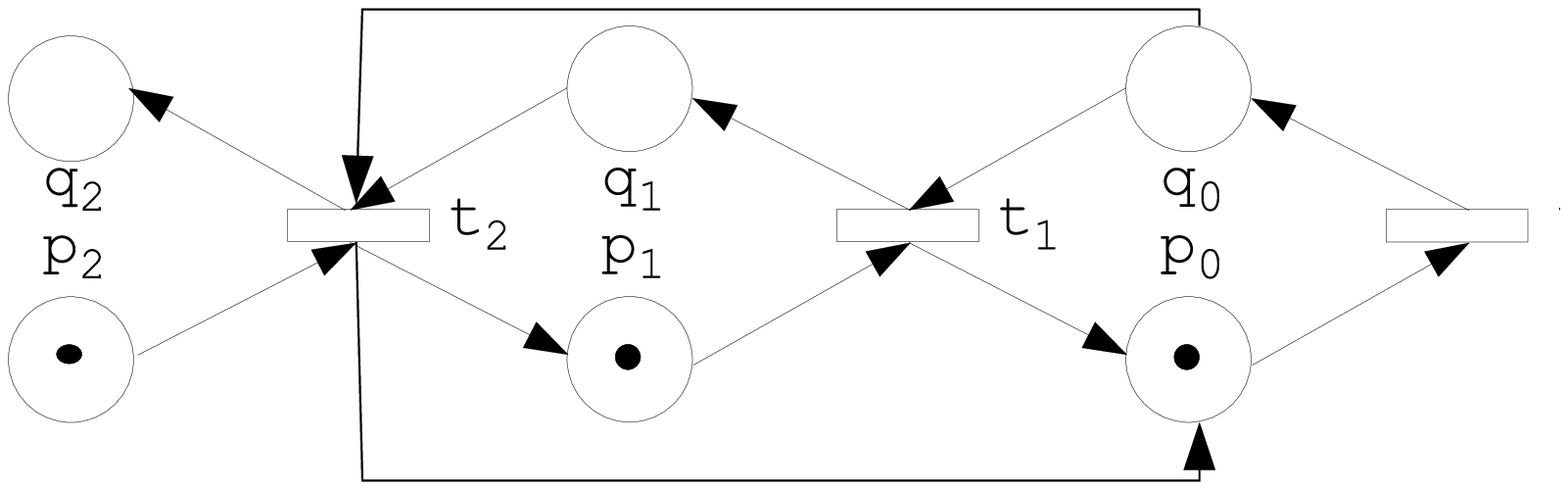}
    \end{center}  
    \caption{A $3$-bit counter (without the reverse transitions).}
    \label{fig:3bitcount}
  \end{figure}

  \noindent
  Observe that for every reachable marking $m$ and every index 
  $i$, we have $m(p_i) + m(q_i) = 1$. Therefore $m$ can be coded by the
  binary word
  $\w=\w_{k - 1}\dots\w_0$ in which $\w_i = m(q_i)$. The word $\w$
  is interpreted as the binary expansion of an integer between 0
  and $2^k-1$. We denote by $val(\w)$ the integer value associated
  with $w$. Consider $w \not\in \{0^k, 1^k\}$, 
  there are two
  markings reachable from $w$ which are $w+$ and $w-$ such that
  $val(w-)=val(w)-1$ and $val(w+)=val(w)+1$. 


  \smallskip \noindent
  The figure below represents the reachability graph of the $3$-bit counter.
  For a $k$-bit counter, the shortest firing sequence from $0^k$ to
  $1^k$ is $\sigma_k$ defined inductively by: $\sigma_1=t_0$ and $\sigma_{i+1}=\sigma_{i}t_i\sigma_{i}$.

  \begin{center}
    \includegraphics[scale=0.4]{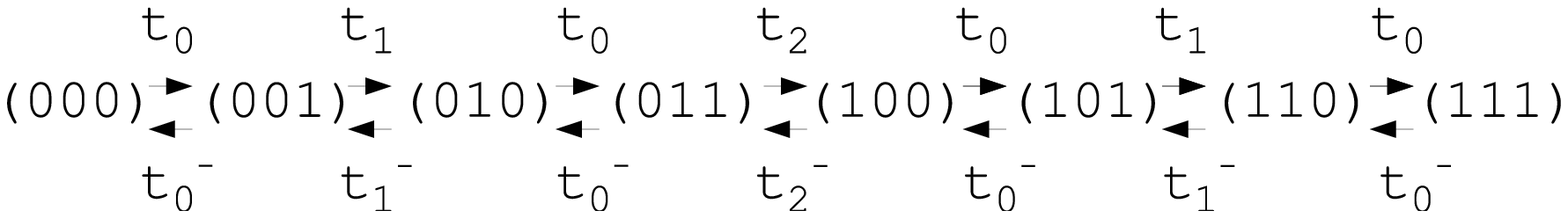}
  \end{center}

  \noindent
  For every complex $c \equiv p_i + \sum_{j<i} q_j$ (resp. $c \equiv q_i + \sum_{j<i} p_j$), a possible witness 
  is $wit(c)\equiv p_i + \sum_{j>i} 2^{j-i-1} p_j$ (resp. $wit(c) \equiv q_i + \sum_{j>i} 2^{j-i-1} q_j$).
  Thus this subnet has deficiency 0.

  \smallskip \noindent
  To manage transition firings between the update of counters, we duplicate
  the counter subnet
  and we synchronize the two subnets as indicated in the figure below. 	
  For a duplicated $k$-bit counter, the shortest firing sequence from the
  marking with the two counters set to $0^k$ and
  place $go$ marked to the marking with the two counters set to $1^k$ and
  place $go$ marked 
  is obtained by: 
  $\overline{\sigma}_1=\overline{t}_0$ and
  $\overline{\sigma}_{n+1}=\overline{\sigma}_{n}\overline{t}_n\overline{\sigma}_{n}$ 
  where $\overline{t}_i=t_it'_i$.

  \begin{center}
    \includegraphics[scale=0.35]{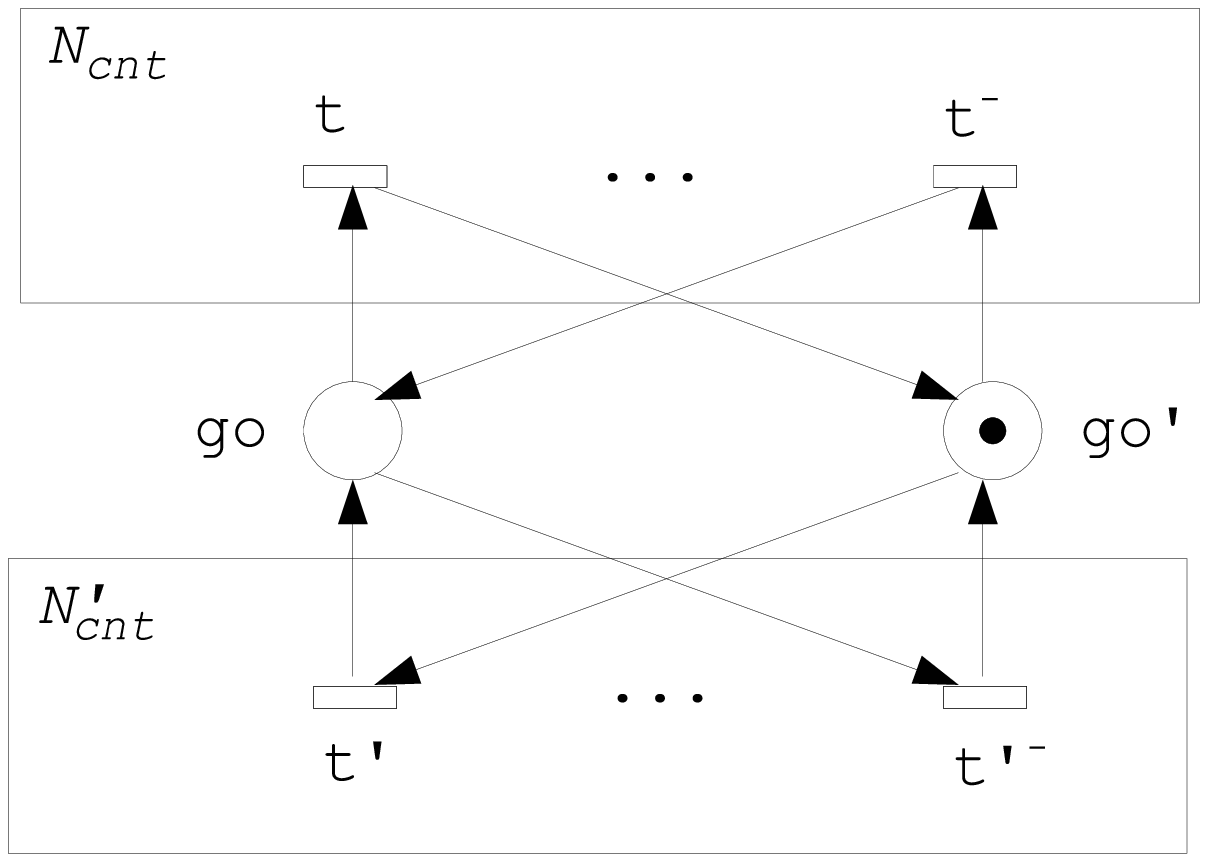}
  \end{center}

  \noindent
  This net has still deficiency 0 since the complexes are just
  enlarged by the places $go$
  or $go'$ and their witnesses remain the same. 



  \smallskip \noindent {\bf Variable modelling.}
  For reasons that will become clear later on, the two counter subnets
  contain $n+3$ bits indexed from $0$ to $n+2$. 
  The bits $1,\ldots,n$ of counter $cnt$ correspond to the value of variables $x_1,\ldots,x_n$.
  Managing the value of variables $y_1,\ldots,y_n$ is done as follows.
  For every variable $y_i$, we add the subnet described below on the
  left (observe that $s_i=r_i^{-}$) and
  modify the two counter subnets as described on the right.

  \begin{center}
    \includegraphics[scale=0.35]{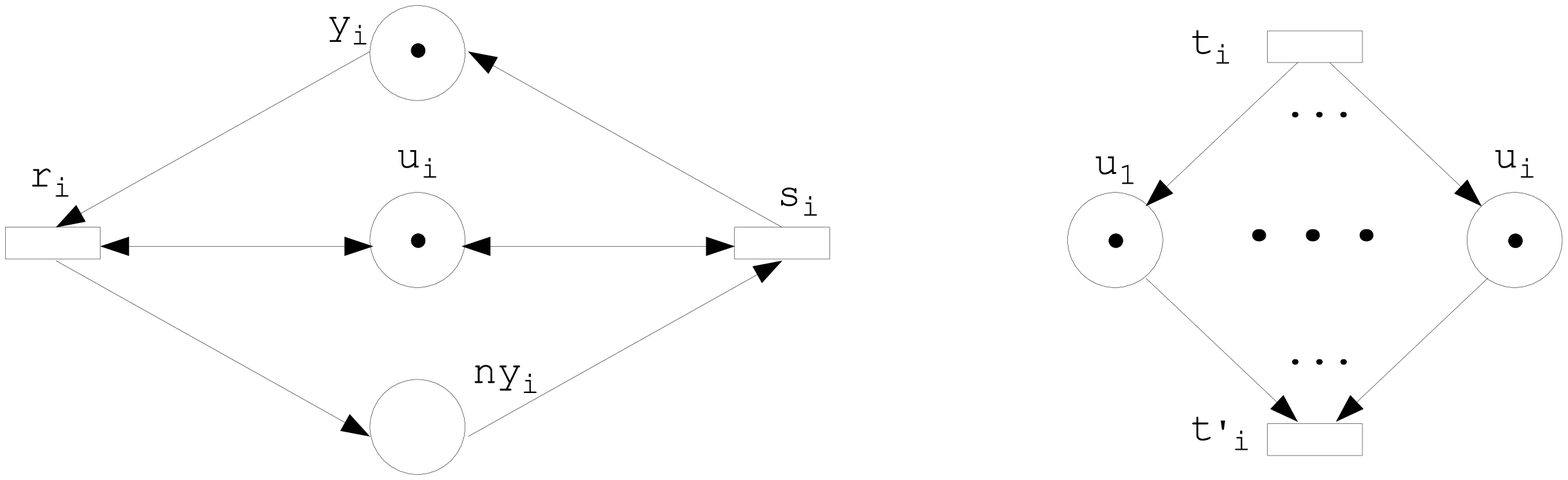}
  \end{center}

  \noindent
  When place $y_i$ (resp. $ny_i$) is marked, this corresponds to interpreting
  variable $y_i$ as {\bf true} (resp. {\bf false}). Changes of the interpretation
  are possible when place $u_i$ is marked. This is the role of the modification
  done on the counter subnet: between a firing of $t_i$ and $t'_i$ places
  $\{u_j\}_{j\leq i}$ are marked. With this construction, we get the
  expected behaviour: the interpretation of a variable $y_i$ can 
  only be modified when the interpretation of a variable $x_j$ with $j \geq i$
  is modified.
  The complexes of the counter subnet are enlarged with places $u_i$
  and their witnesses remain the same since places in the support of these
  witnesses are not modified by transitions $s_i$ and $r_i$. The new complex
  $y_i+u_i$ (resp. $ny_i +u_i$) has for witness $y_i$ (resp. $ny_i$). Thus
  the new net has still deficiency 0.  


  \smallskip \noindent {\bf Modelling the checking of the propositional formula.}
  We now describe the subnet associated with the checking of propositional formula
  $\psi \equiv \bigwedge_{j \leq m} C_j$ where we assume  w.l.o.g.: (1) that every clause 
  $C_j \equiv l_{j,1} \vee l_{j,2} \vee l_{j,3}$ has exactly three literals (i.e. variables
  or negated variables); and (2) that every variable or negated variable occurs at least
  in one clause. The left upper part of Figure~\ref{fig:clausej} shows the Petri net which describes clause $C_j$ of the
  formula $\psi$. Places $\ell_{j, k}$($k = 1, 2, 3$) 
  represent the literals while  places $n\ell_{j, k}$ represent the literal \emph{used as a proof of the clause}, 
  the place $mutex_j$ avoids to choose several proofs of the clause (and thus ensuring safeness), and finally place
  $success_j$ can be marked if and only if the evaluation of the clause
  yields true for the current interpretation and one of its true literal is used as a proof.

  \smallskip \noindent The complexes of this subnet are 
  $mutex_j+\ell_{j,k}$ (resp. $success_j+n\ell_{j,k}$) with
  witness $-n\ell_{j,k}$  (resp. $n\ell_{j,k}$). So the subnet has deficiency
  0.

  \begin{figure}[tbp]
    \begin{center}
      \includegraphics[scale=0.30]{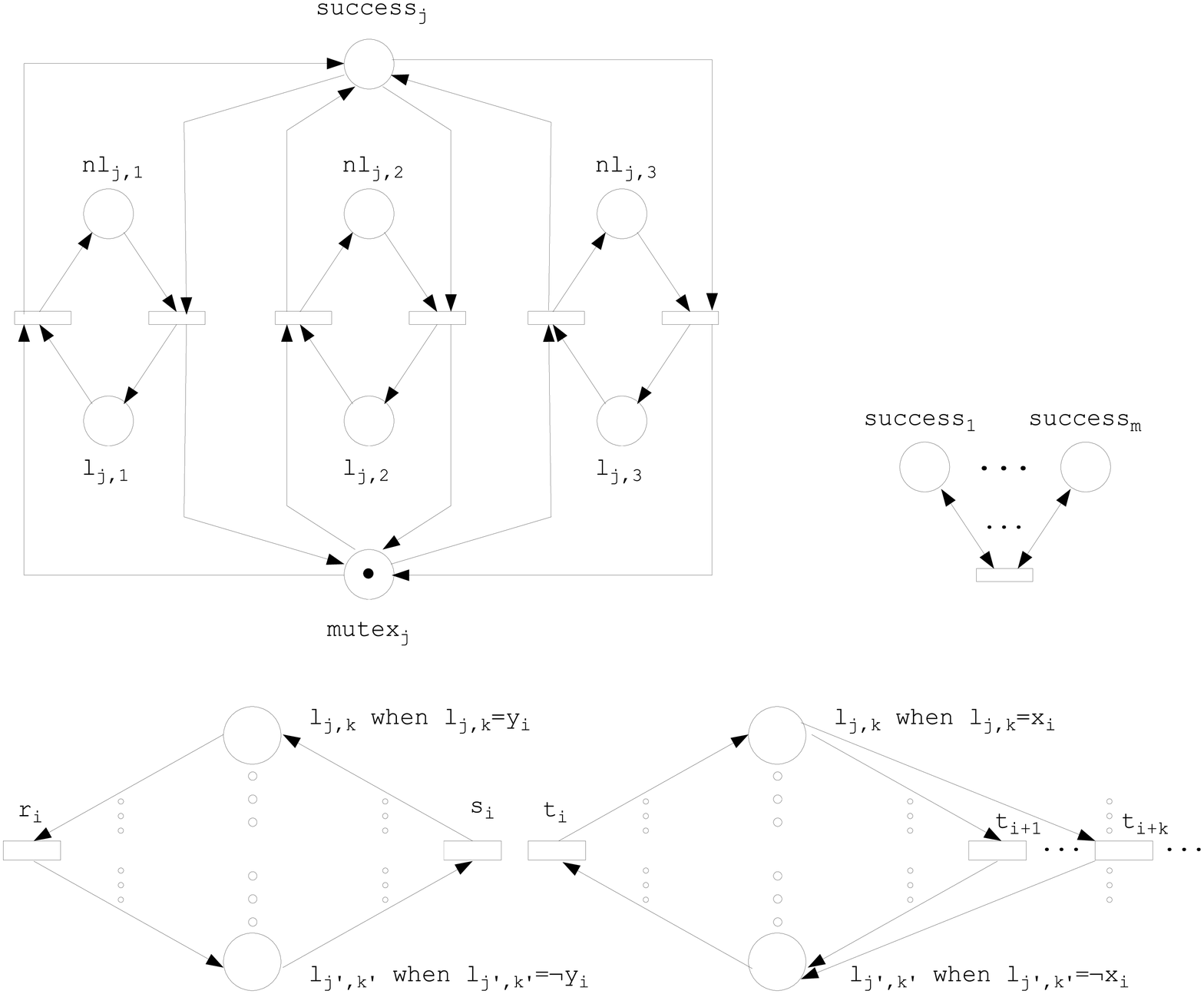}
    \end{center}
    \caption{Clause $C_j$ (left), synchronisation with $t_0$ (right) and with variables (below)}
    \label{fig:clausej}
  \end{figure}

  \smallskip \noindent 
  We now synchronise the clause subnets with the previous subnet in order to obtain
  the final net. Observe that in the previous subnet, transition $t_0$ (and $t'_0$) must occur
  after every interpretation change. This is in fact the role of bit 0 of the counter.
  Thus we constrain its firing by requiring
  the places $success_j$ to be marked as presented in the right upper part of Figure~\ref{fig:clausej}.
  Adding loops simply enlarges the complexes associated with $t_0$ and
  does not modify the incidence matrix. So the net has still deficiency 0.


  \smallskip \noindent 
  It remains to synchronise the value of the variables and the values of the literals
  where the variables occur either positively or negatively. This is done in two steps.
  First $\ell_{j, k}$ is initially marked if the interpretation  
  of the initial marking satisfies $\ell_{j, k}$. Then we synchronize the value
  changes as illustrated in the lower part of Figure~\ref{fig:clausej}. 
  Once again the complexes are enlarged and the witnesses
  are still valid since the places $\ell_{j, k}$ do not belong to the support of any witness.


  \smallskip \noindent  {\bf Choice of the initial and final marking for the net.}
  Let us develop a bit the sequence $\overline{\sigma}_{n+3}$ in the two counter subnet 
  in order to explain the choice of initial marking for this subnet:
  \[\overline{\sigma}_{n+3}=
  \overline{\sigma}_{n+1} t_{n+1} t'_{n+1} \overline{\sigma}_{n+1}
  t_{n+2}t'_{n+2} \overline{\sigma}_{n+1} t_{n+1}t'_{n+1}
  \overline{\sigma}_{n+1}\]

  \smallskip \noindent
  We want to check all the interpretations of $x_i$'s guessing the appropriate values of $y_i$'s (if they exist).
  We have already seen that changing from one interpretation to another one (i.e. a counter incrementation
  or decrementation) allows to perform the allowed updates of $y_i$. However given the initial interpretation of the $x_i$'s
  we need to make an initial guess of all the $y_i$'s. So our initial marking restricted to the counter subnet
  will correspond to the marking reached
  after $\overline{\sigma}_{n+1}t_{n+1}$, i.e. corresponding to $cnt=2^{n+1}$ (i.e. word $010\ldots 0$), 
  $cnt'=2^{n+1}-1$ (i.e. word $001\ldots 1$)
  with in addition places $go'$, $u_i$'s, $mutex_j$'s and $y_i$'s 1-marked; places  $\ell_{j, k}$ are marked 
  according to the initial marking of places $x_i$'s and $y_i$'s
  as explained before. All the other places are unmarked. This explains the role of bit $n+1$.

  \smallskip  \noindent
  Furthermore, if we have successfully checked all the interpretations of
  the $x_i$'s, the counters will have reached the value $2^{n+2}-1$
  (corresponding to a firing sequence obtained from  $t'_{n+1}\overline{\sigma}_{n+1}$
  with possible updates of $y_i$ during change of interpretations). However
  we do not know what is the final guess for the $y_i$'s. So firing
  transition $t_{n+2}$ allows to set the $y_i$'s in such a way that the final marking
  will correspond to $cnt= 2^{n+2}$ (i.e. word $10\ldots 0$), $cnt'= 2^{n+2}-1$ (i.e. word $01\ldots 1$) with in addition places $go'$,
  $u_i$'s  $mutex_j$'s and $y_i$'s 1-marked; places $\ell_{j, k}$  are marked accordingly. 
  All the other places are unmarked. This explains the role of bit $n+2$.

  \smallskip \noindent 
  By construction, the net reaches the final marking iff the formula is
  satisfied. Observe that the checking of clauses can be partially done
  concurrently with the change of interpretation. However as long as,
  in the net, a clause $C_j$ is ``certified'' by a literal $\ell_{j,k}$
  (i.e. marking place $success_j$ and unmarking place $\ell_{j,k}$) 
  the value of the variable associated
  with the literal cannot change, ensuring that when $t_0$ is fired, the marking of
  any place $success_j$ corresponds to the evaluation of clause $C_j$
  with the current interpretation.
\end{proof}

\begin{coro}
  The liveness problem for safe $\pideux$-nets is {\sf PSPACE}-complete. 
\end{coro}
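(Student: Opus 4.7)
The PSPACE upper bound is inherited from the classical result that liveness of safe Petri nets lies in PSPACE~\cite{survey94}, since safe $\pideux$-nets form a subclass. For the matching lower bound, the plan is to reuse the very net $\cN^*$ built in the proof of the preceding theorem and to show that $\cN^*$ is live if and only if the underlying QSAT instance $\varphi$ is true, which yields a polynomial-time reduction from QSAT to liveness of safe $\pideux$-nets.

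The first step is to note that $\cN^*$ is symmetric, so every fired transition can be immediately undone and the reachability graph is strongly connected; hence $\cR(m) = \cR(m_0)$ for every reachable $m$, and liveness collapses to the statement that every transition $t \in T$ is enabled in at least one reachable marking. The key structural observation is then that the top-bit transition $t_{n+2}$ of the main counter is enabled only when that counter holds the value $2^{n+2}-1$, and reaching such a marking from $m_0$ forces $t_0$ to fire once per intermediate interpretation of $x_1,\ldots,x_n$. Since $t_0$ is enabled precisely when all the places $success_j$ carry a token, i.e.\ when the current interpretation of the variables satisfies $\psi$, the marking enabling $t_{n+2}$ is reachable iff $\varphi$ is true.

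Consequently, if $\varphi$ is false then $t_{n+2}$ is never enabled and $\cN^*$ fails to be live. Conversely, if $\varphi$ is true the canonical firing sequence $\overline{\sigma}_{n+3}$ can be executed in $\cN^*$, and along the way every counter transition, every $y_i$-update transition $r_i, s_i$, and every clause-certification transition of the clause subnets is enabled at some reachable marking, which together with the first observation yields liveness. The main (minor) obstacle I foresee is the case-by-case verification that no transition of $\cN^*$ is left \emph{inert} when $\varphi$ is true, i.e.\ that each transition really is fired somewhere along $\overline{\sigma}_{n+3}$ or a trivial local variant of it; should any transition escape this check, one can always pad $\cN^*$ with a self-symmetric dummy loop that fires it without affecting safety, weak reversibility or deficiency-zero, as guaranteed by the synthesis rules of Section~\ref{subsec:synthesis}.
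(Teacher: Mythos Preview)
Your proposal is correct and follows the same approach as the paper: reuse the net from the reachability reduction, exploit that weak reversibility (here, symmetry) makes the reachability graph strongly connected so that liveness reduces to ``every transition is enabled somewhere in $\cR(m_0)$'', and observe that this holds iff $\varphi$ is true. The paper's own proof is a single sentence stating exactly this; your version is a faithful expansion that additionally singles out $t_{n+2}$ as the transition witnessing non-liveness when $\varphi$ is false, which is correct and helpful.

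One caveat: your proposed fallback (``pad $\cN^*$ with a self-symmetric dummy loop'' if some transition turns out not to be exercised) is both unnecessary and unsafe as stated. Unnecessary because, when $\varphi$ is true, the counter can sweep through every interpretation of the $x_i$'s, the $u_i$'s are all marked at the initial marking so every $r_i,s_i$ can fire, and for every clause literal one can reach an interpretation making it true and then un-certify/re-certify to enable the corresponding clause transition; together with symmetry this covers all reverse transitions as well. Unsafe because adding gadgets that make a transition fireable independently of $\varphi$ would decouple liveness from truth of $\varphi$ and destroy the reduction. So drop the padding remark and, if you want to be thorough, replace it by the short case analysis above.
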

\begin{proof}
  Observe that the transitions of the net of the previous proof
  are fireable at least once and so live by reversibility, implied by weak reversibility iff $\varphi$
  is true.
\end{proof}

Let us now consider general (non-safe) Petri nets. 
Reachability and coverability of symmetric nets is {\sf EXPSPACE}-complete~\cite{MM82}.
In~\cite{Haddad05}, it is proved that both problems are {\sf
EXPSPACE}-complete for WR nets
(which include symmetric Petri nets). 
The next proposition establishes the same result for the coverability
of $\pideux$-nets. 

\begin{prop}
  \label{prop:coverability}
  The coverability problem for $\pideux$-nets is {\sf EXPSPACE}-complete. 
\end{prop}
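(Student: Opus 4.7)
The EXPSPACE upper bound is immediate from Rackoff's algorithm for coverability in arbitrary Petri nets, since \pideux-nets are a subclass. The substance lies in EXPSPACE-hardness.

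My plan is to reduce in polynomial time from acceptance for counter machines with counters bounded by $2^{2^n}$, a problem EXPSPACE-hard by Lipton. The strategy mirrors the authors' proof of the previous safe PSPACE-hardness theorem, but replaces its linear counters (which represent only $2^n$ values) with Lipton's doubly-exponential counter construction. The goal is to produce a \pideux-net $\cN^*$ with target marking $m^*$ such that $m^*$ is coverable in $\cN^*$ iff the counter machine accepts.

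Concretely, I would build counters via Lipton's nested macro technique: a ``level-$k$'' counter, which stores values up to $2^{2^k}$, is realized by two level-$(k-1)$ counters together with $O(1)$ control places, giving $O(n)$ places in total. Weak reversibility is enforced, as in the previous proof, by adding a reverse transition $t^-$ for every transition $t$. To prevent these reverses from unravelling the simulation, each macro is equipped with monotonically consumed ``commitment'' places whose marking is demanded by $m^*$; any reverse move that restores a commitment forfeits coverability of $m^*$ for the remainder of the run. Deficiency zero is enforced by the same technique as in the safe case: each Lipton macro is designed so that every complex it creates contains a distinguished control place (a macro-state or counter-bit place) that does not appear in any other complex, so this place serves as its own witness. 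Synchronisation between macros is carried out via loops on already-existing transitions, which enlarge complexes uniformly but leave the incidence matrix unchanged, preserving deficiency zero --- exactly the device used repeatedly in the safe-case construction.

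The principal obstacle is the recursive bookkeeping: at each nesting level, the control-place scheme must preserve uniqueness of the distinguishing place across all complexes introduced, even as sub-counter macros are reused and synchronised at higher levels. I would argue this by induction on the nesting level, maintaining the joint invariant that (i) the level-$k$ macro is symmetric, (ii) each of its complexes has a witness via a dedicated control place, and (iii) composition with higher-level macros preserves both properties, resolving conflicts by complex enlargement with fresh control places. The resulting reduction is polynomial in $n$, and the target marking is chosen to require tokens on the commitment places of an accepting run; correctness of Lipton's simulation together with the WR- and deficiency-preserving decorations gives that $m^*$ is coverable in $\cN^*$ iff the machine accepts, yielding EXPSPACE-hardness.
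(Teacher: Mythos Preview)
Your high-level strategy coincides with the paper's: start from the Lipton/Mayr--Meyer doubly-exponential counter simulation (which is already symmetric, hence $\piun$), then decorate it to obtain deficiency zero while preserving the simulating firing sequence. The divergence is entirely in \emph{how} deficiency zero is achieved, and this is where your proposal has a gap.

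The paper does not try to find a distinguished self-witnessing place inside each existing complex. Instead it adds, for every transition pair $\{t,t^-\}$, one \emph{fresh} place $p_t$ that is an input of exactly one member of the pair and an output of the other. Then $p_t$ and $-p_t$ are witnesses for the two complexes ${}^\bullet t$ and $t^\bullet$, and deficiency zero is immediate. The only real work is choosing, for each pair, which transition gets $p_t$ as output, so that the canonical simulating sequence remains fireable; this requires a short inductive analysis of how the subsequences $\sigma_{i,k}$ and $\sigma_{i,k}^{-}$ interleave across nesting levels.

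Your alternative---arranging that every complex already contains a control place occurring in no other complex---does not hold in the Mayr--Meyer net as written, and you do not show how to redesign the macros to make it hold. Concretely, around a place such as $Q_{2,i}$ the construction produces several distinct complexes (e.g.\ $Q_{2,i}+c_2+s$, $Q_{2,i}+c_2+s+b_4$, $Q_{2,i}+c_2+f$, $Q_{2,i}+c_2+f+b_2$, $Q_{2,i}+c_2+f+b_3+B_i$) none of which has a place unique to it. You flag this as the ``principal obstacle'' and gesture at an induction, but the induction is exactly the missing content. The paper's fresh-place trick bypasses the obstacle entirely.

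Finally, your ``commitment places'' are superfluous. The Mayr--Meyer argument already establishes (and the paper simply cites it) that the reverse transitions cannot help to cover the target place $q_f$; no additional monotone gadget is needed to neutralise them.
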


\begin{proof}
  \noindent
  Since we already know that coverability for $\piun$-nets belongs to
  {\sf EXPSPACE}~\cite{Haddad05}, it remains to prove that 
  coverability for $\pideux$-nets is {\sf EXPSPACE}-hard. 
  In order to establish this result, we slightly adapt the reduction
  given in~\cite{MM82}  of the termination problem
  for a three counter machine where the values of counters are
  bounded by $e_n \equiv 2^{2^n}$ with $n$ the size of 
  (a representation of) the machine. Thus we first
  depict the original reduction and then we
  describe our modifications and explain why the reduction
  is still valid.

  \smallskip \noindent
  For a uniform presentation of the proof we assume w.l.o.g. that
  the machine has four counters (these more powerful machines
  include the original ones). The key ingredient is 
  the concise management of counters
  and more precisely the zero test. Indeed one models
  a counter $c_i$ with $i \in \{1,2,3,4\}$ by two complementary
  places $A_{i,n}$ and $B_{i,n}$. When the counter has
  value $x$, place $A_{i,n}$ contains $x$ tokens and 
  place $B_{i,n}$ contains $e_n-x$ tokens. Testing (and decrementing) that the counter
  $c_i$ is greater than $0$ is done as usual by an arc with weight 1 starting
  from $A_{i,n}$. However this approach does not work for the zero test
  as it would require a (double) arc from $B_{i,n}$ with weight $e_n$ thus
  implying a net representation of size at least $2^n$ which would not be valid.

  \smallskip \noindent
  Thus the zero test is managed by an inductive construction (w.r.t. $n$)
  of  ``nested'' subnets  $\mathcal N_k$ 
  leading to a subnet (the union of these subnets) with size in $O(n)$. 
  Let us describe this
  construction. The main places are:  $B_{i,k}$ with $i \in \{1,2,3,4\}, 0\leq k \leq n$
  containing at most $e_k$ tokens and safe places $C_{i,k}$, $F_k$ and $S_k$.
  The inductive properties are the following ones:
  \begin{itemize}

    \item In subnet $\bigcup_ {l \leq k} \mathcal N_l$, starting from marking $S_k+C_{i,k}$
      one may reach marking $F_k+C_{i,k}+e_k B_{i,k}$. 

    \item Furthermore any marking reachable from 
      $S_k+C_{i,k}+\alpha_{i_1} B_{i_1,k} +\alpha_{i_2} B_{i_2,k}+\alpha_{i_3} B_{i_3,k}$ 
      ($\{i_1,i_2,i_3\}=\{1,2,3,4\}\setminus \{i\}$)
      with $S_k$
      or $F_k$ marked is either $S_k+C_{i,k}+\alpha_{i_1} B_{i_1,k} +\alpha_{i_2} B_{i_2,k}+\alpha_{i_3} B_{i_3,k}$
      or $F_k+C_{i,k}+e_k B_{i,k}+\alpha_{i_1} B_{i_1,k} +\alpha_{i_2} B_{i_2,k}+\alpha_{i_3} B_{i_3,k}$. 

  \end{itemize}

  \smallskip \noindent {\bf Basic case $k=0$.} This case is straightforward:
  $\mathcal N_0$ consists in four transitions when transition corresponding to $i$ is figured below.
  \begin{center}
    \includegraphics[scale=0.4]{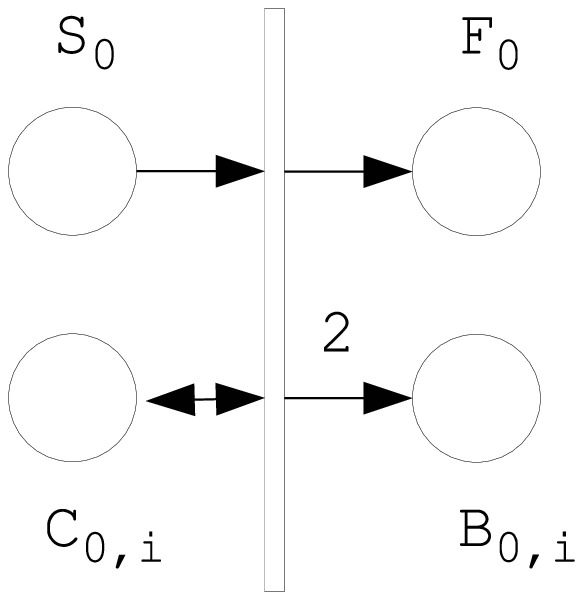}
  \end{center}

  \smallskip \noindent {\bf Inductive case.}
  Assume that the inductive properties holds for $k$. The net corresponding
  to $\mathcal N_{k+1}$ is described below with the following convention:
  $S$ corresponds to $S_{k+1}$ and $s$ corresponds to $S_{k}$. The
  same convention applies to all names. Furthermore for sake of readability
  we have duplicated some places in the figure.

  \begin{center}
    \includegraphics[scale=0.4]{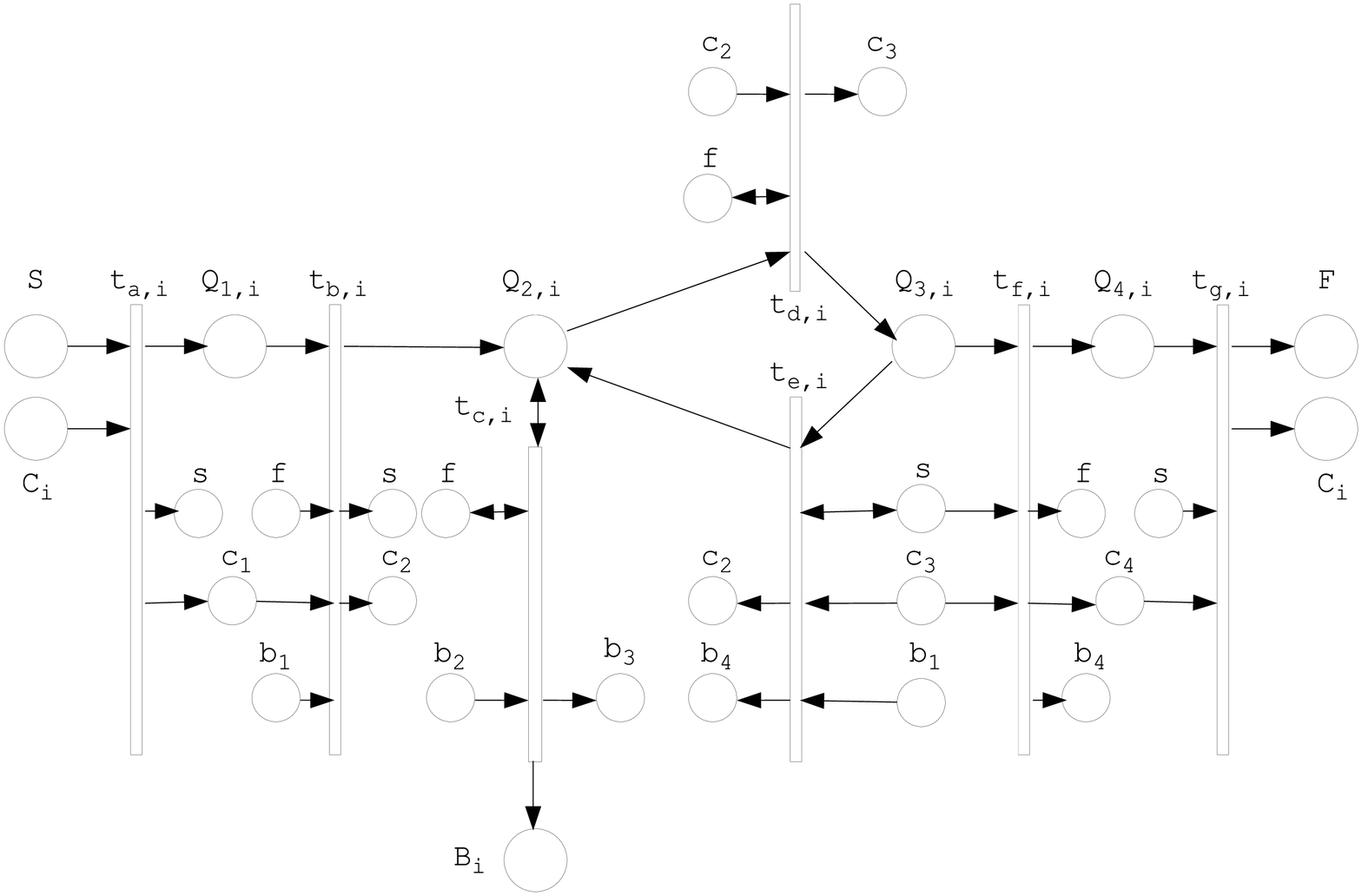}
  \end{center}

  \smallskip \noindent We first exhibit the firing sequence $\sigma_{i,k+1}$ from
  $S+C_i$ to $F+C_i+e_{k+1}B_i$:
  \begin{itemize}

    \item $S+C_i \xrightarrow{t_{a,i}} Q_{1,i}+s + c_1$

    \item $Q_{1,i}+s + c_1 \xrightarrow{\sigma_{1,k}} Q_{1,i}+ f + c_1+e_kb_1$
      using the inductive hypothesis

    \item $Q_{1,i}+ f + c_1+e_kb_1 \xrightarrow{t_{b,i}} Q_{2,i}+ s + c_2+(e_k-1)b_1$\\
      We now describe a firing sequence from\\ 
      $Q_{2,i}+ s + c_2+(e_k-j)b_1+(j-1)e_kB_i+(j-1)b_4$ to\\ 
      $Q_{2,i}+ s + c_2+(e_k-j-1)b_1+je_kB_i+jb_4$ for $1\leq j \leq e_k-1$
      \begin{footnotesize}
	\begin{itemize}

	  \item $Q_{2,i}+ s + c_2+(e_k-j)b_1+(j-1)B_i +(j-1)b_4$\\
	    $\xrightarrow{\sigma_{2,k}} Q_{2,i}+ f + c_2+(e_k-j)b_1+e_kb_2+(j-1)b_4$
	    using the inductive hypothesis

	  \item $Q_{2,i}+ f + c_2+(e_k-j)b_1+e_kb_2+(j-1)b_4$\\
	    $ \xrightarrow{(t_{c,i})^{e_k}} Q_{2,i}+ f + c_2+(e_k-j)b_1+e_kb_3+(j+1)e_kB_i+(j-1)b_4$

	  \item $Q_{2,i}+ f + c_2+(e_k-j)b_1+e_kb_3+(j+1)e_kB_i+(j-1)b_4$\\ 
	    $\xrightarrow{t_{d,i}} Q_{3,i}+ f + c_3+(e_k-j)b_1+e_kb_3+(j+1)e_kB_i+(j-1)b_4$

	  \item $Q_{3,i}+ f + c_3+(e_k-1)b_1+e_kb_3+(j+1)e_kB_i+(j-1)b_4$\\ 
	    $\xrightarrow{\sigma_{3,k}^-} Q_{3,i}+ s + c_3+(e_k-j)b_1+(j+1)e_kB_i+(j-1)b_4$

	  \item $Q_{3,i}+ s + c_3+(e_k-j)b_1+(j+1)e_kB_i +(j-1)b_4$\\
	    $\xrightarrow{t_{e,i}} Q_{2,i}+ s + c_2+(e_k-j-1)b_1+(j+1)e_kB_i+jb_4$

	\end{itemize}
      \end{footnotesize}

    \item After the previous iterations, when reaching $Q_{2,i}+ s + c_2+(e_k-1)e_kB_i+(e_k-1)b_4$,
      we perform all the steps of the iteration except the last one reaching
      $Q_{3,i}+ s + c_3+(e_k)^2B_i+(e_k-1)b_4=Q_{3,i}+ s + c_3+e_{k+1}B_i+(e_k-1)b_4$.        

    \item $Q_{3,i}+ s + c_3+e_{k+1}B_i+(e_k-1)b_4 \xrightarrow{t_{f,i}} Q_{4,i}+ f + c_4+e_{k+1}B_i+e_kb_4$

    \item $Q_{4,i}+ f + c_4+e_{k+1}B_i+e_kb_4 \xrightarrow{\sigma_{4,k}^-} Q_{4,i}+ s + c_4 +e_{k+1}B_i$

    \item $Q_{4,i}+ s + c_4 +e_{k+1}B_i \xrightarrow{t_{g,i}} F+ C_i +e_{k+1}B_i$

  \end{itemize}

  \smallskip \noindent
  Let us now prove that any marking reachable from 
  $S_k+C_{i,k}+\alpha_{i_1} B_{i_1,k} +\alpha_{i_2} B_{i_2,k}+\alpha_{i_3} B_{i_3,k}$ 
  ($\{i_1,i_2,i_3\}=\{1,2,3,4\}\setminus \{i\}$)
  with $S_k$
  or $F_k$ marked is either $S_k+C_{i,k}+\alpha_{i_1} B_{i_1,k} +\alpha_{i_2} B_{i_2,k}+\alpha_{i_3} B_{i_3,k}$
  or $F_k+C_{i,k}+e_k B_{i,k}+\alpha_{i_1} B_{i_1,k} +\alpha_{i_2} B_{i_2,k}+\alpha_{i_3} B_{i_3,k}$. 
  We first observe on the net above that the tokens contained in a place $B_{j,k}$ are frozen
  except when place $C_{j,k}$ is marked. Thus w.l.o.g. we assume that $\alpha_1= \alpha_2=\alpha_3=0$.

  \smallskip \noindent
  So it remains to show that when deviating from the exhibited sequence one cannot reach a marking
  with $S_k$ marked different from the initial marking or a marking with $F_k$ marked different from the final
  marking. This is proven by a case study (see~\cite{MM82}). Here we just handle one case since all
  cases are similar. When reaching marking $Q_{2,i}+ f + c_2+(e-j)b_1+e_kb_3+(j+1)e_kB_i+(j-1)b_4$
  with $0\leq e<e_k$, one can fire transition $t_{d,i}$ reaching marking 
  $Q_{3,i}+ f + c_3+(e-j)b_1+eb_3+(j+1)e_kB_i+(j-1)b_4$. From this marking due to the inductive 
  hypothesis, it is not possible to mark place $s$ in subnet $\mathcal N_{k-1}$. Thus
  transitions $t_{e,i}$ and $t_{f,i}$ are not fireable. So the only possible way to  ``progress'' in 
  $\mathcal N_k$ consists to fire the reverse transition $t_{d,i}^-$ coming back to the marking
  $Q_{2,i}+ f + c_2+(e-j)b_1+e_kb_3+(j+1)e_kB_i+(j-1)b_4$.

  \smallskip \noindent
  The subnet below describes the initial behaviour of the simulating net consisting in filling
  places $B_{i,n}$ (with $i \in \{1,2,3,4\}$) with $e_n$ tokens and putting a token in $q_0$
  the place corresponding to the initial state of the counter machine.

  \begin{center}
    \includegraphics[scale=0.4]{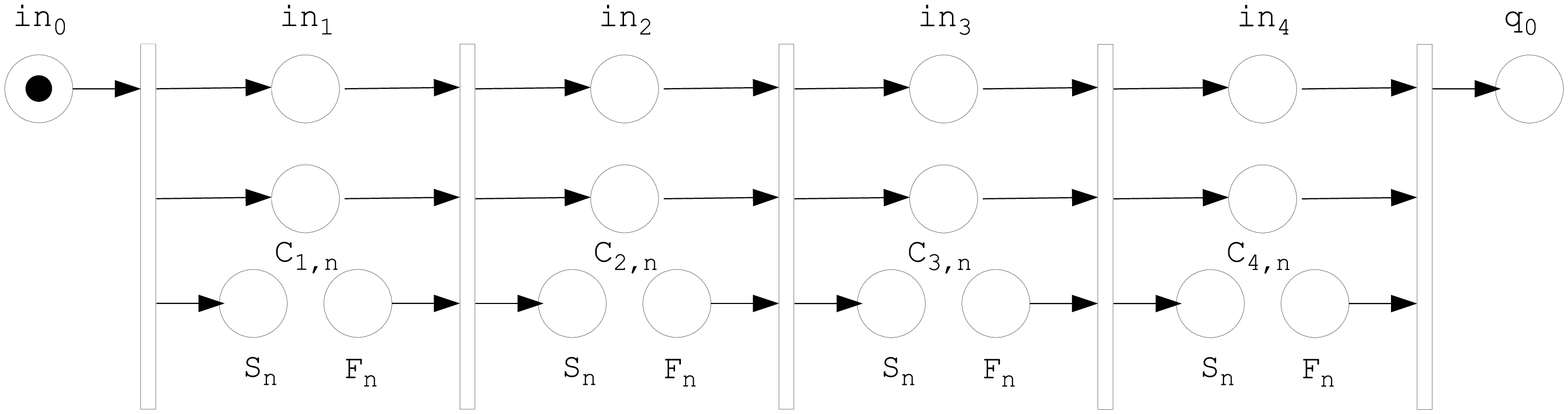}
  \end{center}

  \smallskip \noindent
  The simulation of an instruction $q: {\sf if} c_i> 0 {\sf ~then~ } c_i\!-\!-; {\sf ~goto~ } q' {\sf ~else~goto~ } q''$
  is now simply performed by the following subnet. The validity of the zero test is ensured by
  the assertions about the subnet  $\bigcup_ {l \leq n} \mathcal N_l$. Furthermore
  it can be proved that reverse transitions of the ones simulating transitions
  cannot help to mark place $q_f$ where $q_f$ is the final state of the counter
  machine (see~\cite{MM82} or proposition 12 in~\cite{Haddad05} for a simple proof of this claim).

  \begin{center}
    \includegraphics[scale=0.4]{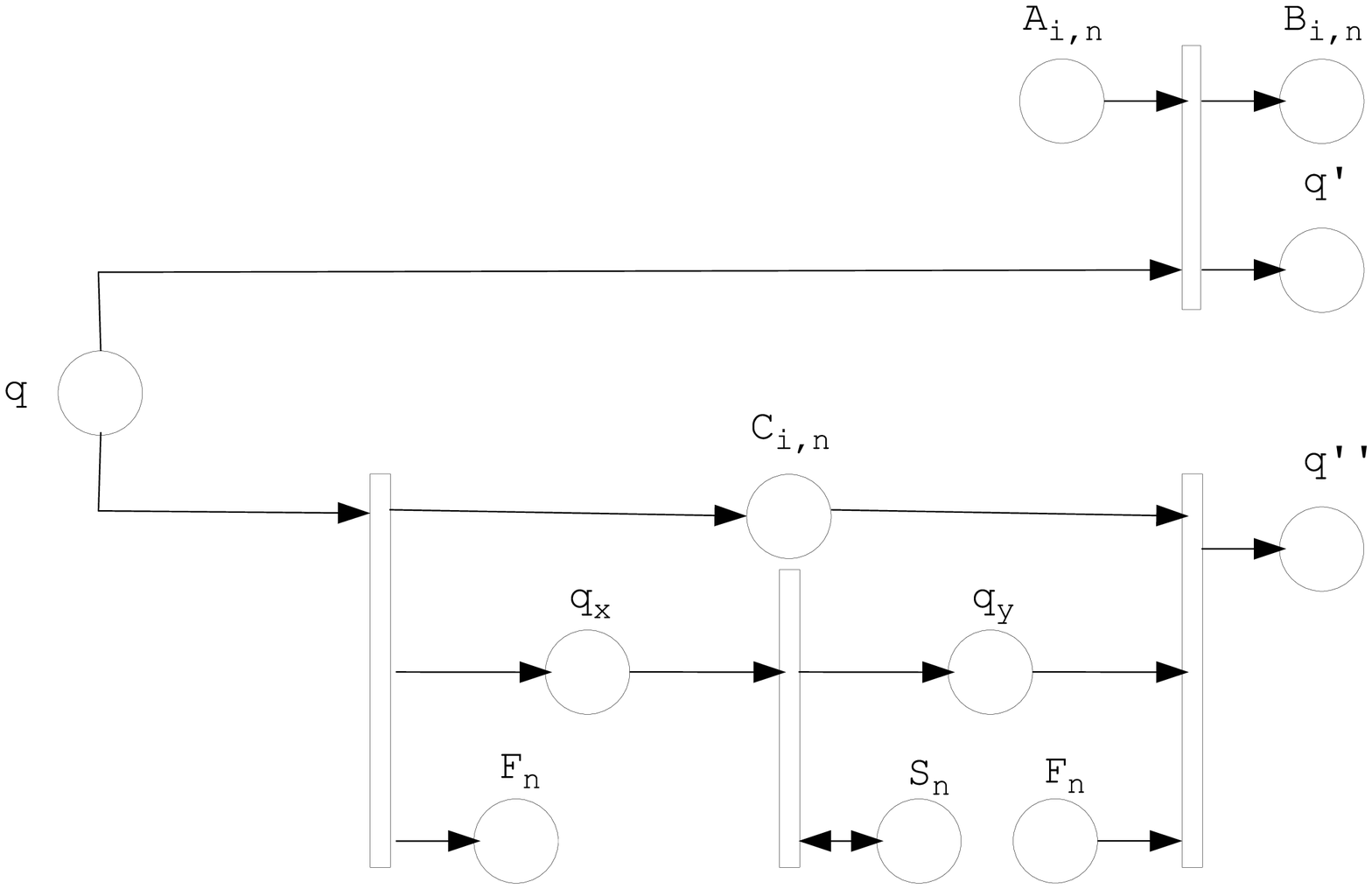}
  \end{center}

  \smallskip \noindent
  We are now ready to explain the modifications that we bring to the simulating net.
  For every pair of transitions $t$ and $t^-$, we add a place $p_t$ input of
  one of the transitions and output of the other. Thus by construction $p_t$
  and $-p_t$ are witnesses for $t$ and $t^-$. More precisely
  $p_t$ is the witness of the
  transition for which it is an output and $-p_t$ is the output of the other
  transition. Let us examine how these additional places modify the behaviour
  of the net. Since there is no new transition, firing sequences of the enlarged net
  are firing sequences of the original one. Thus we only have to care whether
  the simulation firing sequence is still a firing sequence. 

  \smallskip \noindent
  For the transitions not belonging to the subnet  $\bigcup_ {l \leq n} \mathcal N_l$, $p_t$
  is an output of $t$. As the reversed transitions of these transitions do not occur in
  the simulating sequence, such places cannot disable a transition in the simulating
  sequence. We now observe that the sequences $\sigma_{i,n}$ and $\sigma_{i,n}^-$
  alternate in the simulating sequence,
  always starting by $\sigma_{i,n}$. Thus in subnet $\mathcal N_n$, place $p_t$ is the output of the transition $t$.
  Now observe that in sequence $\sigma_{i,n}$ there is an occurrence of sequence $\sigma_{1,n-1}$,
  $e_{n-1}$ occurrences of $\sigma_{2,n-1}$ followed by $\sigma_{3,n-1}^-$ and then an occurrence
  of $\sigma_{4,n-1}$. Thus in subnet $\mathcal N_{n-1}$, place $p_t$ is the output of the transition $t$ (resp. $t^-$)
  when $t$ is $t_{u,i}$ with $u \in \{a,b,c,d,e,f,g\}$ and $i \in \{1,2\}$
  (resp. $i \in \{3,4\}$)using notations of the figure. The same pattern of
  occurrences
  also happens at lower levels. So more generally, in subnet $\mathcal N_{k}$
  with $k<n$, place $p_t$ is the output of the transition $t$ (resp. $t^-$)
  when $t$ is $t_{u,i}$ with $u \in \{a,b,c,d,e,f,g\}$ and $i \in \{1,2\}$
  (resp. $i \in \{3,4\}$).
  \smallskip \noindent
  With this choice, the simulating sequence is still a firing sequence in the enlarged net and
  the marking to be covered is $q_f$.
\end{proof}

\medskip

The complexity of reachability for $\pideux$-nets remains an open issue (indeed
the proof of {\sf EXPSPACE}-hardness does not work for reachability).

\section{The subclass of \pitrois-nets}
\label{section:normalization}
In this section, we introduce $\pitrois$-nets, a
subclass of product-form Petri nets for which the normalising constant can be
efficiently computed. 
The first subsection defines the subclass; the second one studies its
structural properties and the third one is devoted to the computation
of the normalising constant.


\subsection{Definition and properties}\label{subsec:construction}

\begin{defi}[Ordered \piun-net] 
Consider an integer $n\geq 2$. An $n$-level
  {\em ordered \piun-net} is a \piun-net $\cN = (P, T, W^-, W^+)$ such that:

  \smallskip

  \begin{enumerate}[nolistsep]
    \item $P = \bigsqcup\limits_{1 \le i \le n} P_i\,,~ T =  \bigsqcup\limits_{1 \le i \le n}
      T_i$ and $P_i \neq \emptyset$ for all $1 \le i \le n$,
    \item $\cM_i = (P_i, T_i, W^-_{|P_i \times T_i},  W^+_{|P_i \times T_i})$ is a
      strongly connected state machine,
    \item $\forall 1 \le i \le n\,, \forall t \in T_i\,, \forall p \in P$,
      ${}^\bullet t(p) > 0$ implies $p \in P_i$ or $p \in P_{i - 1}$ ($P_0 = \emptyset$),
    \item $\forall 2 \le i \le n\,, \exists t \in T_i\,, \exists p \in
      P_{i - 1}$ s.t. ${}^\bullet t(p) > 0$,
    \item $\forall 1 \le i \le n\,, \forall t, t' \in T_i$, $({}^\bullet t \cap
      {}^\bullet t')  \cap P_i \neq \emptyset$ implies ${}^\bullet t = {}^\bullet t'$.
  \end{enumerate}

  \smallskip
  
  We call $\cM_i$ the level $i$ state machine. The elements of $P_i$ (resp.
  $T_i$) are level $i$ places (resp. transitions).
  The complexes ${}^\bullet t$ with 
  $t \in T_i$ are level $i$ complexes.
\end{defi}

By weak reversibility, 
the constraints 3, 4, and 5 also apply to the output bags $t^\bullet$.
An {\em ordered \piun-net} is a sequence of strongly connected state machines.
Connections can only be made between a level
$i$ transition and a level $(i - 1)$ place (points 1, 2, 3). By
construction, an {\em 
ordered \piun-net} is connected (point 4). For $i > 1$, each level $i$ place
belongs to one and only one level $i$ complex (point 5). 
An example of ordered  \piun-net can be found on  figure~\ref{fig:pi3}.

\begin{lemm}\label{lem:Nreactiongraph}
  The reaction net of $\cN$
  is isomorphic to the disjoint union of
  state machines $\cM_i$.
  Consequently, a $T$-semi-flow of $\cM_i$ is also a    
  $T$-semi-flow of $\cN$.
  If a transition of $T_i$ is enabled by a reachable marking then every
  transition of $T_i$ is live.
\end{lemm}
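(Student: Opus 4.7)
The plan is to handle the three assertions in order.

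For the first assertion (reaction net $\cA \cong \sqcup_i \cM_i$), I would define a map $\phi$ sending each level $i$ complex to the unique level $i$ place in its support. This is well defined: since $\cM_i$ is a state machine, exactly one $P_i$-place appears in ${}^\bullet t$ (resp.~$t^\bullet$) for $t \in T_i$, and condition~5 (with its output analogue implied by WR, as noted just after the definition) forces two $T_i$-transitions sharing a $P_i$-input (or output) place to have identical input (resp.~output) complexes. Surjectivity onto each $P_i$ follows from the strong connectedness of $\cM_i$. Under $\phi$, the arc $({}^\bullet t, t^\bullet)$ of $\cA$, for $t \in T_i$, becomes the unique arc of $t$ in $\cM_i$, yielding the isomorphism.

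For the second assertion, let $M$ be the $P \times \cC$ matrix whose $c$-th column records the place multiplicities of the complex $c$, and let $W_\cA$ be the incidence matrix of $\cA$. A direct computation yields $W = M \cdot W_\cA$, so every T-semi-flow of $\cA$ is automatically a T-semi-flow of $\cN$. By the first assertion, the $\cM_i$ are the disjoint components of $\cA$; hence a T-semi-flow of $\cM_i$ extended by zeros outside $T_i$ is a T-semi-flow of $\cA$, and therefore of $\cN$.

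For the third assertion, I would first establish a local claim: if $m$ enables $t_0 \in T_i$, then from $m$ one reaches a marking enabling any prescribed $t' \in T_i$. Using strong connectedness of $\cM_i$, pick a path $s_1, \ldots, s_{j-1}$ in $\cM_i$ from $p'(t_0)$ to $p(t')$, and write $c_p$ for the level $i$ complex attached to $p \in P_i$. An easy induction shows that after firing $t_0 s_1 \cdots s_k$ the marking equals $m - c_{p(t_0)} + c_{p'(s_k)}$, which dominates ${}^\bullet s_{k+1} = c_{p'(s_k)}$ because $m \ge c_{p(t_0)}$; hence $s_{k+1}$ is enabled, and at the end of $t_0 s_1 \cdots s_{j-1}$ the marking enables $t'$.

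The key additional ingredient, and where I expect the main difficulty, is that the reachability graph of $(\cN, m_0)$ is strongly connected. For each arc $m \xrightarrow{t} m'$ with $t \in T_i$, weak reversibility supplies a directed cycle $t, s_1, \ldots, s_k$ in the $\cM_i$-component of ${}^\bullet t$ in the reaction graph, and the same enabling computation as above shows that the tail $s_1 \cdots s_k$ is fireable from $m'$ and returns the marking to $m$ (the Parikh vector of the full cycle being a T-semi-flow of $\cM_i$, hence of $\cN$ by the second assertion). Thus every arc lies on a cycle. Combining: from any reachable marking, strong connectedness yields a path to the marking enabling $t_0$, and the local claim then yields a continuation to a marking enabling $t'$, proving liveness. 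Once the cycle-firing reversibility trick is in hand, everything reduces to repeating a single enabling computation.
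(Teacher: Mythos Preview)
Your proposal is correct and follows essentially the same route as the paper. The isomorphism in the first assertion is the same map (complex $\mapsto$ its unique $P_i$-place), and your third-assertion argument---fire along a path in $\cM_i$, observing that the marking after each step has the form $m - c_{p(t_0)} + c_{p'(s_k)} \ge c_{p'(s_k)}$, and combine with reversibility of the reachability graph (every arc in a directed cycle via WR)---is exactly the content the paper compresses into a single sentence. The one genuine difference is in the second assertion: the paper argues combinatorially (T-semi-flows of a state machine are circuits, which map to circuits of the reaction graph, which are T-semi-flows of $\cN$), whereas you use the factorisation $W = M W_{\cA}$; your approach is cleaner and handles arbitrary T-semi-flows uniformly, while the paper's version makes the geometric picture more visible.
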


\begin{proof}
  Consider the mapping $f$ which maps each complex
  $t^\bullet$, $t \in T_i$, to $p$ the output place of $t$ in $P_i$. By
  construction of ordered \piun-nets, $f$ is a bijection from $\cC$ to
  $P$. Moreover, each arc $c_1 \rightarrow c_2$ of the reaction graph
  corresponds to the transition $t = f(c_1)^\bullet = { }^\bullet{f(c_2)}$.
  This proves the first point of the lemma.\\
  To prove the second point, recall that
  for a state machine, the $T$-semi-flows correspond to circuits
  of the Petri net graph. From this and from the first point, a $T$-semi-flow
  of $\cM_i$ defines a circuit of the reaction graph of $\cN$, which yields a
  $T$-semi-flow of $\cN$.\\
  The set of transitions $T_i$ is the set of transitions occurring in a component
  of the reaction graph. The third point follows.
\end{proof}

An ordered \piun-net may be interpreted as a multi-level system. The transitions
represent jobs or events while the tokens in the places represent
resources or constraints. A level $i$ job requires resources from level
$(i - 1)$ and relocates these resources upon completion. On the contrary, events
occurring in level $(i - 1)$ may make some resources unavailable, hence
interrupting activities in level $i$. The dependency of an activity on the next
level is measured by {\em potentials}, defined as follows.

\begin{defi}[Interface, potential]
  A place $p \in P_i$, $1 \leq i \leq n - 1$, is an {\em interface place} if $p \in
  t^{\bullet}$ for some $t \in T_{i + 1}$. For 
a place $p \in P_i$, $2 \leq i \leq n$,
and a place $q \in P_{i - 1}$, set:
  \[
  pot(p, q) = 
  \begin{cases}
    t^{\bullet}(q) & \text{ if } p \text{ and } q \text{ have a common input
    transition } t \in T_i\\
    0 & \text{otherwise.}
  \end{cases}
  \]
  The {\em potential} of a place $p \in P_{i}$, $2 \leq i$, is
  defined by:
  \[
  pot(p) = \sum_{q \in P_{i - 1}} pot(p, q)\,.
  \]
  By convention, $pot(p) = 0$ for all $p \in P_1\,.$
\end{defi}

By the definition of ordered \piun-nets, the quantity $t^{\bullet}(q)$ does not depend 
on the choice of
$t$, so the potential is well-defined. Indeed,
by weak reversibility, the constraint 5 also applies to the output bags $t^\bullet$.

\smallskip

\noindent{\bf Example.} The Petri net in Figure \ref{fig:pi3} is a 3-level
ordered \piun-net. The potentials are written in parentheses. To
keep the figure readable, the arcs between the place $p_1$ and the level 2
transitions are not shown.

\begin{figure}[H]
\centering
\begin{mfpic}{-140}{140}{-40}{40}
\tlabelsep{2pt}

\circle{(-60, 20), 4}
\tlabel[cl](-56, 20){{\scriptsize $p_3(2)$}}
\circle{(-60, -20), 4}
\tlabel[cl](-56, -20){{\scriptsize $q_3(1)$}}
\circle{(-90, 0), 4}
\tlabel[cr](-94, 0){{\scriptsize $r_3(0)$}}

\circle{(0, 20), 4}
\tlabel[cl](4, 20){{\scriptsize $p_2(2)$}}
\circle{(0, -20), 4}
\tlabel[cl](4, -20){{\scriptsize $q_2(2)$}}
\circle{(30, 0), 4}
\tlabel[cl](34, 0){{\scriptsize $r_2(1)$}}

\circle{(75, 0), 4}
\tlabel[cl](79, 0){{\scriptsize $p_1$}}

\shiftpath{(-60, 40)}\rect{(-5, -1), (5, 1)}
\shiftpath{(-60, 0)}\rect{(-5, -1), (5, 1)}
\shiftpath{(-60, -40)}\rect{(-5, -1), (5, 1)}

\shiftpath{(0, 40)}\rect{(-5, -1), (5, 1)}
\shiftpath{(0, 0)}\rect{(-5, -1), (5, 1)}
\shiftpath{(0, -40)}\rect{(-5, -1), (5, 1)}

\arrow\polyline{(-60, 38), (-60, 25)}
\arrow\shiftpath{(0, -23)}\polyline{(-60, 38), (-60, 25)}
\arrow\shiftpath{(0, -40)}\polyline{(-60, 38), (-60, 25)}
\arrow\shiftpath{(0, -63)}\polyline{(-60, 38), (-60, 25)}

\arrow\arc[s]{(-65, -40), (-90, -5), -60}
\arrow\arc[s]{(-90, 5), (-65, 40), -60}

\arrow\shiftpath{(60, 0)}\polyline{(-60, 38), (-60, 25)}
\arrow\shiftpath{(60, 0)}\shiftpath{(0, -23)}\polyline{(-60, 38), (-60, 25)}
\arrow\shiftpath{(60, 0)}\shiftpath{(0, -40)}\polyline{(-60, 38), (-60, 25)}
\arrow\shiftpath{(60, 0)}\shiftpath{(0, -63)}\polyline{(-60, 38), (-60, 25)}

\arrow\arc[s]{(30, 5), (5, 40), 60}
\arrow\arc[s]{(5, -40), (30, -5), 60}

\arrow\polyline{(-55, 38), (-5, 21)}
\tlabel[bc](-40, 34){{\scriptsize $2$}}
\arrow\polyline{(-5, 19), (-55, 2)}
\tlabel[tc](-20, 11){{\scriptsize $2$}}

\arrow\shiftpath{(0, -40)}\polyline{(-55, 38), (-5, 21)}
\arrow\shiftpath{(0, -40)}\polyline{(-5, 19), (-55, 2)}


\dashed\polyline{(-30, 45), (-30, -45)}
\dashed\polyline{(60, 45), (60, -45)}

\tlabel[bc](-90, 40){{\scriptsize level 3}}
\tlabel[bc](30, 40){{\scriptsize level 2}}
\tlabel[bc](75, 40){{\scriptsize level 1}}

\point[1.5pt]{(-60, 20), (-60, -20), (-90, 0), (73.5, 1), (76.5, 1), (73.5, -1),
(76.5, -1)}

\end{mfpic}
\caption{Ordered \piun-net.}
\label{fig:pi3}
\end{figure}

\begin{defi}[Marking witness]\label{def:effectivemarking}
  The {\em marking witness} of a marking $m$, denoted by $\widetilde m$, is
  defined as follows. For all $i \le n$ and $p \in P_i$,
\begin{small}
  \begin{equation}
    \label{eq:effectivemarking}
    \widetilde m(p) = m(p) +
    \sum_{j = 1}^{n - i}\bigl((-1)^j \sum_{\substack{r_1 \in P_{i + 1}\\\dots\\
    r_j \in P_{i + j}}} m(r_j)\bigl(\prod_{k = 1}^{j - 1}pot(r_{k + 1}, r_k)\bigr)
    pot(r_1, p)\bigr)\,.
  \end{equation}
\end{small}
\end{defi}

\noindent{\bf Remark.} Note that a marking witness is not necessarily
non-negative. It can be showed by induction that:\\
\begin{small}
\centerline{$\forall p \in P_n\,, \widetilde m(p) = m(p)$
and $\forall p \in P_i\,, i < n\,, \widetilde m(p) = m(p) -
    \sum_{r \in P_{i + 1}} \widetilde m(r)pot(r, p)$}
\end{small}

\begin{lemm}\label{lem:behaviorofmtilde}
  Let $m$, $m'$ be two vectors 
  such that $m' = m + W(t)$ for some $t \in T_i\ (1 \leq i
  \leq n)$. Let $p_1$ and $p_2$ denote the
  input place and the output place of $t$ in $P_i$, respectively. Then for every
  place $p$:
  \begin{equation}\label{eq:mtilde}
  \widetilde m'(p) =
      \widetilde m(p) - 1  \mbox{ if $p$ is $p_1$,} \ \ \
    \widetilde m(p) + 1  \mbox{ if $p$ is $p_2$,} \ \ \
    \widetilde m(p)  \mbox{ otherwise.}
 \end{equation}
\end{lemm}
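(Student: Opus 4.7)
The plan is to use the recursive characterisation of the marking witness recorded in the remark immediately preceding the lemma,
$\widetilde m(p)=m(p)$ for $p\in P_n$ and $\widetilde m(p)=m(p)-\sum_{r\in P_{j+1}}\widetilde m(r)\,pot(r,p)$ for $p\in P_j$, $j<n$, and to argue by downward induction on $j$. Writing $\Delta(p)=\widetilde m'(p)-\widetilde m(p)$, the claim to prove is that $\Delta(p_1)=-1$, $\Delta(p_2)=+1$, and $\Delta(p)=0$ elsewhere.

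First I would unpack the action of $W(t)$ for $t\in T_i$ in terms of potentials. Constraint~3 in the definition of ordered $\piun$-nets says $W(t)$ is supported on $P_i\sqcup P_{i-1}$; the state-machine structure of $\cM_i$ gives $W(t)(p_1)=-1$ and $W(t)(p_2)=+1$; and I claim that for $q\in P_{i-1}$ one has $W(t)(q)=pot(p_2,q)-pot(p_1,q)$. The equality $t^\bullet(q)=pot(p_2,q)$ is immediate from the definition of $pot$ using $t$ itself as the common output transition of $p_2$ and $q$ (well-definedness being guaranteed by constraint~5 applied to output bags, which holds thanks to WR). For the input side, WR ensures the existence of some $t''\in T_i$ with $t''^\bullet={}^\bullet t$; then $t''$ plays the role of common output transition of $p_1$ and every $q\in P_{i-1}\cap{}^\bullet t$, so $pot(p_1,q)=t''^\bullet(q)={}^\bullet t(q)$, and the opposite case ${}^\bullet t(q)=0$ is ruled out by another application of constraint~5 on outputs.

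With these identities I would run the induction on $j$ from $n$ downwards. For $j>i$, both $m(p)$ and all $\widetilde m(r)$ with $r\in P_{j+1}$ are unchanged, so $\Delta(p)=0$. For $j=i$, the inductive hypothesis at level $i+1$ gives $\widetilde m'(r)=\widetilde m(r)$, hence the recursion yields $\Delta(p)=m'(p)-m(p)=-\mathbf 1_{p=p_1}+\mathbf 1_{p=p_2}$. The key case is $j=i-1$: the recursion gives
\[
\Delta(p)=(m'(p)-m(p))-\sum_{r\in P_i}\Delta(r)\,pot(r,p),
\]
and by the preliminary computation both summands equal $pot(p_2,p)-pot(p_1,p)$, so they cancel exactly. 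For $j<i-1$, $m(p)$ is unchanged and $\Delta(r)=0$ for all $r\in P_{j+1}$ by the induction hypothesis, whence $\Delta(p)=0$.

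The only subtle point, and hence the main obstacle, is justifying $W(t)(q)=pot(p_2,q)-pot(p_1,q)$: reading off $t^\bullet(q)$ is direct from the definition, but identifying ${}^\bullet t(q)$ with $pot(p_1,q)$ genuinely uses weak reversibility together with constraint~5. Once this identity is in place, the rest is a mechanical unrolling of the recursive formula, and the cancellation at level $i-1$ is automatic.
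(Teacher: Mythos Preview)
Your proof is correct and follows essentially the same approach as the paper's: both use the recursive characterisation of $\widetilde m$ and perform a level-by-level analysis, with the key cancellation at level $i-1$ coming from the identity $m'(p)-m(p)=pot(p_2,p)-pot(p_1,p)$ for $p\in P_{i-1}$. You are more explicit than the paper in making the downward induction formal and in justifying this identity via WR and constraint~5 (the paper simply asserts it), but the argument is the same.
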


\begin{proof}
  Since $m$ and $m'$ have the same restriction on $\cup_{j > i} P_j$, we have
  $\widetilde m'(p) = \widetilde m(p)\ \forall p \in \left(\cup_{j \geq i}
  P_j\right) \setminus \{p_1, p_2\}$.
  It follows that $\widetilde m'(p_1) - \widetilde m(p_1) = m'(p_1) - m(p_1) =
  -1$ and $\widetilde m'(p_2) - \widetilde m(p_2) = m'(p_2) - m(p_2) = 1$.\\
  For $p \in P_{i - 1} \cap t^\bullet$, we have $m'(p) - m(p) = pot(p_2,
  p) - pot(p_1, p)$, hence 
  \begin{align*}
    \widetilde m'(p) - \widetilde m(p) &= m'(p) - m(p) - \left[(\widetilde m'(p_1) -
    \widetilde m(p_1))pot(p_1, p)\right.\\
    &  \left.+ (\widetilde m'(p_2) - \widetilde
    m(p_2))pot(p_2, p)\right]\\
    &= 0
  \end{align*}
  Similarly,
  $\widetilde m'(p) - \widetilde m(p) = 0$ for $p \in P_{i - 1} \cap
  {}^\bullet{t}$.\\
  For all other places, $m'(p) = m(p)$ and $\widetilde m'(r) =
  \widetilde m(r)\ \forall r \mbox{ s.t. } pot(r, p) \neq 0$, thus
  $\widetilde m'(p) = \widetilde m(p)$.
\end{proof}

The above lemma applies in particular when $m$ and $m'$ are
markings such that $m\stackrel{t}{\longrightarrow} m'$. 
Equations~\eref{eq:mtilde} look like the equations for
witnesses. Since each level $i$ complex contains exactly one
level $i$ place, one guesses that every complex admits a witness,
i.e. that 
$\cN$ is a \pideux-net. This is confirmed by the next proposition.

\begin{prop}\label{prop:witnessmatrix}
  Let $B$ denote the $P \times P$ integer matrix of the linear transformation
  $m \mapsto \widetilde m$ defined by \eref{eq:effectivemarking}.
  For $p\in P_i$, the line vector $B(p)$ is a witness for the $i$-level complex
  containing $p$.  In particular, $\cN$ is a \pideux-net. 
\end{prop}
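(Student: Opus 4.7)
My plan is to read off the witness property directly from the equations in Lemma~\ref{lem:behaviorofmtilde}. The key observation is that the matrix $B$, viewed as a linear map, sends $W(t)$ (which equals $m' - m$ for any firing $m \xrightarrow{t} m'$) to a very simple vector: by the lemma, $B \cdot W(t) = e_{p_2} - e_{p_1}$ for $t \in T_i$, where $p_1$ and $p_2$ are the input and output places of $t$ in $P_i$ and $e_p$ denotes the $p$-th standard basis vector.

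First I would establish that for each $p \in P_i$ there is a unique level $i$ complex $c_p$ whose level $i$ part is $p$. This is essentially Lemma~\ref{lem:Nreactiongraph}: the reaction graph of $\cN$ is isomorphic to the disjoint union of the state machines $\cM_i$, so each level $i$ complex corresponds to exactly one place of $\cM_i$, i.e., to one element of $P_i$. Equivalently, constraint~5 together with its output analog (which holds by weak reversibility) guarantees that the level $(i-1)$ part of a level $i$ complex is uniquely determined by its level $i$ part, so $q \mapsto c_q$ is a well-defined bijection from $P_i$ to the set of level $i$ complexes. In particular ${}^\bullet t = c_{p_1}$ and $t^\bullet = c_{p_2}$ for every $t \in T_i$.

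Next, fix $p \in P_i$ and a transition $t$. If $t \in T_i$, Lemma~\ref{lem:behaviorofmtilde} gives $B(p) \cdot W(t) = (e_{p_2} - e_{p_1})(p)$, which is $-1$ when $p = p_1$, $+1$ when $p = p_2$, and $0$ otherwise. By the injectivity of $q \mapsto c_q$, these three cases are respectively equivalent to ${}^\bullet t = c_p$, $t^\bullet = c_p$, and neither. If instead $t \in T_j$ with $j \neq i$, then Lemma~\ref{lem:behaviorofmtilde} tells us that $B \cdot W(t)$ is supported on $P_j$, which is disjoint from $P_i$, so $B(p) \cdot W(t) = 0$; and ${}^\bullet t$, $t^\bullet$ are level $j$ complexes, hence different from $c_p$. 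Together these cover all cases and match exactly the witness conditions of Definition (Witness) for the complex $c_p$.

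Finally, every complex of $\cN$ is of the form $c_p$ for some $i$ and some $p \in P_i$, so every complex admits a witness and by Proposition~\ref{pr-equiv} the net has deficiency $0$. Since $\cN$ is weakly reversible by hypothesis, it is a \pideux-net. The only non-routine step is the uniqueness of $c_p$; everything else is a direct rephrasing of Lemma~\ref{lem:behaviorofmtilde}.
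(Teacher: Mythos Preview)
Your proposal is correct and follows essentially the same approach as the paper: both arguments use Lemma~\ref{lem:behaviorofmtilde} to compute $B\cdot W(t)$ and identify the result with the column $A(t)$ of the reaction-graph incidence matrix (equivalently, $e_{p_2}-e_{p_1}$ under the bijection $p\leftrightarrow c_p$), from which the witness property is immediate. The paper is terser---it simply writes $BW(t)=B(m'-m)=\widetilde m'-\widetilde m=A(t)$ and leaves the unpacking of the witness condition and the $P\leftrightarrow\cC$ identification implicit---whereas you spell out the bijection via Lemma~\ref{lem:Nreactiongraph} and constraint~5 and do the case analysis explicitly; but the underlying idea is identical.
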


\begin{proof}
  Denote by $A \in \mathbb Z(\cC \times T)$ the incidence matrix of
  the reaction graph. From Lemma
  \ref{lem:behaviorofmtilde}, we have:
  \[
  m \xrightarrow{t} m' \implies \widetilde m' - \widetilde m = A(t)\,.
  \]

\noindent  
We have to show that 
  $BW(t) = A(t)~\forall t \in T$. Indeed, let $m$ and
  $m'$ be two markings such that $m \xrightarrow{t} m'$, we have: 
  $BW(t) = B(m' - m) = \widetilde m' - \widetilde m = A(t)$.
\end{proof}

Lemma~\ref{lem:behaviorofmtilde} allows to derive relevant  
S-semi-flows of $\cN$ and S-invariants.

\begin{coro}\label{cor:sinvariants}
  Let $m_0$ be the initial marking of $\cN$. We have:\\
 \centerline{$
\forall m \in \mathcal R (m_0), \quad \forall i \in \{1,\dots , n\}, \qquad 
\widetilde m(P_i) = \widetilde m_0(P_i)
 $}\\
  More generally, for all $i$, the vector 
  $v_i = \sum_{p \in P_i} B(p)$ is a S-semi-flow of $\cN$. 
\end{coro}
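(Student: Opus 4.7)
The plan is to read off both statements almost directly from Lemma~\ref{lem:behaviorofmtilde} and Proposition~\ref{prop:witnessmatrix}, without any new computation. The key observation is that Lemma~\ref{lem:behaviorofmtilde} asserts that when a transition $t\in T_j$ fires, the vector $\widetilde m$ changes by $-1$ on the input place $p_1$ and $+1$ on the output place $p_2$ of $t$ (both lying in $P_j$), and by $0$ elsewhere. Summing this change over $p\in P_i$ therefore yields $0$ in all cases: if $j=i$ the two nonzero contributions $-1$ and $+1$ cancel, while if $j\neq i$ neither $p_1$ nor $p_2$ lies in $P_i$.

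For the first claim, I would then argue by induction on the length of a firing sequence $m_0\xrightarrow{\sigma} m$. The base case $\sigma=\varepsilon$ is trivial and the inductive step is exactly the observation above, so $\widetilde m(P_i)=\widetilde m_0(P_i)$ for every reachable $m$ and every $i\in\{1,\dots,n\}$.

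For the S-semi-flow claim I would compute $v_iW(t)$ algebraically for an arbitrary $t\in T$, using Proposition~\ref{prop:witnessmatrix}: for each $p\in P_i$, the row $B(p)$ is a witness of the unique level-$i$ complex containing $p$. By the definition of a witness, $B(p)\cdot W(t)$ vanishes unless ${}^\bullet t$ or $t^\bullet$ equals that complex. If $t\in T_j$ with $j\neq i$, then ${}^\bullet t$ and $t^\bullet$ are level-$j$ complexes and hence distinct from every level-$i$ complex, so every term $B(p)\cdot W(t)$ vanishes and $v_iW(t)=0$. If $t\in T_i$, then by constraint~5 the complex ${}^\bullet t$ contains exactly one level-$i$ place $p_1$, contributing $B(p_1)\cdot W(t)=-1$; symmetrically $t^\bullet$ contributes $+1$ at its unique level-$i$ place $p_2$; all other level-$i$ places contribute $0$. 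The two nonzero contributions cancel, giving $v_iW(t)=0$ again. Hence $v_iW=0$, that is, $v_i$ is an S-semi-flow of $\cN$, and the invariant identity $\widetilde m(P_i)=v_i\cdot m$ reproves the first claim.

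The only delicate point—the ``main obstacle'' if there is any—is to make sure one is using constraint~5 correctly to guarantee that a level-$i$ complex contains \emph{exactly one} level-$i$ place (and that, by weak reversibility, the same holds for output bags of level-$i$ transitions); once this is noted, both statements collapse to the two-line bookkeeping above.
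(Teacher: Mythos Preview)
Your proposal is correct and matches the paper's intended approach: the paper gives no explicit proof, merely introducing the corollary with ``Lemma~\ref{lem:behaviorofmtilde} allows to derive relevant S-semi-flows of $\cN$ and S-invariants,'' and your argument spells out exactly this derivation. One tiny attribution quibble: the fact that a level-$i$ complex contains exactly one level-$i$ place comes from constraint~2 ($\cM_i$ is a state machine), not constraint~5; constraint~5 guarantees that distinct level-$i$ complexes do not share their level-$i$ place, which is what makes the bijection $f$ in Lemma~\ref{lem:Nreactiongraph} well-defined.
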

Using this corollary, it can be shown that an ordered \piun-net is bounded.

\smallskip \noindent{\bf Example.} 
Consider the ordered \piun-net in Figure \ref{fig:pi3} with the initial marking
$m_0 = p_3 + q_3 + r_3 + 4 p_1$. The marking witness of $m_0$ is 
$\widetilde{m_0} = p_3 + q_3 + r_3 - 2 p_2 - q_2 + 10 p_1$.
Any reachable marking $m$ satisfies the invariants:\\
\centerline{$m(P_3) = 3$}\\
\centerline{$m(P_2) - 2 m(p_3) - m(q_3) = -3$}\\
\centerline{$m(p_1) - 2 m(p_2) - 2 m(q_2) - m(r_2) + 4 m(p_3) + 2 m(q_3) = 10$}\\
We shown that $\{v_i, 1 \le i \le n\}$ is
a basis of the S-semi-flows of $\cN$. 

\begin{prop}\label{prop:sinv-uniqueness}
  Let $v$ be an S-semi-flow of $\cN$, i.e. $v . W = 
  0$. There exist unique rational numbers $a_1,\dots,a_n$ such that
  $v = \sum_{i = 1}^n a_i v_i$.
\end{prop}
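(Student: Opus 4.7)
The plan is to show that $\{v_1,\dots,v_n\}$ is a $\mathbb{Q}$-basis of the space of S-semi-flows of $\mathcal{N}$; existence and uniqueness of the decomposition $v=\sum a_i v_i$ then follow immediately. The approach naturally splits into a linear-independence argument and a dimension count.

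First I would extract from Definition~\ref{def:effectivemarking} a triangular structure for the matrix $B$: for $p \in P_i$, the row $B(p)$ is supported on $P_i \cup P_{i+1} \cup \cdots \cup P_n$, equals $1$ at $p$, and vanishes on $P_i \setminus \{p\}$, because the defining sum only involves coordinates $m(r_j)$ with $r_j$ at strictly higher levels than $i$. Summing over $p \in P_i$ then yields the clean identities $v_i(q) = 0$ for $q \in P_1 \cup \cdots \cup P_{i-1}$ and $v_i(q) = 1$ for $q \in P_i$. Given any relation $\sum_i a_i v_i = 0$, I restrict successively to $P_1, P_2, \ldots, P_n$: testing against a place $q \in P_k$ forces $a_k = 0$ once $a_1, \ldots, a_{k-1}$ are already known to vanish. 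This gives the $\mathbb{Q}$-linear independence of the $v_i$.

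Next I would compute the dimension of the space of S-semi-flows. By Proposition~\ref{prop:witnessmatrix}, $\mathcal{N}$ is a \pideux-net, so deficiency zero yields $\mathrm{rank}(W) = |\mathcal{C}| - \ell$. By Lemma~\ref{lem:Nreactiongraph}, the reaction net of $\mathcal{N}$ is isomorphic to the disjoint union of the $n$ strongly connected state machines $\mathcal{M}_i$, whence $\ell = n$ and $|\mathcal{C}| = |P|$. Therefore $\mathrm{rank}(W) = |P| - n$ and the space of S-semi-flows has $\mathbb{Q}$-dimension $|P| - \mathrm{rank}(W) = n$. Combined with the previous step, the $v_i$ form $n$ linearly independent vectors in an $n$-dimensional $\mathbb{Q}$-space, hence a basis. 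The only step requiring any real care is reading off the triangular structure of $B$ from the nested-sum formula~\eqref{eq:effectivemarking}; once that is in place, both linear independence and the decomposition $v = \sum a_i v_i$ follow from standard linear algebra.
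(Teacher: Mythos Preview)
Your proof is correct, but the route differs from the paper's. The paper exploits the factorisation $A = BW$ directly: since $B$ is unit lower triangular (your same observation) it is invertible, and from $v\cdot W=0$ one gets $(vB^{-1})A=0$, so $vB^{-1}$ is an S-semi-flow of the disjoint union of the state machines $\mathcal{M}_i$. The only S-semi-flows of a state machine being scalar multiples of the all-ones vector, one obtains $vB^{-1}=\sum_i a_i w_i$ with $w_i=\mathbb{1}_{P_i}$, and right-multiplying by $B$ yields $v=\sum_i a_i v_i$; independence is read off from the disjoint supports of the $w_i$.

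The difference is that the paper's argument is constructive and self-contained (the coefficients $a_i$ are exhibited via $vB^{-1}$, and no appeal to deficiency is needed), whereas you establish the spanning property indirectly by a dimension count relying on deficiency zero and Lemma~\ref{lem:Nreactiongraph}. Your route has the virtue of making explicit that $\dim\ker(W^{\mathrm{T}})=n$, a fact the paper's proof leaves implicit; the paper's route has the virtue of giving a concrete inverse transform. Both lean on the same triangular structure of $B$, so the essential linear-algebraic content is the same.
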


\begin{proof}
  The matrix $B$ is a $P \times P$ unit lower triangular matrix, 
  so it is invertible.

\noindent
  We have: 
  \[
  v . W = 0 \implies (v .
  B^{-1})(BW) = 0 \implies (v . B^{-1}) A = 0\,,
  \]
  hence
  $v . B^{-1}$ is an S-semi-flow of the disjoint union of the state
  machines $\cM_i$. But since a state machine's only S-semi-flows are
  $a(1,\dots,1)$, $a \in \mathbb Q$, there exist rational numbers
  $a_1,\dots,a_k$ such that 
  \begin{equation}\label{eq:xB-1}
    v . B^{-1} = \sum_{i = 1}^n
    a_i w_i\,, 
  \end{equation}
  where $w_i \in \mathbb Q^{P}$ are
  defined by $w_i(p) = \mathbb 1_{P_i}(p)$.

\noindent  Right-multiplying both sides of (\ref{eq:xB-1}) by $B$, we get $v = \sum_{i = 1}^n
  a_i v_i$.

\noindent
The independence of the set $\{v_i\,, 1 \le i \le n\}$ follows from the fact that the
vectors $v_iB^{-1}$ have non-empty disjoint supports.
\end{proof}

\smallskip

We now consider only
ordered \piun-nets in which the interface
places in $P_i$ have maximal potential among the places of $P_i$. 
From the technical point of view,
this assumption is crucial for the reachability set analysis presented later. 
From the modelling point of view, it is a reasonable
restriction. Consider the
multi-level model, the assumption means that during the executions of level
$i$ jobs, the level $(i - 1)$ is idle, therefore the amount of available
resource is maximal.

\begin{defi}[\pitrois-net]
  An ordered \piun-net $\cN$ is a {\pitrois-net} if:\\
  \centerline{$
  \forall i, \forall p \in P_i\,: p \in {}^\bullet T_{i + 1} 
  \implies
  pot(p) = max \{pot(q), q \in P_i\}\,.
  $}\\
\end{defi}

\subsection{The reachability set}\label{subsec:reachabilityset}

From now on, $\cN$ is a $n$-level \pitrois-net with
$\cM_1, \dots, \cM_n$ being its state machines.

\begin{defi}[Minimal marked potential]
  Consider $i \in \{2,\dots , n\}$. The {\em level $i$ minimal potential marked by $m$} is:
  \[ 
  \varphi_i(m) = \begin{cases}
    \mbox{max}\{pot(p), p \in P_i\} & \mbox{ if } m(P_i) = 0\,,\\
    \mbox{min}\{pot(p), p \in P_i, m(p) > 0\} & \mbox{ if } m(P_i) > 0\,.
  \end{cases}
  \]
\end{defi}

The next lemma gives a necessary condition for reachability.

\begin{lemm}
  \label{lem:preserve}
  If $\varphi_i(m) \le m(P_{i - 1})$ then $\varphi_i(m') \le m'(P_{i -
  1})$ for all $m' \in \cR(m)$.
\end{lemm}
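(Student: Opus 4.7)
The plan is to use induction on the length of a firing sequence to reduce the claim to a single step $m \xrightarrow{t} m'$, and then perform a case analysis on the level $j$ such that $t \in T_j$. If $|j-i| \ge 2$ or $j = i-1$, the argument will be immediate: $t$ does not touch any place of $P_i$, so $\varphi_i(m') = \varphi_i(m)$, and $t$ preserves the total mass in $P_{i-1}$ (in the case $j = i-1$ because $t$ is a state machine transition of $\cM_{i-1}$ possibly extended by arcs into $P_{i-2}$).

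For $j = i$, I will let $p_i$ and $p_i'$ denote the input and output places of $t$ in $\cM_i$. Using constraint (5) together with its output analogue (valid by weak reversibility), the numbers of tokens $t$ consumes from and produces in $P_{i-1}$ are exactly $pot(p_i)$ and $pot(p_i')$, so $m'(P_{i-1}) = m(P_{i-1}) - pot(p_i) + pot(p_i')$. Enabledness of $t$ forces $m(P_{i-1}) \ge pot(p_i)$ and hence $m'(P_{i-1}) \ge pot(p_i')$; since $p_i'$ is marked in $m'$, I will conclude $\varphi_i(m') \le pot(p_i') \le m'(P_{i-1})$.

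The main work goes into the case $j = i+1$. Here $m'(P_{i-1}) = m(P_{i-1})$, so I must bound $\varphi_i(m')$. The crucial structural step will be to prove that every place in $({}^\bullet t \cup t^\bullet) \cap P_i$ is an interface place of $P_i$. For $p \in t^\bullet$ this is immediate. For $p \in {}^\bullet t$, I will invoke weak reversibility to produce some $t''$ with $(t'')^\bullet = {}^\bullet t$, and then use constraint (3) to argue $t'' \in T_{i+1}$ (since the output complex of a $t'' \in T_{i+2}$ would contain a place of $P_{i+2}$ which ${}^\bullet t$ lacks, and for any other level it would miss $P_{i+1}$ altogether). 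The $\pitrois$ hypothesis then forces every such $p$ to have maximal potential in $P_i$. I will then split on whether some minimizer $p^*$ of $\varphi_i(m)$ lies outside ${}^\bullet t$ (in which case $p^*$ survives in $m'$ and gives $\varphi_i(m') \le \varphi_i(m)$) or entirely inside ${}^\bullet t$ (in which case $\varphi_i(m)$ already equals the maximum potential, and all marked places of $P_i$ under both $m$ and $m'$ share this maximum, since newly marked places are also interface). The degenerate case $m(P_i) = 0$ falls into the second branch by the convention in the definition of $\varphi_i$.

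The main obstacle will be this final case, because a level $(i+1)$ firing can simultaneously consume from and produce in $P_i$, and a naive argument could allow the minimum-potential marked place to drop. It is precisely the $\pitrois$ maximality condition on interface places, combined with the weak-reversibility identification of input places with interface places in $P_i$, that prevents this and keeps $\varphi_i$ well-behaved.
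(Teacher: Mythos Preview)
Your proposal is correct and follows essentially the same approach as the paper: reduce to a single firing step and case-split on the level of the fired transition, using the $\pitrois$ maximality of interface places for the $j=i+1$ case and the state-machine balance for $j=i$. The paper's proof is terser in the $j=i+1$ case (it simply asserts that firing $t$ ``moves tokens between places of maximal potential in $P_i$'' and hence $\varphi_i(m')=\varphi_i(m)$), whereas you spell out explicitly, via weak reversibility, why the input places of $t$ in $P_i$ are also interface places; but this is just a matter of detail, not a different argument.
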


\begin{proof}
  W.l.o.g., assume that $m \xrightarrow{t} m'$.

  \noindent  First, suppose that $t \notin T_i$. If $t \notin T_{i + 1}$
  then firing $t$ does not modify the marking on $P_i$, so $\varphi_i(m') =
  \varphi_i(m)$. If $t \in T_{i + 1}$, firing $t$ either leaves the marking of
  $P_i$ unchanged or moves tokens between places of maximal potential in
  $P_i$; in both cases $\varphi_i(m') = \varphi_i(m)$.
  Since  $t \notin T_i$, $m'(P_{i - 1}) = m(P_{i - 1})$. So
  $\varphi_i(m') \le m'(P_{i - 1})$ if $t \notin T_i$.

  \noindent  Now consider $t \in T_i$, let $p$ and $q$ be the input and output places of
  $t$ in $P_i$. We
  have $\varphi_i(m') \le pot(q) \le m(P_{i - 1}) - pot(p) + pot(q) =
  m'(P_{i - 1})$.
\end{proof}

We now define the partial liveness and partial reachability.

\begin{defi}[$i$-reachability set, $i$-liveness]
  Let $m$ be a marking.
  The $i$-reachability set of $m$, denoted by $\cR_i(m)$, 
  is the set of all markings reachable from
  $m$ by a firing sequence consisting of transitions in $\bigcup_{1
  \le j \le i} T_j$.
  We say that $m$ is {\em $i$-live} if for any transitions $t$ in 
  $\bigcup_{1 \le j \le i} T_j$, there exists a marking in $\cR_i(m)$ which
  enables $t$.
  By convention, $\cR_0(m) = \{m\}$ and every marking is $0$-live.
\end{defi}



The $i$-live markings are characterised by the following proposition.

\begin{prop}\label{prop:ilivecondition}
  A marking $m$ is $i$-live if and only if it satisfies the following
  inequalities, called the {\em $i$-condition}:
  \begin{equation}\label{eq:icondition}
    m(P_i) > 0 \wedge
    \forall 2 \le j \le i\,:~m(P_{j - 1}) \ge \varphi_j(m)
  \end{equation}
  If $m$ satisfies the $i$-condition then
  for every $p, q \in P_i$ such that $p \neq q$, $m(p) > 0$ and
  $pot(p) \le m(P_{i - 1})$, there exists $m' \in \cR_i(m)$ such that:
  \begin{equation}\label{eq:movingtokens}
    m'(p) = m(p) - 1\,,~ m'(q) = m(q) + 1\,,~ \forall r \in
    P_i \setminus \{p, q\}, \ m'(r) = m(r)\,.
  \end{equation}
  A marking is live if and only if it satisfies the
  $n$-condition.
\end{prop}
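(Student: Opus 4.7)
The plan is to proceed by induction on $i$, simultaneously proving the two statements (liveness characterisation plus token-moving). The base case $i=1$ is immediate: since level $1$ transitions involve only $P_1$ (as $P_0=\emptyset$) and $\cM_1$ is a strongly connected state machine, $m$ is $1$-live iff $m(P_1)>0$, and a token at $p$ can be moved to any $q\neq p$ by firing transitions along a directed path from $p$ to $q$ in $\cM_1$.

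For the inductive step, I assume the proposition at level $i-1$ and establish it at level $i$. For \emph{necessity}, the requirement $m(P_i)>0$ is clear since no firing in $\bigcup_{j\leq i}T_j$ can create tokens in $P_i$: firings in $T_j$ for $j<i$ do not touch $P_i$ by constraint 3, and firings in $T_i$ are state-machine moves within $P_i$ and hence preserve $m(P_i)$. For each inequality $m(P_{j-1})\ge\varphi_j(m)$ with $2\le j\le i$, I would show that its violation is preserved under every $i$-reachable firing sequence, which is the dual of Lemma \ref{lem:preserve}: firings in $T_k$ with $k<j$ preserve both sides, no $T_j$ transition is enabled under the violation (its input bag requires $pot(p)\ge\varphi_j(m) > m(P_{j-1})$ tokens in $P_{j-1}$), and firings in $T_k$ with $k>j$ do not modify $m(P_{j-1})$ while the $\pitrois$-property (interface places of $P_j$ have maximal potential, which is activated for $k=j+1$) prevents $\varphi_j$ from decreasing below $m(P_{j-1})$. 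Consequently no $T_j$ transition ever becomes enabled.

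For \emph{sufficiency} together with the token-moving claim, assume $m$ satisfies the $i$-condition. I would first deduce that $m$ also satisfies the $(i-1)$-condition, activating the inductive hypothesis: all transitions in $\bigcup_{j<i}T_j$ can be enabled in $\cR_{i-1}(m)\subseteq\cR_i(m)$, and tokens in $P_{i-1}$ can be freely redistributed while preserving the $(i-1)$-condition. To enable a given $t\in T_i$ with input place $p$, I proceed in two steps. First, I invoke the token-moving claim at level $i$ (proved concurrently) to bring a token onto $p$: picking $p^\star\in P_i$ with $m(p^\star)>0$ and $pot(p^\star)=\varphi_i(m)\le m(P_{i-1})$, I fire transitions along a directed path from $p^\star$ to $p$ in $\cM_i$. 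At each intermediate step, I use the $(i-1)$-inductive hypothesis to re-arrange the tokens in $P_{i-1}$ so that the corresponding $T_i$ transition's input bag in $P_{i-1}$ is exactly satisfied; the $\pitrois$-property guarantees that consuming $pot(p')$ tokens and immediately producing $pot(p'')$ of them on interface places of $P_{i-1}$ keeps $m(P_{i-1})\ge\varphi_i$ throughout. Second, once $p$ is marked, a further application of the $(i-1)$-inductive hypothesis positions exactly the $pot(p)$ tokens required by the input bag of $t$ at the correct places of $P_{i-1}$.

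The \emph{main obstacle} is this recursive cross-level interaction: firing a $T_i$ transition to reposition tokens in $P_i$ simultaneously moves $pot(\cdot)$ tokens between interface places of $P_{i-1}$, so the induction at level $i-1$ must be applied so as to keep the $(i-1)$-condition and $m(P_{i-1})\ge\varphi_i(m)$ stable during the whole sequence, and this is exactly where the $\pitrois$ maximal-potential assumption is essential. Finally, part three of the statement follows from part one applied at $i=n$: iterating Lemma \ref{lem:preserve} at every level shows that the $n$-condition is hereditary in $\cR(m)$, so $m$ is live iff every $m'\in\cR(m)=\cR_n(m)$ satisfies the $n$-condition, which by preservation is equivalent to $m$ itself satisfying it.
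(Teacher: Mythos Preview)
Your overall inductive scheme coincides with the paper's, but there is a genuine gap in the sufficiency direction. You write that from the $i$-condition you ``first deduce that $m$ also satisfies the $(i-1)$-condition, activating the inductive hypothesis''. This deduction is false: the $(i-1)$-condition requires $m(P_{i-1})>0$, whereas the $i$-condition only yields $m(P_{i-1})\ge\varphi_i(m)$, and $\varphi_i(m)$ can be $0$ (namely when some marked place of $P_i$ has potential~$0$). In that situation there may be no tokens in $P_{i-1}$ at all, so you cannot invoke the level-$(i-1)$ hypothesis to rearrange anything. The paper closes this gap by an explicit case split on $pot(p)$. If $pot(p)=0$, the output transitions of $p$ require no $P_{i-1}$ tokens, so one can fire directly along a path from $p$ to $q$ in $\cM_i$ (the tokens produced in $P_{i-1}$ at one step are exactly those consumed at the next, by constraint~5 and weak reversibility). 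If $pot(p)>0$, then $m(P_{i-1})\ge pot(p)>0$, the $(i-1)$-condition \emph{does} hold, and the induction hypothesis can be used to redistribute tokens in $P_{i-1}$ so as to enable an output transition of~$p$.

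For the necessity direction your argument (preservation of the violation $m(P_{j-1})<\varphi_j(m)$ under firings in $\bigcup_{k\le i}T_k$) is correct but duplicates work. The paper's proof is shorter and reuses Lemma~\ref{lem:preserve} via weak reversibility: if $m$ is $i$-live, pick $m'\in\cR_i(m)$ enabling some $t\in T_j$; then $\varphi_j(m')\le m'(P_{j-1})$ holds trivially, and since $m\in\cR(m')$ by weak reversibility, Lemma~\ref{lem:preserve} transports this inequality back to~$m$.
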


\begin{proof}
  Consider an $i$-live marking $m$.
  For any $j \le i$, there is a marking $m' \in \cR_i(m)$ which enables a
  transition of $T_j$. This marking satisfies $\varphi_j(m') \le m'(P_{j - 1})$.
  By (weak) reversibility, $m \in \cR(m')$, so $\varphi_j(m) \le m(P_{j - 1})$
  (Lemma \ref{lem:preserve}). Since the number of tokens in $P_i$ is the same
  for all the markings of $\cR_i(m)$, $m(P_i) > 0$ (otherwise, the transitions
  of $T_i$ would be dead).

  \medskip

  \noindent  We prove the reverse direction and the second part of the proposition
  by induction on $i \ge 1$, i.e. :\\
  If $m$ satisfies the $i$-condition then:\\ 
  (1)
  for every $p, q \in P_i$ such that $p \neq q$, $m(p) > 0$ and
  $pot(p) \le m(P_{i - 1})$, there exists $m' \in \cR_i(m)$ such that:
  $
  m'(p) = m(p) - 1\,,~ m'(q) = m(q) + 1\,,~ \forall r \in
  P_i \setminus \{p, q\}, \ m'(r) = m(r)\,.
  $\\ 
  (2) $m$ is $i$-live.

  \smallskip

  \noindent  The case $i = 1$ is trivial. 

  \noindent  Suppose that the claim has been proven for all $j \le i - 1$. Let $m$ be a
  marking which satisfies the $i$-condition. Consider two cases: $pot(p) = 0$
  and $pot(p) > 0$.

  \noindent  If $pot(p) = 0$ then
  the output transitions of $p$ are enabled by $m$.
  For any arbitrary $q\neq p$,
  fire the transitions along a path from $p$ to
  $q$ in $T_i$, we obtain a marking $m'$ satisfying \eref{eq:movingtokens}.
  So we have proved assertion (1). 
  Now choose some $q$ such that $pot(q) > 0$ 
  (there is at least one). Then $m'(P_{i - 1}) \ge pot(q) > 0$. By the
  induction hypothesis, $m'$ is $(i - 1)$-live. Moreover, $m'$ enables the
  output transitions of $q$. Hence $m'$ is $i$-live, which implies 
  $m$ is $i$-live.

  \noindent  If $pot(p) > 0$ then $m(P_{i - 1}) > 0$, hence $m$ is $(i - 1)$-live by the
  induction hypothesis. It remains to find a marking in $\cR_{i - 1}(m)$ which
  enables the output transitions of $p$. If for all $r \in P_{i - 1}$,
  $m(r) \ge pot(p, r)$ then choose $m$. Otherwise, choose a marked place
  $q$ of $P_{i - 1}$ such that $pot(q) \le m(P_{i - 2})$
  and a level $(i - 1)$ interface place $q'$, then apply the
  induction hypothesis on \eref{eq:movingtokens}
  to find $m_1 \in \cR_{i - 1}(m)$ such that $m_1(q) =
  m(q) - 1$, $m_1(q') = m(q') + 1$ and $m_1(r) = m(r)$ for every other places
  $r$ of $P_{i - 1}$. We have $\varphi_{i - 1}(m_1) = \mbox{max}\{pot(r), r \in
  P_{i - 1}\}$.
  Now starting from $m_1$, repeat the
  following procedure:

  \smallskip

  \begin{itemize}[nolistsep]
    \item Step 1: Find two place $r_1$, $r_2$ in $P_{i - 1}$ such that
      $\bar m(r_1) < pot(p, r_1)$  and $\bar m(r_2) > pot(p, r_2)$, 
      $\bar m$ denoting the current marking.
    \item Step 2: Use the induction hypothesis on \eref{eq:movingtokens} to find
      $\bar m' \in \cR_{i - 1}(\bar m)$ such that
      $\bar m'(r_1) = \bar m(r_1) + 1$, $\bar m'(r_2) = \bar m(r_2) -
      1$ and $\bar m'(r) = \bar m(r)$ for all  $r \in P_{i - 1}
      \setminus \{r_1, r_2\}$.
  \end{itemize}

  \smallskip

  \noindent  All the intermediate markings are $(i - 1)$-live. Since
  $\bar m(P_{i - 1}) \ge pot(p)$, if there exists $r_1 \in
  P_{i - 2}$ such that $\bar m(r_1) < pot(p, r_1)$ then there exists
  $r_2 \in P_{i - 2}$ such that $\bar m(r_2) > pot(p, r_2) \ge 0$ as well. 
  Because the interface places
  have maximal potential, at the beginning of each iteration, we always have
  $\varphi_{i - 1}(\bar m) = \mbox{max}\{pot(r), r \in P_{i - 1}\}$, hence
  $pot(r_2) \le \varphi_{i - 1}(\bar m) \le \bar m(P_{i - 2})$. 
  Each iteration strictly diminishes 
  the number of ``missing'' tokens in the places of
  $P_{i - 1}$ synchronised with $p$, so the procedure eventually stops at a
  marking $m_2$ such that $m_2(r) \ge pot(p, r)$ for every place $r \in
  P_{i - 1}$. This marking enables the output transitions of $p$.
\end{proof}

\noindent{\bf Example:} The ordered \piun-net in Figure \ref{fig:pi3} is a
\pitrois-net. Consider two markings: $m_1 = p_3 + q_3 + r_3 + 4 p_1$ and
$m_2 = 3 q_3 + 4 p_1$. These markings agree on all the S-invariants, but
only $m_1$ satisfies the 3-condition. It is easy to check that 
$m_1$ is live while $m_2$ is dead.

We conclude this subsection by showing that the reachability problem for
\pitrois-nets can be efficiently decided as well.

\begin{theo}
  \label{thm:reachability}
  Suppose that the initial marking $m_0$ is live. Then the reachability set
  $\cR(m_0)$ coincides with the set $\cS(m_0)$ of markings which satisfy the
  $n$-condition and agree with $m_0$ on the S-invariants given by Corollary
  \ref{cor:sinvariants}.
\end{theo}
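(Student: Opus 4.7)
The plan is to prove the two inclusions $\cR(m_0) \subseteq \cS(m_0)$ and $\cS(m_0) \subseteq \cR(m_0)$ separately. For the forward inclusion, I would observe that any $m \in \cR(m_0)$ agrees with $m_0$ on the S-invariants of Corollary \ref{cor:sinvariants} (they are preserved by firings by definition), while the $n$-condition at $m$ follows from three ingredients: $m_0$ satisfies the $n$-condition by Proposition \ref{prop:ilivecondition} (since $m_0$ is live), the inequalities $\varphi_j(m) \le m(P_{j-1})$ are propagated by Lemma \ref{lem:preserve}, and $m(P_n) = m_0(P_n) > 0$ is conserved by the S-invariant $v_n = \sum_{p \in P_n} p$ (noting that $B(p) = p$ on $P_n$).

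For the reverse inclusion, I would induct on the number of levels $n$. The base case $n = 1$ is the standard reachability result for strongly connected state machines, since $\cS(m_0) = \{m : m(P_1) = m_0(P_1)\}$ and any two markings with the same total weight in a strongly connected state machine are mutually reachable.

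For the inductive step, assume the result for $(n-1)$-level \pitrois-nets and fix $m \in \cS(m_0)$. The argument proceeds in two phases. First, I would construct $m^* \in \cR(m_0)$ with $m^*|_{P_n} = m|_{P_n}$ by iterating the second part of Proposition \ref{prop:ilivecondition} at $i = n$: since $\cM_n$ is strongly connected and $m_0(P_n) = m(P_n)$ by the top S-invariant, one performs a sequence of single-token moves inside $P_n$, each realized in $\cN$ whenever the source place has potential at most the current $m^*(P_{n-1})$. Second, I would apply the induction hypothesis to the sub-net $\cN'$ comprising levels $1, \ldots, n-1$, setting $P' = P_1 \sqcup \ldots \sqcup P_{n-1}$: the restrictions $m^*|_{P'}$ and $m|_{P'}$ both satisfy $\cN'$'s $(n-1)$-condition — $m^*|_{P'}$ by inheriting $(n-1)$-liveness from $m^*$, and $m|_{P'}$ as a restriction of $m$'s $n$-condition — and they agree on the sub-net's S-invariants because the gap between the full marking witness $\widetilde{\cdot}$ and the sub-net marking witness $\widetilde{\cdot}{}'$ at any $p \in P_j$, $j < n$, is a fixed linear function of the level-$n$ marking, which is identical for $m^*$ and $m$. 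The firing sequence obtained by induction uses only transitions in $T_1 \sqcup \ldots \sqcup T_{n-1}$ and lifts to $\cN$ without disturbing the marking on $P_n$.

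The hard part will be the first phase: showing that the iterative token-moving procedure always makes net progress toward $m|_{P_n}$. When every place with surplus ($m^*(p) > m(p)$) has potential exceeding $m^*(P_{n-1})$, direct progress is impossible, but $n$-liveness guarantees a marked place of potential at most $m^*(P_{n-1})$, which can be used as the source of a preliminary low-to-high state-machine move to raise $m^*(P_{n-1})$ and eventually unblock a useful move. The strong connectedness of $\cM_n$ together with the \pitrois-property (interface places having maximal potential) underpins this combinatorial argument. One must also handle carefully the degenerate case $\varphi_n(m) = 0$ with $m(P_{n-1}) = 0$, in which level $n-1$ is vacuous and the induction is applied to a truncated structure.
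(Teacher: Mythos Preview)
Your forward inclusion $\cR(m_0)\subseteq\cS(m_0)$ is correct and matches the paper exactly.

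For the reverse inclusion the paper takes a shorter route. It fixes, in each level $i$, a place $p_i$ of maximal potential and shows, by reverse induction on $i$, that every $m\in\cS(m_0)$ can be driven to the single canonical marking $m_0'$ that concentrates all tokens of $P_i$ on $p_i$. The point is that when one repeatedly applies the second part of Proposition~\ref{prop:ilivecondition} to push tokens of $P_i$ towards $p_i$, the quantity $m(P_{i-1})$ can only increase (since the destination has maximal potential), so the enabling hypothesis $pot(p)\le m(P_{i-1})$ is never lost; together with Lemma~\ref{lem:preserve} this gives termination immediately. Weak reversibility then yields $\cS(m_0)\subseteq\cR(m_0')=\cR(m_0)$ in one stroke, with no induction on the number of levels and no need to match a prescribed target configuration on $P_n$.

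Your inductive scheme is a reasonable alternative, but Phase~1 as written has a genuine gap. You want to drive $m_0|_{P_n}$ to $m|_{P_n}$ by single-token moves, and you correctly isolate the obstruction: it may happen that every surplus place $p$ (with $m^*(p)>m(p)$) has $pot(p)>m^*(P_{n-1})$. Your remedy is a ``preliminary low-to-high move to raise $m^*(P_{n-1})$'', but you give no termination argument. Such a move can increase the distance $\sum_p|m^*(p)-m(p)|$ (when the destination is not a deficit place), and the subsequent useful move may drop $m^*(P_{n-1})$ back below the potential of the remaining surplus places; nothing in your sketch excludes oscillation, and ``eventually unblock a useful move'' is precisely the lemma that is missing. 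One clean repair is to replace your Phase~1 by: move all $P_n$-tokens to the max-potential place $p_n$ (this is monotone, as in the paper), then invoke weak reversibility on the sequence that would send $m|_{P_n}$ to $p_n$. But once you do that, the level-by-level induction becomes redundant and you have essentially reproduced the paper's canonical-marking argument. The degenerate case $m(P_{n-1})=0$ that you flag would also need more than the phrase ``truncated structure'': the restriction $m|_{P'}$ then fails the $(n{-}1)$-condition, so the inductive hypothesis does not apply as stated.
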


\begin{proof}
  The inclusion $\cR(m_0) \subset \cS(m_0)$ is the combination of the results 
  of Corollary
  \ref{cor:sinvariants} and Proposition \ref{prop:ilivecondition}. 

  \noindent  To prove the converse, we look for a marking which is 
  reachable from every marking of $\cS(m_0)$. Let $p_j$, $1 \le j \le n$, be a
  place of maximal potential of $P_j$, that is,
  $pot(p_j) = \mbox{max}\{pot(p), p \in P_j\}$. Let $m_0'$ denote the unique
  marking in $\cS(m_0)$ such that $m_0'(p) = 0$ for every $ p \notin \{p_1,
  \dots, p_n\}$.
  Consider an arbitrary marking $m$ in $\cS(m_0)$.
  We prove by a reverse induction on $i\leq n$ and by using the second part of 
  Proposition \ref{prop:ilivecondition} that
  there exists a marking $m' \in \cR(m)$ such that
  $m'(p) = 0~\forall p \notin \{p_1, \dots, p_n\}$.
  The inductive claim is:
  \vspace*{-0.3cm}
  \begin{center}
    There exists a marking $m'_i \in \cR(m)$ such that
    $\forall p \in \cup_{i\leq j \leq n} P_j \setminus \{p_i, \dots, p_n\}\ m'_i(p) = 0$\\
    and $m'_i$ satisfies the $i-1$ condition.
  \end{center}  

  \vspace*{-0.3cm}
  \noindent
  Let us address the basis case $i=n$. Assume that there exists $p\neq p_n$ such that
  $m_0(p)>0$. Using proposition~\ref{prop:ilivecondition}, we move a token
  from $p$ to $p_n$. 
  Furthermore by lemma~\ref{lem:preserve}, the $n$-condition
  is still satisfied.
  Iterating this process, we obtain a marking $m'_n$
  such that $\forall p \in P_n \setminus \{p_n\}\ m'_n(p) = 0$
  and the $n$-condition is still satisfied.
  The inductive case is similar by observing that the sequence
  that moves the tokens of $P_i$ does not use transitions
  of $T_j$ for $j>i$.

  \noindent
  Since $m'$ is also an
  element of $\cS(m_0)$, $m' = m_0'$. So $m_0'$ is reachable from every marking
  in $\cS(m_0)$. By (weak) reversibility, every marking in $\cS(m_0)$ is reachable
  from $m_0'$. So $\cS(m_0) \subset \cR(m_0') = \cR(m_0)$.
\end{proof}

\subsection{Computing the normalising constant}
\label{subsec:compute}
The normalising constant of a product-form Petri net (see Section
\ref{subsec:spn}) is
$G = \sum_m \mathbb 1_{m \in \cR(m_0)} \prod_{p \in P} u_p^{m(p)}$.
It is in general a difficult task to compute $G$, as can be guessed
from the complexity of the reachability problem. 
However, efficient algorithms may exist for nets with a well-structured
reachability set. Such algorithms were known for Jackson
networks~\cite{reis80} and the {\em S-invariant reachable} Petri nets defined 
in~\cite{coheta}. We show that is is also the case for the class
of live \pitrois-nets which is strictly larger than the class of
Jackson networks (which correspond to 1-level ordered nets) and is not included
in the class of S-invariant reachable Petri nets.



\smallskip

Suppose that $m_0$ is a live marking.
Suppose that the places of each level are ordered by increasing potential:
$P_i = \{p_{i1},
\dots, p_{ik_i}\}$ such that $\forall 1 \le j < k_i$, $pot(p_{ij}) \le pot(p_{i(j + 1)})$.

Let $V$ denote the $n \times P$-matrix the $i$-th row of which is the S-invariant
$v_i$ defined in Corollary \ref{cor:sinvariants}.
For $1 \le i \le n$, set $C_i = v_i m_0 = \widetilde m_0 (P_i)$.
Then the reachability set consists of all $n$-live markings $m$ such that
$Vm = {}^t(C_1, \dots, C_n)$. 

\smallskip

For $1 \le i \le n$, $1 \le j \le k_i$
and $c_1, \dots, c_i \in \mathbb Z$, define $E(i, j, c_1, \dots, c_i)$ as
the set of markings $m$ such that 
\[
\begin{cases}
  m(p_{i\nu}) = 0 \mbox{ for all } \nu > j\\
  Vm = {}^t(c_1, \dots, c_i, 0 \dots, 0)\\
  \varphi_\nu(m) \le m(P_{\nu - 1}) \mbox{ for all } 2 \le \nu \le i\,.
\end{cases}
\]
The elements of $E(i, j, c_1, \dots, c_i)$ are the markings which satisfy the
second part of the $i$-condition and the S-invariants constraints 
$(c_1, \dots, c_i, 0, \dots, 0)$ and
concentrate tokens in $P_1, \dots, P_{i - 1}$ and $\{p_{i1}, \dots,
p_{ij}\}$.

With each $E(i, j, c_1, \dots, c_i)$ associate\\ 
\centerline{
$G(i, j, c_1, \dots, c_i) = \pi(E(i, j, c_1, \dots, c_i)) = \sum
\prod_{p \in P}u_p^{m(p)}$}\\
the sum being taken over all $m \in E(i, j, c_1, \dots, c_i)$.

\smallskip

We propose to compute $G(n, k_n, C_1, \dots, C_n)$ by dynamic programming.
It consists in breaking each $G(i,
j, c_1, \dots, c_i)$ into smaller sums.
This corresponds to a partition of the elements of 
$E(i, j, c_1, \dots, c_i)$ by the number of tokens in 
$p_{ij}$.

\begin{prop}
  \label{prop:partitionE}
  Let be given $E = E(i, j, c_1, \dots, c_i)$. If $c_i < 0$ then $E =
  \emptyset$. If $c_i \ge 0$ then for every non-negative integer $a$:

  \smallskip

  \begin{enumerate}[nolistsep]
    \item If $a > c_i$ then $E \cap \{m | m(p_{ij}) = a\} = \emptyset$.
    \item If $a < c_i$ and $j = 1$ then $E \cap \{m | m(p_{ij}) = a\} =
      \emptyset$.
    \item If $a < c_i$ and $j \ge 2$ then\\ 
    $E \cap \{m | m(p_{ij}) = a\} = \{m+ap_{ij} \mid m \in  E(i, j - 1, c_1 -
    v_1(a p_{ij}), \dots, c_i -	v_i(a~p_{ij}))\}$.
    \item If $a = c_i$ and $i = 1$ then $E \cap \{m | m(p_{ij}) = a\} =
      \{c_1 p_{1j}\}$.
    \item If $a = c_i$ and $i > 1$ then\\ 
    $E \cap \{m | m(p_{ij}) = a\} = \{~m+ap_{ij} \mid m \in E(i - 1, k_{i - 1}, c_1 - v_1(a p_{ij}), \dots, c_{i - 1} - v_{i -
      1}(a p_{ij}))\}$.
  \end{enumerate}
\end{prop}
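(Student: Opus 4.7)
The overall plan is to first establish a structural reduction making the sets $E(i,j,c_1,\dots,c_i)$ concrete, then dispose of the preliminary claim and cases 1, 2, 4 by direct inspection, and finally handle cases 3 and 5 via the bijection $m \leftrightarrow m - a\,p_{ij}$.

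\textbf{Step 1 (key reduction).} I would first prove that any $m \in E(i,j,c_1,\dots,c_i)$ satisfies $m|_{P_k} = 0$ for every $k > i$. By Proposition \ref{prop:witnessmatrix}, the witness matrix $B$ is unit lower triangular; in particular $v_n(p) = \mathbb{1}_{p \in P_n}$, so the invariant $v_n m = 0$ together with $m \ge 0$ forces $m|_{P_n} = 0$. Propagating downward via the recursion $\widetilde m(p) = m(p) - \sum_{r \in P_{k+1}} \widetilde m(r)\,pot(r,p)$ for $p \in P_k$, and using $\widetilde m(P_k) = v_k m = 0$ for $k > i$, the same conclusion extends to every level $k > i$. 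Consequently $v_i m$ collapses to $m(P_i)$ and the invariant $v_i m = c_i$ becomes the concrete balance $\sum_{\nu \le j} m(p_{i\nu}) = c_i$. The preliminary claim $c_i < 0 \Rightarrow E = \emptyset$ is immediate since $m(P_i) \ge 0$ forces $c_i \ge 0$.

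\textbf{Step 2 (easy cases).} Cases 1, 2, 4 follow directly from this balance: $a > c_i$ contradicts $m(P_i) \ge m(p_{ij}) = a$ (case 1); if $j=1$ and $a < c_i$ then $m(P_i) = m(p_{i1}) = a < c_i$ (case 2); and if $i = 1$ with $a = c_1$, the balance forces $m = c_1\,p_{1j}$ (case 4).

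\textbf{Step 3 (bijections for cases 3 and 5).} In both cases I would set $m' := m - a\,p_{ij}$. Linearity yields $v_k m' = c_k - a\,v_k(p_{ij}) = c_k - v_k(a\,p_{ij})$, which matches the shifted parameters on the right-hand side. The support conditions transfer directly: in case 3 the zero block extends from $\nu > j$ to $\nu > j-1$; in case 5, $m'(P_i) = c_i - a = 0$ together with Step 1 applied to $E(i-1, k_{i-1}, \dots)$ places $m'$ entirely in levels $\le i-1$. For $\nu < i$ the condition $\varphi_\nu(m) \le m(P_{\nu-1})$ transfers unchanged since $m$ and $m'$ agree on $P_\nu$ and $P_{\nu-1}$. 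The hardest step is the $\varphi_i$-condition: in case 3 I would exploit the ordering $pot(p_{i1}) \le \cdots \le pot(p_{ik_i})$ to argue that the leftover mass $c_i - a > 0$ sits in places of potential at most $pot(p_{ij})$, so $\varphi_i(m') = \varphi_i(m)$ and the constraint transfers; in case 5 the only marked level-$i$ place in $m$ is $p_{ij}$, giving $\varphi_i(m) = pot(p_{ij})$, and the \pitrois{} hypothesis that interface places carry the maximal level-$i$ potential is what enables reconciling this constraint with the $E(i-1, k_{i-1}, \dots)$ side. I expect this last point to be the main obstacle.
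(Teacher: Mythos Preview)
Your approach is essentially the same as the paper's, with one useful addition and one divergence in case~5.

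\smallskip

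Your Step~1 (showing $m|_{P_k}=0$ for $k>i$ via the unit-triangular structure of $B$ and downward propagation) fills in a detail the paper takes for granted: the paper simply asserts $m(P_i)=c_i$ without justification. Your Step~2 and your treatment of case~3 match the paper's proof closely; in particular, the paper's argument for $\varphi_i(m')=\varphi_i(m)$ in case~3 is exactly your observation that both markings mark some $p_{i\nu}$ with $\nu<j$, and the potential ordering then pins down the minimum identically.

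\smallskip

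Where you diverge is case~5. The paper does \emph{not} invoke the $\pitrois$ hypothesis here. For the reverse inclusion it instead compares $m'+a\,p_{ij}$ to a generic $m\in E$ with $m(p_{ij})=a$: both have $p_{ij}$ as the only marked level-$i$ place, so $\varphi_i(m'+a\,p_{ij})=pot(p_{ij})=\varphi_i(m)$; and both satisfy the shifted $(i-1)$-st invariant, so $(m'+a\,p_{ij})(P_{i-1})=m'(P_{i-1})=c_{i-1}-v_{i-1}(a\,p_{ij})=m(P_{i-1})$. The inequality $\varphi_i(m)\le m(P_{i-1})$, which holds because $m\in E$, then transfers. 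Your suggestion that the $\pitrois$ hypothesis is the crux here is not how the paper proceeds, and it is not clear how it would help: the constraint you need is $pot(p_{ij})\le m'(P_{i-1})$, which involves the potential of a level-$i$ place, whereas the $\pitrois$ hypothesis concerns interface places lying in $P_{i-1}$ and their level-$(i{-}1)$ potentials.
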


\begin{proof}
  Suppose that $E \neq \emptyset$. Let $m$ be an element of $E$ such that
  $m(p_{ij}) = a$. We have $m(P_i) = c_i$, so $a \le c_i$. Moreover, if
  $m(p_{ij}) < m(P_i)$ then $m$ must mark some place $p_{i\nu}$ with $\nu < j$,
  so $j \ge 2$. These prove the first and the second cases.

  \medskip

\noindent  The fourth case is trivial.

  \medskip

\noindent  Let us address the third case, we have to show that:
  \begin{align}
    &\forall m \in E \mbox{ s.t. } m(p_{ij}) = a, (m - a p_{ij}) \in  E(i, j - 1, c_1 - v_1(a p_{ij}), \dots, c_i -
	v_i(a p_{ij})) \label{eq:etoe'1}\\
    &\forall m' \in E(i, j - 1, c_1 - v_1(a p_{ij}), \dots, c_i -
	v_i(a p_{ij})),(m' + a p_{ij}) \in E \label{eq:e'toe1}
  \end{align}

\noindent
The values $c_1 - v_1(a p_{ij}), \dots, c_i - v_i(a p_{ij})$ are
  obtained by:
  \begin{align*}
    Vm &= {}^t(c_1, \dots, c_i, 0, \dots, 0) \\
    \iff V(m - a p_{ij}) &=  {}^t(c_1 - v_1(a p_{ij}), \dots, c_i - v_i(a p_{ij}),
    0, \dots, 0)\,.
  \end{align*}

\noindent   We have to show that 
  $\varphi_\nu(m - a p_{ij}) \le (m - a
  p_{ij})(P_{\nu - 1})~\forall 2 \le \nu \le i$ and $\varphi_\nu(m' + a p_{ij})
  \le (m' + a p_{ij})(P_{\nu - 1})~\forall 2 \le \nu \le i$.

\noindent  Since $m$ and $(m - a p_{ij})$ only differ at $p_{ij}$, 
  it suffices to show that
  $\varphi_i(m - a p_{ij}) \le (m - a p_{ij})(P_{i - 1})$. Indeed, $\varphi_i(m
  - a p_{ij}) = \varphi_i(m)$ because both markings mark some $p_{i\nu}$ with
  $\nu < j$, and $(m - a p_{ij})(P_{i - 1}) = m(P_{i - 1})$ because the two
  markings are identical on $P_{i - 1}$.

\noindent  Similarly, given $m' \in E(i, j - 1, c_1 - v_1(a p_{ij}), \dots, c_i -
	v_i(a p_{ij}))$, to prove \eref{eq:e'toe1}, it suffices to show
  that $\varphi_i(m' + a p_{ij}) \le (m' + a p_{ij})(P_{i - 1})$. 
  Indeed, 
  $(m' + a p_{ij})(P_{i - 1}) = m'(P_{i - 1}) \le
  \varphi_i(m') \le \varphi_i(m' + a p_{ij})$.

  \medskip

\noindent  The fifth case is similar. It suffices to show that
  $\varphi_{i - 1}(m - a p_{ij}) \le (m - a p_{ij})(P_{i - 2})$ and
  $\varphi_i(m' + a p_{ij}) \le (m' + a p_{ij})(P_{i - 1})$.
  The first inequality is immediate since $(m - a p_{ij})$ is the restriction
  of $m$ on $\bigcup_{1 \le \nu \le i - 1} P_\nu$.
  To prove the second one, note that $(m' + a
  p_{ij})(P_{i - 1}) = m'(P_{i - 1}) = c_{i - 1} - v_{i - 1}(a p_{ij}) =
  m(P_{i - 1})$ and $\varphi_i(m' + a p_{ij}) = \varphi_i(m)$.
\end{proof}

The proposition \ref{prop:partitionE} induces the following relations between the
sums $G(i, j, c_1, \dots, c_i)$.

\begin{coro}\label{cor:G}
  If $c_i < 0$ then $G(i, j, c_1, \dots, c_i) = 0$. If $c_i \ge 0$ then:
  \begin{small}
  \begin{itemize}[nolistsep]
    \item Case $2 \le i \le n$, $2 \le j \le k_i$:
      \begin{align*}
	G(i, j, c_1, \dots, c_i) = & \sum_{\nu = 0}^{c_i - 1}u_{p_{ij}}^\nu G(i, j - 1,
	c_1 - v_1(\nu p_{ij}), \dots, c_i - v_i(\nu p_{ij})) \\
	& + u_{p_{ij}}^{c_i} G(i - 1, k_{i - 1}, c_1 -
	v_1(c_i p_{ij}), \dots, c_{i - 1} - v_{i - 1}(c_i p_{ij}))\,.
      \end{align*}
    \item Case $2 \le i \le n$, $j = 1$:
      \begin{align*}
	G(i, 1, c_1, \dots, c_i) = u_{p_{i1}}^{c_i} G(i - 1, k_{i - 1}, c_1 -
	v_1(c_i p_{i1}), \dots, c_{i - 1} - v_{i - 1}(c_i p_{i1}))\,.
      \end{align*}
    \item Case $i = 1$, $j \ge 2$:
	$G(1, j, c_1) = \sum_{\nu = 0}^{c_1 - 1} u_{p_{1j}}^\nu G(1, j - 1, c_1 -
	\nu) + u_{p_{1j}}^{c_1}\,.$
    \item Case $i = 1$, $j = 1$:
	$G(1, 1, c_1) = u_{p_{11}}^{c_1}\,.$
  \end{itemize}
  \end{small}
\end{coro}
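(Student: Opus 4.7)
The plan is to derive the four recursions as a direct consequence of Proposition~\ref{prop:partitionE} combined with the multiplicative structure of the weight $m \mapsto \prod_{p \in P} u_p^{m(p)}$. The strategy is to partition $E(i,j,c_1,\dots,c_i)$ into fibers according to the value $a = m(p_{ij})$, and then use the bijections provided by the proposition to rewrite each fiber-sum as $u_{p_{ij}}^a$ times an already-named $G(\cdot)$.

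First I would dispose of the case $c_i<0$. For any $m\in E(i,j,c_1,\dots,c_i)$, a reverse induction on the levels above $i$ shows that $m$ vanishes on $P_{i+1}\cup\dots\cup P_n$: since $\widetilde m(p)=m(p)$ for $p\in P_n$, the equation $v_n m=0$ forces $m(P_n)=0$, and the relation $\widetilde m(p)=m(p)-\sum_{r\in P_{k+1}}\widetilde m(r)\,pot(r,p)$ propagates the vanishing from $P_n$ down to $P_{i+1}$. Consequently $v_i m = m(P_i) \ge 0$, so no such $m$ exists when $c_i<0$.

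Assuming now $c_i\ge 0$, partition $E = E(i,j,c_1,\dots,c_i)$ as $\bigsqcup_{a=0}^{c_i} E_a$ with $E_a = E\cap\{m:m(p_{ij})=a\}$, the fiber $E_a$ for $a>c_i$ being empty by item~1 of the proposition. For each non-empty fiber, the map $m\mapsto m-a\,p_{ij}$ is a bijection onto a set $E'_a$ described by items~3, 4 or 5, under which the weight multiplies by $u_{p_{ij}}^a$ and the third-index vector shifts to $c_k-v_k(a\,p_{ij})$. Hence
\[
\sum_{m\in E_a}\prod_{p\in P} u_p^{m(p)} \;=\; u_{p_{ij}}^a\sum_{m'\in E'_a}\prod_{p\in P} u_p^{m'(p)},
\]
and the right-hand sum is exactly one of the $G$-values appearing on the right of the claimed recursions.

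The four recursions now follow by case analysis. For $2\le i\le n$ and $2\le j\le k_i$, summing over $a=0,\dots,c_i-1$ gives the main summation (item~3 provides $E'_a = E(i,j-1,\dots)$) while $a=c_i$ contributes the last term (item~5 provides $E'_{c_i}=E(i-1,k_{i-1},\dots)$). For $j=1$, item~2 kills every fiber with $a<c_i$, leaving only the item~5 term. For $i=1$ and $j\ge 2$, the terminal fiber $E_{c_1}$ is the singleton $\{c_1\,p_{1j}\}$ by item~4 and contributes $u_{p_{1j}}^{c_1}$, the remaining fibers being handled as before. Finally $i=j=1$ is item~4 alone. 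There is no real obstacle at this step: all the combinatorial content has been packaged into Proposition~\ref{prop:partitionE}, and the remaining work is bookkeeping the shifts of the $c_k$'s and selecting the correct item of that proposition depending on whether $(i,j)$ sits at a boundary.
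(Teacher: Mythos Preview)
Your proposal is correct and follows exactly the approach the paper intends: the corollary is stated there as a direct consequence of Proposition~\ref{prop:partitionE}, with no further proof given, and your argument supplies precisely the bookkeeping (fiberwise partition over $a=m(p_{ij})$, multiplicativity of the weight under $m\mapsto m-a\,p_{ij}$, case analysis matching items~1--5) that the paper leaves implicit. Your treatment of the case $c_i<0$ via reverse induction on the upper levels is in fact more detailed than what appears in the paper, which simply asserts $m(P_i)=c_i$ in the proof of Proposition~\ref{prop:partitionE}.
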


\noindent{\bf Complexity.}
Since $i \le n$, $j \le K = \mbox{max}\{k_1, \dots, k_n\}$, the number of
evaluations is bounded by $n \times K \times \gamma$, where $\gamma$ upper
bounds the $c_i$'s.
Let $\alpha$ denote the global maximal potential. From
\eref{eq:effectivemarking}, we obtain $\gamma = \mathcal O(m_0(P) K^n
\alpha^n)$. So the complexity of a dynamic programming algorithm using Cor.
\ref{cor:G} is $\mathcal O(m_0(P) n K^{n + 1} \alpha^n)$, i.e.
pseudo-polynomial for a fixed number of state machines.


\section{Perspectives}
\label{section:conclusion}


This work has several perspectives. First, we are interested 
in extending and applying our rules for a modular modelling of complex product-form
Petri nets. We also want to obtain characterisation of product-form
Petri nets when stochastic Petri nets are equipped with infinite-server policy.
Then we want to validate the formalism of  $\pitrois$-nets showing
that it allows to express standard patterns of distributed systems.
We plan to implement analysis of  $\pitrois$-nets and
integrate it into a tool for stochastic Petri nets like GreatSPN~\cite{GreatSPN}.
Finally we conjecture that reachability is {\sf EXPSPACE}-complete for $\pideux$-nets
and we want to establish it.

\closegraphsfile

\bigskip

\noindent
{\bf Acknowledgements.} We would like to thank the anonymous referees
whose numerous and pertinent suggestions have been helpful in
preparing the final version of the paper.

\end{document}